%% file: which-effect-of-race-arxiv.tex
\documentclass[12pt]{article}
\usepackage[T1]{fontenc}
\usepackage[utf8]{inputenc}
\usepackage{lmodern}
\usepackage{setspace}
\usepackage{graphicx}
\usepackage{enumerate}
\usepackage{enumitem}
\usepackage[natbib,backend=biber,sorting=nyt,maxcitenames=2,mincitenames=1,maxbibnames=4,uniquelist=false, doi=false, style = authoryear-comp]{biblatex}
\usepackage{url} %
\usepackage{graphicx}
\usepackage{float}
\usepackage{booktabs}
\usepackage{tikz}
\usepackage[table]{xcolor}
\usepackage{colortbl}
\usepackage{caption}
\usepackage[normalem]{ulem}
\usepackage{diagbox}

\usepackage{amsfonts,amsthm,amsmath,amssymb} 
\usepackage{mathtools}
\usepackage{bbm} %
\usepackage{bm} %
\usepackage{csquotes}
\usepackage{multicol}
\usepackage{hyperref}
\usepackage[noabbrev,capitalise]{cleveref} 

\newtheoremstyle{newstyle}
{10pt} %
{10pt} %
{\itshape\onehalfspacing} %
{} %
{\bfseries} %
{.} %
{ } %
{} %

\theoremstyle{newstyle}

\newtheorem{prop}{Proposition}
\newtheorem{defin}{Definition}
\newtheorem{lem}{Lemma}
\newtheorem{cor}{Corollary}

\newtheorem{assm}{Assumption}

\DeclareMathOperator{\E}{\rm{E}}
\DeclareMathOperator{\R}{\mathbbm{R}}

\DeclareMathOperator{\Var}{\rm{Var}}
\DeclareMathOperator{\Cov}{\rm{Cov}}

\DeclareMathOperator{\diagop}{diag}
\newcommand{\diag}[1]{\diagop\left(#1\right)}

\DeclarePairedDelimiter\abs{\lvert}{\rvert}
\newcommand\given[1][]{\:#1\vert\:}

\renewcommand{\arraystretch}{.6} %

\newcommand{\blind}{0}

\ifnum\blind=1
  \hypersetup{pdfauthor={}, pdftitle={Which Effect of Race? Causal Inference without Holding All Else Equal}, pdflang={en-US}}
\else
  \hypersetup{pdfauthor={Thomas Leavitt}, pdftitle={Which Effect of Race? Causal Inference without Holding All Else Equal}, pdflang={en-US}}
\fi

\usepackage{titlesec}
\titlespacing*{\section}{0pt}{0\baselineskip}{0\baselineskip}

\input{estimands_macros}
\input{nrec_macros}
\input{language_macros}
\input{perceived_race_macros}

\newtheoremstyle{exampstyle}
{10pt} %
{10pt} %
{\onehalfspacing} %
{} %
{\bfseries} %
{.} %
{ } %
{} %
\theoremstyle{exampstyle}
\newtheorem{example}{Example}
\begin{document}

\addtocontents{toc}{\protect\setcounter{tocdepth}{-10}}

\begin{refsection}
\begin{titlepage}
\title{Which Effect of Race? Causal Inference without Holding All Else Equal}
\ifnum\blind=1
  \author{}
\else
  \author{Thomas Leavitt}
\fi
\date{}
\maketitle
\abstract{\noindent \input{abstract-text}}
\end{titlepage}

\doublespacing

\section{Introduction}

Whether people face discrimination on the basis of race --- as well as on other bases such as religion and ethnolinguistic identity --- is among the oldest and most contested questions in political life. Consider the long-running debate over discrimination against Muslims in France: Would a French Muslim receive a colder response than a French Christian to a job application or a request to a public office? As it figures in public debate, the comparison sets a Muslim of Maghrebi origin --- along with associated traits such as Arabic language and others that characterize a ``real'' Muslim in popular French perception \citep{diop1988,adidaetal2010} --- against a Christian of European origin who typically bears none of these traits.

The standard approach is to pull these features apart, treating geographic origin and its associated traits as confounders of what \citet[][p.~22386]{adidaetal2010} call the ``Muslim effect.'' Their correspondence study in the French labor market holds the applicant's Senegalese origin fixed while varying the religion the r\'{e}sum\'{e} signals, and finds that fewer employers would respond positively to the ostensibly Muslim applicant. Yet isolating religion in this way excludes the groups at the center of French public debate --- Maghrebi-origin Muslims and European-origin Christians --- putting in their place a Senegalese Muslim and a Senegalese Christian, each atypical of its category as it figures in that debate. Which effect of religion, then, is the one to study: the effect of religious affiliation separable from origin, language, and other traits, or the effect of the category ``Muslim'' with the traits the category indexes?

The same all-else-equal move --- erasing rather than preserving differences in attributes across categories --- underwrites much of what we know about the effects of race and ethnicity. As \citet[][p.~512]{senwasow2016} observe, ``\textit{all} studies employing exposure to a racial or ethnic signal share a common experimental design,'' one that ``present[s] a subject with information that differs only with respect to signals or cues about race or ethnicity.'' This design ranges from the canonical audit and correspondence studies of hiring \citep{bertrandmullainathan2004,pager2003,quillianetal2017} to the conjoint experiments now common in political science \citep{hainmuellerhopkins2015}. Because the literature defends the all-else-equal design as what credible inference requires \citep[see, e.g.,][]{heckman1998,butlercrabtree2021}, the choice of estimand embedded in the design is rarely recognized as a choice at all.

This defense of the all-else-equal design informs a practice that genuinely removes confounding but, in the same act, silently fixes \textit{which} estimand a study recovers (\Cref{sec: differential exposure}). The literature's recovery conditions thus bundle two claims this paper separates: an estimand commitment, which concerns the contrast a study should target, and a recovery condition, which concerns whether a design recovers that contrast. Once the two claims are separated, the all-else-equal contrast defines only one member of a broader family of counterfactual race effects --- including members that preserve the race-conditional structure of nonracial attributes. A single randomization recovers every member of that family of estimands. The choice among that family's members is then substantive rather than statistical: Differences in nonracial attributes ``may not be \textit{threats to inference}, but, rather, an important part of the \textit{story of racial discrimination}'' \citep[][p.~2, emphasis added]{crabtreeetal2023}. What has been lacking is a framework that makes these claims precise.

This paper supplies one, consisting of three parts. First, I derive a family of precisely defined race estimands. Second, I develop substantive grounds for choosing among the estimands. Third, I provide conditions, weaker than the literature assumes, under which any member of the estimand family can be recovered with unbiased or consistent estimation and valid inference.

The framework applies in either of the two dominant regimes for studying discrimination, \textit{exposure-based} and \textit{perception-based} \citep[][pp.~259--260]{hukohler-hausmann2025}. In the exposure-based regime, race is a fixed feature of what the decision-maker encounters --- a person or a racial signal --- regardless of what the decision-maker perceives \citep[e.g.,][]{piper1992}. For example, an officer may encounter a Black or a White civilian whose race remains fixed even if the officer cannot perceive (or misperceives) the civilian's race, as under a veil of darkness \citep{groggerridgeway2006,piersonetal2020}. Likewise, an employer may encounter a r\'{e}sum\'{e} whose name the researcher conceives as a Black or a White signal, whether or not the signal registers with the employer. In the perception-based regime, race is instead the racial interpretation the decision-maker forms, so the regime requires a cue to induce the interpretation --- the r\'{e}sum\'{e}'s name, say --- and the racial variable is the cue's uptake, not the cue.

The family of race estimands is bounded by two extremes, with mixtures between them. At one extreme, the All-Else-Equal Effect (AEE) --- of which the average marginal component effect (AMCE) from conjoint experiments \citep{hainmuelleretal2014} is a special case --- forces the distribution of nonracial attributes to be identical across racial conditions. At the other, the All-Else-Within-Race Effect (AEWR) lets that distribution differ across racial conditions in ways that reflect the social structure under study. The mixtures hold some attributes fixed and let others vary by race. Every member of the estimand family is built from the same unit-level race effect --- a counterfactual contrast for a fixed decision-maker, not a difference across people (\Cref{sec: formal setup}) --- and the members differ only in which features are held fixed across racial conditions and how the remaining features are weighted, specifications the researcher sets under the exposure-based regime. No member of the estimand family is privileged independently of the substantive question.

Under the perception-based regime, the cue's uptake, rather than the researcher, determines which nonracial perceptions stay fixed and which move with perceived race: A cue that would shift a decision-maker's perceived race may shift nonracial perceptions with it, so the effect the data recover is whichever member of the estimand family those joint movements compose. Insofar as moving those perceptions together is partly what it means for a cue to be racialized, as \citet{hukohler-hausmann2025} argue, the member the joint movements compose is not a failed manipulation but the effect of interest --- the perception-based counterpart of the within-race effect. Although much research seeks cues that shift perceived race and nothing else \citep{dafoeetal2018,landgraveweller2022,elderhayes2023}, this paper shows that an isolated shift in perceived race need be neither a condition for credible inference nor a privileged target.

The framework as a whole enters a debate at the center of the empirical literature on discrimination: whether, and how, a counterfactual effect of race can be defined and recovered. Two prominent accounts, one explicitly and one by presupposition, proceed from a constructivist conception of race on which the category consists in nominal membership together with the nonracial attributes the category indexes. Both hold one further premise: A counterfactual effect of race must hold all else equal. Granting the premise, one response studies one racial cue at a time and sets aside the effect of the category as constituted \citep[e.g.,][]{senwasow2016}; the other retains the category and objects that an all-else-equal contrast, by holding the indexed attributes equal, cannot be an effect of the category as constituted \citep{hu2023,hukohler-hausmann2025}, an objection whose strongest form dispenses with the potential outcomes framework for detecting discrimination \citep{kohler-hausmann2019}. This paper's contribution to that debate is to show that the premise is a choice rather than a requirement of credible inference (\Cref{sec: debate}): Because membership in a racial category is fixed independently of the attributes the category indexes, contrasts that hold nothing else equal are well-defined effects of race. One can therefore keep the category as constituted without abandoning causal inference, and keep causal inference without reducing the category to a cue.

I illustrate the framework by reanalyzing a candidate-evaluation experiment \citep{zarateetal2024} under both regimes, asking whether Hispanic voters reward a coethnic candidate --- a form of discrimination that favors rather than penalizes. Read through the exposure-based regime, the all-else-equal effect is indistinguishable from zero, while the within-race effect is positive and distinguishable from zero. On one experiment, then, the choice of estimand is the difference between finding coethnic preference and finding none, a divergence that reflects the different questions the two estimands answer rather than any artifact of estimation. Read through the perception-based regime, which member of the estimand family the data recover is no longer the researcher's to choose; auxiliary name-rating data \citep{elderhayes2023} suggest the recovered effect is unlikely to be the all-else-equal member, though the recovered effect remains a well-defined estimand of discrimination either way.

In \Cref{sec: conclusion}, I close with three questions that emerge once the choice of estimand is recognized as a choice: (i) how normative theories of discrimination might name the contrast a study should target; (ii) how a study can report across the family of estimands or instead adopt and defend a single member; and (iii) what it means to represent a racial group when some individuals belong to both that group and another defined by, say, income, so that, unlike race and income on a r\'{e}sum\'{e}, which a study can pair freely, the groups' preferences cannot move independently. Proofs and additional formal material appear in the Supplementary Material.

\section{Must an Effect of Race Hold All Else Equal?} \label{sec: debate}

Following arguments that an individual's race cannot be manipulated \citep{holland1986,holland2001,holland2003,zuberi2001,dinardo2007}, studies of discrimination define the effect of race not on an individual conceived as someone who could be, say, either Black or White, but on a decision-maker who could encounter either an individual racialized as Black or a different individual racialized as White \citep[following][]{greinerrubin2011}. The effect is the difference in how the decision-maker would respond to each individual, whether the encounter is direct, through a r\'{e}sum\'{e}, or through another representation of the individual.

Defining the effect through a decision-maker's responses to two differently racialized individuals sidesteps the manipulation of race but still depends on a conception of race itself: who counts as Black or as White. Among the many conceptions of race and ethnicity, nearly all agree that the categories rest on membership classifications justified in terms of \textit{descent}, whether grounded in biological essence or social convention \citep{chandra2006,chandra2012,andreasen1998,andreasen2000,weber1978,fearon2003,hardimon2003,hardimon2017,spencer2014,appiahgutmann1998,mills1998,cornellhartmann1997,haslanger2000,taylor2003}.

On a classical conception of categories, membership exhausts what a category is \citep[][pp.~9--10]{monk2022}: A racial category consists only in nominal membership as the descent-based classifications assign it, so a comparison of a Black and a White individual identical in nonracial features can omit nothing the category is. By contrast, on what \citet{hu2023} calls a ``thick constructivist'' conception of race, a racial category consists in nominal membership together with the nonracial attributes that membership probabilistically indexes --- the category's social content \citep{haslanger2000,taylor2003}. Race, so conceived, is not reducible to nominal membership alone.

Insofar as a racial category indexes a distribution of nonracial attributes, some configurations are more typical of the category than others \citep{wittgenstein1953,rosch1973,rosch1975,roschmervis1975}. Typicality increases toward a prototype at the center of the category-conditional distribution --- a graded conception of race already present in empirical work \citep{maetal2015,eberhardtetal2006,maddox2004,ocamporoland2025}. To call a configuration typical of the category is to describe the social world --- the distribution of nonracial attributes within the category, or the expectations society attaches to the category's members --- not to judge how authentically Black or White anyone is \citep{appiahgutmann1998,gooding-williams1998,appiah2005}.

Two prominent recent accounts of race as a cause proceed from such a constructivist conception, one explicitly and one by presupposition. \citet{kohler-hausmann2019} embraces the conception explicitly, as do \citet[][p.~12]{hu2023}, for whom the marking of racial categories runs through ``phenotype and supposed ancestry,'' and \citet[][p.~250]{hukohler-hausmann2025}, for whom racial and nonracial features ``co-constitute the category of race.'' \citet{senwasow2016} presuppose the conception: Their designs exploit within-group variation, and a single ``stick'' serves as a proxy that ``conveys information about race'' \citep[][p.~509]{senwasow2016} --- formulations that require a racial category distinct from, and fixed independently of, any one cue. On either account, the descent that grounds membership --- actual or presumed --- can stay fixed when neighborhood or record moves, so who counts as Black or White can hold still while the attributes vary. In the image of race as a bundle of sticks \citep{senwasow2016}, the attributes are the sticks, and nominal membership is the string that ties them into a bundle (\Cref{sec: substantive motivations}).

Yet from this shared conception a division emerges over what the two racialized individuals must differ in, and what they may share, for the contrast between them to count as an effect of race. The division's two sides move in opposite directions, yet rest on a single premise: A causal effect in the potential outcomes framework must be an all-else-equal contrast. Grant the premise, and the category as constituted admits no effect defined in terms of potential outcomes: An effect of the category would let the indexed attributes move with membership, and that movement is precisely not holding all else equal. As \citet[][p.~1171]{kohler-hausmann2019} concludes, ``race cannot be conceptualized as an isolated treatment in the counterfactual causal model, and accordingly, racial discrimination cannot be defined as the treatment effect of race.''

One response --- the cue-effects response --- retains potential outcomes and studies one cue at a time, holding the remaining attributes fixed, while setting aside the effect of the category as constituted \citep{senwasow2016}. The other --- the incompatibility response --- retains the category as constituted and denies that a contrast holding the indexed attributes equal --- the all-else-equal contrast --- can be an effect of race \citep{kohler-hausmann2019,kohler-hausmanndembroff2022,hu2023,hukohler-hausmann2025}; \citet{dembroffetal2020} make the same denial for sex. The denial has resulted in two conclusions. One dispenses with the framework for detecting discrimination on the grounds that ``[t]he counterfactual causal model of discrimination is based on a flawed theory of what the category of race references'' \citep[][p.~1163]{kohler-hausmann2019}, a conclusion \citet{dembroffetal2020} also reach. The other holds that positive inquiry into the effects of race cannot be separated from normative inquiry, a claim the accounts advance with varying scope \citep{dembroffetal2020,kohler-hausmanndembroff2022,hukohler-hausmann2025,hu2026}.

I challenge the premise that the cue-effects response and the incompatibility response share. As the sections below show, a contrast in which the nonracial attributes move with racial membership across the two individuals is a well-defined effect in the potential outcomes framework. \citet[][p.~253]{hukohler-hausmann2025} come closest to relaxing the premise, expressing one such contrast within potential outcomes, though as the estimand the constructivist conception calls for rather than as one member of a family of estimands the researcher can choose among. Once the premise falls, the two responses are no longer exhaustive alternatives: A study can infer cue effects in the sense of the cue-effects response and effects of the category as constituted in the sense of the incompatibility response. Which effect the study targets becomes a substantive choice rather than a consequence of the premise.

\section{Formal Setup} \label{sec: formal setup}

I now formalize the common structure of studies of discrimination and, within it, the two regimes introduced above. I take the exposure-based regime as the baseline, building through it the shared apparatus --- the configuration space, the potential-outcome function on it, and the family of race effects; for that reason the exposure-based regime has no subsection of its own here. The perception-based regime, treated separately (\Cref{sec: perception-based}) because it recasts the configuration as itself a potential outcome of a cue, borrows that apparatus intact. Later sections, which treat the two regimes in parallel, give each regime its own heading. 

I develop the apparatus on a running example of prosecutorial charging --- a setting the causal inference literature itself studies \citep{gaebleretal2022} and the case its critics treat as paradigmatic \citep{hukohler-hausmann2025}. A prosecutor reads an arrest file and decides whether to file charges. The file records the arrestee's race --- Black or White --- or only cues of race, such as the arrestee's name, which may induce the prosecutor's perception of the arrestee as one or the other. Alongside the stated race or the cue, the file contains two nonracial features: the location of the arrest and the extent of the arrestee's prior record \citep[][pp.~256--257]{hukohler-hausmann2025}. Neither feature is neutral background: Black, relative to White, indexes a greater likelihood of residing in --- and so of being arrested in --- neighborhoods like Brownsville, Brooklyn, a low-income, largely Black neighborhood \citep[][p.~256]{hukohler-hausmann2025}, and a greater likelihood of an extensive prior record, if only because such neighborhoods are subject to greater policing.

The example thus confronts the researcher with a choice. Holding location fixed compares a Black and a White arrestee both arrested in, say, Brownsville --- a shared location that makes \enquote{the Black arrestee `expected'} and \enquote{the white arrestee `unexpected'} \citep[][p.~257]{hukohler-hausmann2025}. Yet an \enquote{unexpected} White arrestee is still White. Who counts as White holds fixed while an attribute the category indexes varies. Letting location vary instead compares the arrestees at locations more likely within each arrestee's category: say, a Black arrestee in Brownsville and a White arrestee on the Upper East Side, a high-income, largely White neighborhood. An analogous choice arises for the prior record.

\subsection{Decision-Makers, Decisions, and Configurations}

There is a set of decision-makers, prosecutors in the running example, and each faces one or more decisions; indexing the resulting decision-maker--decision pairs as \textit{units} $i \in \{1, \ldots, N\}$, a unit is one prosecutor reading one case file (\Cref{app: units sutva} gives the explicit indexing by decision-makers and decisions). The pair, rather than the prosecutor, is the unit because the same prosecutor need not respond alike across decisions (\Cref{sec: differential exposure}). Under the exposure-based regime, unit $i$ encounters a \textit{configuration} $\bm{t}_i \coloneqq (z, \bm{v})$: the case file, formally. The feature $z \in \{0, 1\}$ is a \textit{racially constituted attribute} --- a feature specified in terms of race, such as the race the file states or a name conceived as a racial signal. The remaining \textit{nonracial features} $\bm{v} \coloneqq (v_1, \ldots, v_L)$, such as the prior record and the neighborhood, take values in $\mathcal{V} \coloneqq \mathcal{V}_1 \times \cdots \times \mathcal{V}_L$, where each $\mathcal{V}_\ell$ is finite, and their joint distribution with $z$ may itself be part of how race manifests in the social world. Each configuration $\bm{t}_i$ thus lies in the common space $\mathcal{T} \coloneqq \{0, 1\} \times \mathcal{V}$.

The set $\mathcal{T}$ is a product space --- either racial value can be paired with any nonracial profile --- and this pairing lets every member of the estimand family be defined. What licenses the pairing is the distinction between nominal membership and the attributes that membership indexes: For the racial categories this paper treats, membership runs through descent, actual or presumed, a criterion outside the nonracial profile (\Cref{sec: debate}). A file that pairs one racial category with a profile typical of the other is therefore atypical, not ill-defined. Which nonracial features compose $\bm{v}$ is a substantive choice the researcher makes in defining the estimands, not a claim about measurement. The framework takes the set as given and does not derive it.

\subsection{Potential Outcomes and the Race Effect} \label{sec: PO}

In principle, a unit's potential outcome could depend on the configurations that other units encounter --- including the other files the same prosecutor reads. A stable unit treatment value assumption (SUTVA), stated formally in \Cref{app: units sutva}, rules out such interference and carryover: The potential outcome for unit $i$ depends only on unit $i$'s own configuration, defining a function $y_i: \mathcal{T} \to \R$, where $\R$ is the set of real numbers, with $y_i(\bm{t}_i) = y_i(z, \bm{v})$ for $\bm{t}_i = (z, \bm{v}) \in \mathcal{T}$. In the running example, the function $y_i$ is the charging response unit $i$'s prosecutor would give to each possible file.

A unit-level effect is a contrast between two configurations:
\begin{align*}
\tau_i(\bm{t}, \bm{t}^{\prime}) \coloneqq y_i(\bm{t}) - y_i(\bm{t}^{\prime}), \qquad \bm{t}, \bm{t}^{\prime} \in \mathcal{T}.
\end{align*}
For race effects, the two configurations differ in the value of $z$. Writing $\bm{t} = (1, \bm{v})$ and $\bm{t}^{\prime} = (0, \bm{v}^{\prime})$, the \textit{race effect} is
\begin{align} \label{eq: race effect}
\tau_i\left((1, \bm{v}), (0, \bm{v}^{\prime})\right) = y_i(1, \bm{v}) - y_i(0, \bm{v}^{\prime}), \qquad \bm{v}, \bm{v}^{\prime} \in \mathcal{V}.
\end{align}
Because $\bm{v}$ and $\bm{v}^{\prime}$ each range over $\mathcal{V}$, the quantity in \eqref{eq: race effect} defines a family of effects, one for each pair $(\bm{v}, \bm{v}^{\prime})$ of nonracial profiles. The all-else-equal contrast is the special case $\bm{v} = \bm{v}^{\prime}$; the opposite extreme leaves $\bm{v}$ and $\bm{v}^{\prime}$ free to vary across racial conditions. In the running example, setting $\bm{v} = \bm{v}^{\prime}$ compares the Black and the White arrestee at one shared prior record and neighborhood, while setting $\bm{v} \neq \bm{v}^{\prime}$ lets the two files differ not only in race but also in prior record and neighborhood.

\subsection{Perception-Based Regime} \label{sec: perception-based}

Return once more to the running example, and take the file in its other form, recording race only through the arrestee's name --- DeShawn or Connor. Suppose the file contains nothing else, so that neighborhood and prior record, like race, are left for the prosecutor to infer. The arrestee's own race is fixed however anyone reads the file, but what now drives the decision is the race the prosecutor perceives, which need not match the arrestee's own, and the same cue may move nonracial perceptions too; for some files the name would move perception not at all. The perception-based regime formalizes this reading, treating the perceived race and features as potential outcomes of the name and confining the race contrast to the files whose perceived race the name would actually shift.

Formally, the manipulation is a \textit{cue} $w_i \in \{0, 1\}$ --- a name \citep{bertrandmullainathan2004}, photo \citep{terkildsen1993}, accent \citep{purnelletal1999}, or other stimulus --- not the configuration itself, with $w_i = 1$ intended to induce $z = 1$ and $w_i = 0$ intended to induce $z = 0$. The configuration is the perception the cue actually would induce, not necessarily the perception the cue is intended to induce. The racial feature, the nonracial features, and the response are therefore all potential outcomes of the cue.

As in the exposure-based regime, a cue-level SUTVA --- stated formally in \Cref{app: units sutva} --- rules out interference, so unit $i$'s response, racial perception, and nonracial perceptions depend only on unit $i$'s own cue. Write $z_i(w)$ and $\bm{v}_i(w)$ for the racial and nonracial components of the perception that cue value $w$ would induce for unit $i$, and define the cue-induced configuration
\begin{align*}
\bm{t}_i(w) \coloneqq \left(z_i(w), \bm{v}_i(w)\right) \in \mathcal{T}.
\end{align*}
The component functions $z_i$ and $\bm{v}_i$ take values in the same spaces as in the exposure-based regime, but here the components are unit-level potential outcomes of the cue. As under the exposure-based regime, the researcher specifies which perceptions compose $\bm{v}$; the cue's uptake, not the researcher, determines the values $z_i(w)$ and $\bm{v}_i(w)$. The configuration realized for unit $i$ is $\bm{t}_i(w_i)$.

The potential response to a cue may in principle depend on the cue through channels other than the cue-induced perception --- through, say, an aesthetic or affective reaction to the cue that operates whatever race and nonracial features the decision-maker would read from the cue. To rule out such direct channels --- channels that would make part of the cue effect an effect of the stimulus itself, a part no effect of perceived race in \eqref{eq: race effect} could represent --- I assume the potential outcome depends on the cue only through that unit's perceived configuration, an assumption that holds only if every channel through which the cue moves the outcome runs through the perceived race or the nonracial perceptions $\bm{v}$.
\begin{assm}[Perception sufficiency] \label{assm: PS}
For all units $i \in \{1, \ldots, N\}$ and all $w, w^{\prime} \in \{0, 1\}$,
\begin{align*}
\bm{t}_i(w) = \bm{t}_i(w^{\prime}) \implies y_i(w) = y_i(w^{\prime}).
\end{align*}
\end{assm}
Together, the cue-level SUTVA and perception sufficiency (\Cref{assm: PS}) do for the perception-based regime what SUTVA alone did for the exposure-based one (\Cref{sec: PO}): The two assumptions reduce the potential responses to the cue to the same function $y_i$, evaluated at the cue-induced configuration,
\begin{align*}
y_i(w) = y_i\left(\bm{t}_i(w)\right) = y_i\left(z_i(w), \bm{v}_i(w)\right), \qquad w \in \{0, 1\}.
\end{align*}
The unit-level cue effect is thus the contrast between the two configurations induced by the alternative cue values:
\begin{align} \label{eq: cue effect}
\tau_i\left(\bm{t}_i(1), \bm{t}_i(0)\right) = y_i\left(z_i(1), \bm{v}_i(1)\right) - y_i\left(z_i(0), \bm{v}_i(0)\right).
\end{align}

Under the cue-level SUTVA and \Cref{assm: PS}, whether the cue effect in \eqref{eq: cue effect} corresponds to a race effect in \eqref{eq: race effect} depends on how the cue changes perceived race. Following \citet{frangakisrubin2002}, classify units into principal strata by $(z_i(0), z_i(1))$: \textit{compliers} $(0,1)$, \textit{always-takers} $(1,1)$, \textit{never-takers} $(0,0)$, and \textit{defiers} $(1,0)$. For always-takers and never-takers, the cue would induce no race contrast because $z$ would not change across cue values. For defiers, the cue would move $z$ opposite the intended direction, so the cue effect equals the negative of a race effect in \eqref{eq: race effect}, evaluated at the perceptions the cue would induce. The intended directions rest on shared conventions of racial signification --- the coding that makes DeShawn read as Black and Connor as White --- so a defier would read the coding in reverse, against the conventions the cue's design presupposes. For compliers, by contrast, the cue would induce the race contrast in the intended direction.

Let $\mathcal{C} \coloneqq \{i : z_i(0) = 0, z_i(1) = 1\}$ denote the set of compliers. On $\mathcal{C}$, the cue effect equals the race effect at the configurations $\bm{t}_i(0) = (0, \bm{v}_i(0))$ and $\bm{t}_i(1) = (1, \bm{v}_i(1))$ induced by the cue:
\begin{align} \label{eq: complier race effect}
\tau_i\left(\bm{t}_i(1), \bm{t}_i(0)\right)
= y_i\left(1, \bm{v}_i(1)\right) - y_i\left(0, \bm{v}_i(0)\right), \qquad i \in \mathcal{C},
\end{align}
where $\bm{v}_i(0), \bm{v}_i(1) \in \mathcal{V}$ are the nonracial perceptions that would be induced under each cue value. Unlike in \eqref{eq: race effect}, where the researcher can in principle set the values of $\bm{v}, \bm{v}^{\prime} \in \mathcal{V}$, the perception-based regime allows the researcher to set only the value of the cue: Perceived race moves from $0$ to $1$ by the definition of $\mathcal{C}$, while the nonracial perceptions $\bm{v}_i(0)$ and $\bm{v}_i(1)$ are governed by the decision-maker's potential responses to that cue.

\section{Anatomy of an Estimand} \label{sec: anatomy estimand}

An \textit{estimand} collapses the family of unit-level race effects in \eqref{eq: race effect} into a scalar through three components: A \textit{contrast} holds certain nonracial features fixed across racial conditions and lets the others vary by race; a \textit{weighting} on $\mathcal{V}$, either common across racial conditions or race-specific, weights the nonracial profiles; and a \textit{subset} restricts which units enter the average. Each such estimand is therefore a linear functional of the potential outcomes --- a difference-scale weighted average of unit-level race effects.

The weighting is either a distribution $\rho$ on $\mathcal{V}$ common across racial conditions --- which weights nonracial features without regard to race --- or race-specific distributions $q_z$ on $\mathcal{V}$ --- which weight nonracial features conditional on each racial condition $z \in \{0,1\}$. Both can be derived from a single joint PMF on $\mathcal{T}$, $q(z, \bm{v}) \coloneqq \Pr\left((Z, \bm{V}) = (z, \bm{v})\right)$, via the marginal $\rho(\bm{v}) \coloneqq \sum_z q(z, \bm{v})$ and, where the denominator is positive, the conditional $q_z(\bm{v}) \coloneqq q(z, \bm{v}) / \sum_{\bm{v}^{\prime} \in \mathcal{V}} q(z, \bm{v}^{\prime})$.
These distributions are part of the estimand's definition --- quantities the researcher specifies, not a model of how the data were generated or an empirical distribution measured in the study. Because the researcher places these weights, a distribution may concentrate on any subset of $\mathcal{V}$, and the researcher can put little or no weight on a profile of no substantive interest. I take the weighting to be common across units (not to be confused with common across racial conditions); unit-specific distributions would require only added notation.

\subsection{The Estimand: Contrast, Weighting, and Subset} \label{sec: choosing contrast}

\subsubsection{Exposure-Based Regime}

Specifying an estimand begins with the contrast. Contrasts can be ordered by how much of $\bm{v}$ they hold fixed across racial conditions, running between two extremes. The \textit{all-else-equal} contrast holds every feature fixed and the \textit{within-race} contrast holds none; \textit{mixed} contrasts, holding some features fixed while leaving others free to vary by race, lie in between. Across all three contrasts, I average over all $N$ units, though the framework allows narrower subsets.

\paragraph{All-else-equal (AEE) effect.} The all-else-equal contrast sets $\bm{v} = \bm{v}^{\prime}$, so the unit-level race effect at each $\bm{v} \in \mathcal{V}$ is $y_i(1, \bm{v}) - y_i(0, \bm{v})$. Because both conditions sit at that shared profile, a single distribution $\rho$, common across racial conditions, supplies the weights. Weighting these profile-specific effects by $\rho$ and averaging over all $N$ units yields the \textit{All-Else-Equal Effect} (AEE),
\begin{align} \label{eq: AEE}
\textrm{AEE} \coloneqq \dfrac{1}{N} \sum_{i = 1}^N \sum_{\bm{v} \in \mathcal{V}} \left[y_i(1, \bm{v}) - y_i(0, \bm{v})\right] \rho(\bm{v}),
\end{align}
which, with the weighting common across units, coincides with the average marginal component effect (AMCE) of conjoint experiments \citep{hainmuelleretal2014}.

\paragraph{All-else-within-race (AEWR) effect.} The within-race contrast leaves $\bm{v}$ and $\bm{v}^{\prime}$ free to differ, so each side of the racial contrast is weighted by its own race-specific conditional $q_z$ --- the distribution of $\bm{v}$ within racial condition $z$ --- and averaging over all $N$ units yields the \textit{All-Else-Within-Race Effect} (AEWR),
\begin{align} \label{eq: AEWR}
\textrm{AEWR} \coloneqq \dfrac{1}{N} \sum \limits_{i = 1}^N \left[\sum_{\bm{v} \in \mathcal{V}} y_i(1, \bm{v})\, q_1(\bm{v}) - \sum_{\bm{v}^{\prime} \in \mathcal{V}} y_i(0, \bm{v}^{\prime})\, q_0(\bm{v}^{\prime})\right].
\end{align}

The AEE and the AEWR reverse the order in which they difference and average: The AEE differences the racial conditions at a shared profile and then averages the within-profile differences over the common $\rho$, while the AEWR averages each racial condition over its own $q_z$ and then differences the race-specific averages, with no shared profile anchoring the comparison. %

The two estimands differ except in special cases: The difference between the AEWR and the AEE vanishes when $q_1 = q_0$, so that each racial condition's own distribution is already the common one, or when the average potential outcome $\bar{y}(z, \bm{v}) \coloneqq (1/N) \sum_{i = 1}^N y_i(z, \bm{v})$ is constant in $\bm{v}$ for each racial condition $z$, so that every distribution returns the same race-specific average. \Cref{prop: gap} in the Supplementary Material records the gap between any two members of the estimand family, mixed effects included, and shows that the same logic determines when the members differ.

\paragraph{Mixed effects.} The choice of contrast need not be all-or-nothing. Consider racial discrimination in officer use of force \citep{fryer2019,knoxetal2020}: Officers are the decision-makers, civilians bear the configurations, and the researcher may want to use a legal standard to sort the features of $\bm{v}$. Conduct such as openly carrying a weapon bears on that standard and may be held fixed, while dress and neighborhood, no legally justified basis for force, may be left free to vary by race.

A mixed contrast divides the nonracial features into a fixed set and a free set. The fixed features are held equal across racial conditions, while the free features vary by race and are averaged over within each racial condition. The weighting follows the partition: A common distribution weights the fixed features, as $\rho$ weights all of $\bm{v}$ for the AEE, and race-specific distributions weight the free features given the fixed profile, as the $q_z$ do for the AEWR. In the use-of-force example, the two racial conditions share one weighting over whether the civilian carries a weapon, while dress and neighborhood follow each racial group's own distribution given that conduct. A mixed effect is thus all-else-equal on the fixed features and within-race on the free ones; \Cref{sec: mixed contrast formalism} of the Supplementary Material gives the formal partition and the resulting partially marginalized average effect.

\subsubsection{Perception-Based Regime}

Perceived configurations lie in the same space $\mathcal{T}$ as in the exposure-based regime (\Cref{sec: perception-based}), so the cue effect belongs to the same family of unit-level race effects. What the perception-based regime withholds is the choice among that family's members: Under the cue-level SUTVA (\Cref{app: units sutva}), each cue value induces a single perceived configuration $(z_i(w),\bm{v}_i(w))$ for each unit. For compliers, the cue fixes the nonracial perception at whatever value of $\bm{v}_i(w)$ the cue would induce. This degenerate distribution is common across racial conditions when $\bm{v}_i(0) = \bm{v}_i(1)$ and race-specific on at least some nonracial features otherwise.

The cue effect in \eqref{eq: cue effect} corresponds to the race effect in \eqref{eq: race effect}, now read as the effect of perceived race, only for the compliers $\mathcal{C}$ (\Cref{sec: perception-based}). For compliers, and for compliers alone, the cue would move perceived race from $0$ to $1$ --- the contrast in \eqref{eq: race effect}. The subset therefore narrows from all $N$ units to those in $\mathcal{C}$, and under perception sufficiency (\Cref{assm: PS}) the average cue effect on $\mathcal{C}$ is what I call the \textit{natural race effect among compliers} (NREC),
\begin{align} \label{eq: complier effect}
\textrm{NREC} \coloneqq \dfrac{1}{\abs{\mathcal{C}}} \sum_{i \in \mathcal{C}} \left[y_i\left(1, \bm{v}_i(1)\right) - y_i\left(0, \bm{v}_i(0)\right)\right],
\end{align}
where $\abs{\mathcal{C}}$ is the number of compliers.

The NREC averages the mix of contrast types the cue induces among compliers. If $\bm{v}_i(0) = \bm{v}_i(1)$ for a complier, that unit contributes an all-else-equal contrast; if the cue changes every nonracial feature, a within-race contrast; and if the cue changes some nonracial features but not others, a mixed contrast. Only if $\bm{v}_i(0) = \bm{v}_i(1)$ for every $i \in \mathcal{C}$ does the NREC reduce to the all-else-equal estimand among compliers, the standard target in the perception-based literature \citep{dafoeetal2018,landgraveweller2022,elderhayes2023}. As the introduction argues, the co-movement does not disqualify the NREC as an effect of race: Each induced contrast is a unit-level race effect in \eqref{eq: race effect}, and insofar as shifting nonracial perceptions together with perceived race is partly what it means for a cue to be racialized \citep{hukohler-hausmann2025}, the NREC is not a failed manipulation but the effect of interest.

The label \textit{natural} follows mediation usage \citep{pearl2001,robinsgreenland1992}, under which a mediator takes the value the treatment would induce. The simultaneous ordering matches how \citet[][p.~251]{hukohler-hausmann2025} describe reading a cue racially: activating the category together with the nonracial attributes the category indexes. %

\begin{figure}[H]
\centering
\includegraphics[width=\linewidth]{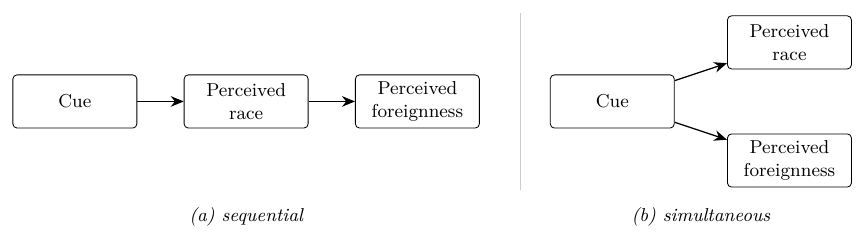}
\caption{Sequential versus simultaneous mediation, illustrated with perceived foreignness: In panel \textit{(a)}, the cue would move the nonracial perception only through perceived race; in panel \textit{(b)}, alongside perceived race.}
\label{fig: mediation ordering}
\end{figure}

Note that, although the simultaneous ordering is the one \citet{hukohler-hausmann2025} describe, the sequential ordering is no less compatible with the thick constructivist conception: Inference from perceived race to nonracial attribute can be one way the category's indexing operates \citep{vanderweele2015}. The orderings differ instead in where the movement sits (\Cref{fig: mediation ordering}). Downstream of perceived race, the inferred attribute moves as part of the effect of race itself, so nothing co-moves that an all-else-equal contrast must hold fixed.

\section{All-Else-Equal as a Substantive Commitment} \label{sec: substantive motivations}

Restricting the estimand family of \Cref{sec: anatomy estimand} to the all-else-equal contrast involves two tradeoffs. The first is conceptual: On some conceptions of race, holding nonracial attributes fixed across racial categories may strip away part of the category's social content. The second is structural: Insofar as the distribution of nonracial attributes differs by racial category, no single weighting over profiles can simultaneously represent the distribution of those attributes within each racial group. Neither tradeoff condemns the all-else-equal contrast; together they make adopting it a substantive decision rather than a default.

\subsection{The Conceptual Tradeoff}

On the classical conception (\Cref{sec: debate}), nominal membership exhausts the category, so a contrast that moves membership while holding the indexed attributes fixed compares the categories whole. On the thick constructivist conception, the same contrast compares the categories stripped of their social content. With the indexed attributes held fixed, what varies across the two individuals is nominal membership alone --- the string without the bundle. The contrast can therefore refer to the effect of membership as such \citep{hu2023}, but not to the effect of the category as constituted \citep{hukohler-hausmann2025}. On either conception, the contrast is well-defined; the conceptions differ only in what the contrast gives up.

The conception sets what the contrast costs, not whether to adopt it. Even on the thick constructivist conception, a study may hold the indexed attributes fixed for other reasons: because the study seeks the mechanisms through which race exerts an effect, or because differences in some attributes normatively justify differential treatment. The study then pays the cost deliberately rather than by default (\Cref{sec: contrast substantive}).

\subsection{The Structural Tradeoff}

Even if the goal is only to represent each racial group well, the all-else-equal contrast requires one weighting over nonracial profiles applied to both racial categories. When the two groups' distributions of nonracial profiles overlap only slightly, as the arrestees' neighborhoods and prior records do in the running example, the common weighting faces a tradeoff. Confined to the small region of overlap, the weighting captures only a sliver of each group rather than the profiles where most group members lie. Spread out to represent both groups more faithfully, the weighting must place mass on profiles improbable under one group's distribution or the other's. The race-specific weightings $q_z$ escape the tradeoff, evaluating each racial condition where its own mass lies, though the escape costs the common distribution and, with it, the all-else-equal contrast. \Cref{fig: structural tradeoff} depicts the two choices of common weighting, together with the race-specific weighting, in the attribute space of the running example; \Cref{prop: tradeoff} in \Cref{sec: tradeoff} formalizes the structural tradeoff.

\begin{figure}[H]
\centering
\includegraphics[width=\linewidth]{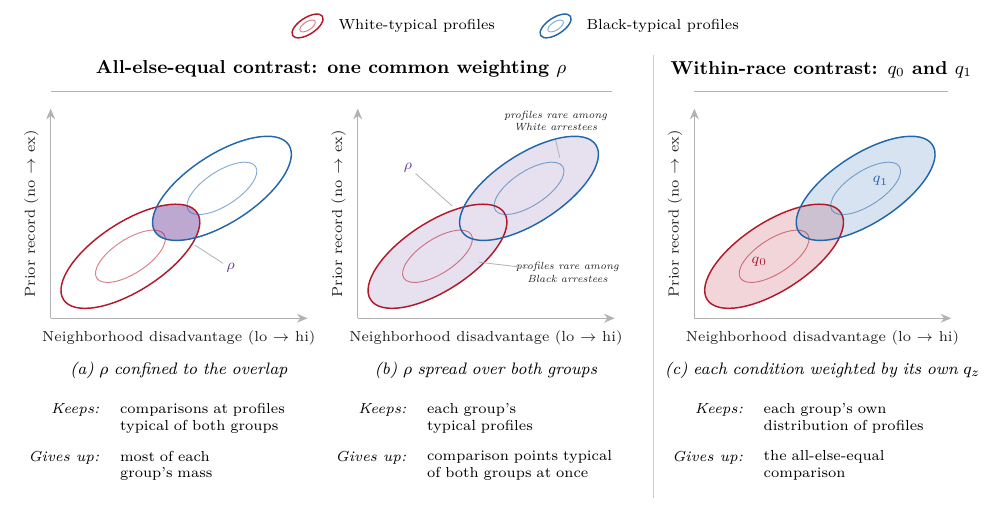}
\caption{The structural tradeoff in the attribute space of the charging example. Each closed curve is a contour of a group's distribution of nonracial profiles (outer curve: at least a $1-\varepsilon$ share of the group's mass, \Cref{prop: tradeoff}; inner curve: the denser core), and shading marks the support of the weighting. Panels \textit{(a)} and \textit{(b)} show the two choices of common weighting $\rho$; panel \textit{(c)}, the race-specific weightings $q_z$.}
\label{fig: structural tradeoff}
\end{figure}

\subsection{Grounds for Choosing the Contrast} \label{sec: contrast substantive}

A classical conception of the racial category points to the all-else-equal effect, and a thick constructivist conception points to the within-race effect. Yet even on that conception, whether race indexes a feature does not alone settle whether to hold the feature fixed. The further grounds for the contrast sort the features on criteria of their own. A study that embraces the thick constructivist conception may therefore reasonably hold fixed attributes that race indexes, yielding a mixed or all-else-equal contrast (\Cref{sec: choosing contrast}). The grounds that follow, for choosing first the contrast and then the weighting, are illustrative rather than exhaustive, and several may apply in one study; \Cref{tab: grounds} in the Supplementary Material collects the grounds, ordered by which component of the estimand each ground bears on.

\subsubsection{Analyzing Causal Mechanisms}

Under the sequential ordering of \Cref{fig: mediation ordering}, panel \textit{(a)}, decision-makers infer nonracial attributes from race, and a substantial literature asks whether such statistical, as opposed to taste-based, discrimination \citep{arrow1973,becker1957,phelps1972} is less morally objectionable \citep[e.g.,][]{lippert-rasmussen2011,hellman2008,alexander1992}. Apart from those normative debates, identifying the pathways through which racial effects operate is a positive question in its own right, an inquiry that may itself be motivated by a normative commitment to understanding how to reduce discrimination.

Holding an attribute fixed across the racial conditions can block a pathway through which race, or a signal of race, affects the outcome. An inferred attribute, however, is itself a potential outcome of race, so it cannot be held fixed directly; what a design controls is the information the configuration provides. Providing the attribute aims to preempt the inference the decision-maker could otherwise draw by fixing the corresponding belief at the provided value. If successful, the design achieves ``perfect'' --- as opposed to imperfect --- ``manipulation of the mediator'' \citep[][p.~359]{acharyaetal2018}. The racial contrast at that value then blocks the mediating path and isolates the remaining channels \citep{vanderweelerobinson2014} --- isolation that implies neither that the ``true'' effect of race is what remains once one blocks particular pathways nor that any one pathway is a privileged effect of race. For example, by stating a partisan signal alongside a racially distinctive name in constituency service requests to state legislators --- who may favor copartisans \citep{fenno1978} --- \citet{butlerbroockman2011} supply the party identification that legislators could otherwise infer from race \citep{mcdermott1998}. %

\subsubsection{Normative and Legal Commitments}

In much empirical work, normative commitments about when a decision-maker is justified in discriminating are often implicit \citep{kohler-hausmann2019,hukohler-hausmann2025,hu2026}. The example of racial discrimination in officer use of force (\Cref{sec: choosing contrast}) made one such commitment explicit by sorting the conduct that may justify force from the features that may not. The study of racial disparities in health care, an important area of inquiry in political science and related fields \citep{michener2018,michener2020,olveraetal2023,groganpark2017}, makes such commitments systematic, building them into the definition of disparity itself.

Drawing on a conception of certain disparities as ``avoidable, unfair, and unjust'' \citep[][p.~107]{whitehead1992}, the landmark report of the Institute of Medicine (IOM) defined disparity in health care as ``racial or ethnic differences in the quality of health care that are not due to access-related factors or clinical needs, preferences, and appropriateness of intervention'' \citep[][pp.~3--4]{iom2002}. On this standard, differences in care between, say, Black and White patients are justified only insofar as those differences reflect needs or preferences \citep{mcguireetal2006,cooketal2012,cooketal2009a,cooketal2009b}.

The IOM's needs-and-preferences criterion has generated work distinguishing ``allowable'' from ``non-allowable'' sources of racial differences in health care \citep{cooketal2012,duanetal2008,jackson2021}. \citet{mcguireetal2006} implement that distinction by holding health status fixed across racial conditions while leaving race-specific distributions of non-allowable features, such as socioeconomic status, unchanged. The resulting estimand marginalizes over the allowable feature using a distribution common across racial groups and over the non-allowable features using each group's own distribution. In the vocabulary of the mixed contrast (\Cref{sec: choosing contrast}), the normative distinction sorts allowable features into the held-fixed set and the non-allowable features into the free set.

\subsection{Grounds for Choosing the Weighting} \label{sec: weighting substantive}

The choice of weighting is specific to the exposure-based regime since, as \Cref{sec: choosing contrast} explains, in the perception-based regime each complier's configurations are fixed potential responses to the cue, so no $\rho$ or $q_z$ enters the estimand. Like the choice of contrast, how to specify the conditionals $q_z$ and the marginal $\rho$ --- set by the researcher, not estimated from the data (\Cref{sec: anatomy estimand}) --- is a substantive choice that turns on what the comparison is meant to represent. I give two motivations: external validity and typicality.

\subsubsection{External Validity}

A researcher who wants the estimand to speak to a specific target population can calibrate the joint distribution $q(z, \bm{v})$ to match that population's distribution of profiles. For the AMCE, \citet{delacuestaetal2022} calibrate the common marginal $\rho$ to external benchmarks; because the calibrated distribution is common across racial conditions, the reweighting presupposes the all-else-equal contrast. The family of race estimands in \Cref{sec: anatomy estimand} permits the same calibration on the race-conditional side, with each $q_z$ matched to the target population's within-race distribution. \Cref{app: distribution formalism} shows that calibrating each $q_z$ requires reference information about how attributes co-vary within race, which single-attribute marginals do not supply.

\subsubsection{Typicality}

\Cref{sec: debate} described a graded conception of typicality, on which racial categories are organized around prototypes and typicality may vary continuously with perceived similarity to the prototype. The conditional distribution $q_z$ can encode that gradation by assigning greater weight to profiles that are more typical of group $z$. \citet{maetal2015}, for instance, operationalize such typicality with racial-prototypicality scores derived from respondents' photograph ratings.

One way, among many, to translate graded typicality into the conditional distribution $q_z(\bm{v})$ builds on the construction of \citet{gardenfors2000,gardenfors2014}. In this construction, how typical a profile is of group $z$ --- for example, how typically Black or White the profile is --- corresponds to the profile's closeness to the group's prototype, with each nonracial attribute weighted by how diagnostic it is of group membership. Profiles near a Black prototype are therefore more probable under $q_1$, and profiles near a White prototype are more probable under $q_0$. In a name-based audit study, for example, the Black prototype might center on names like DeShawn or Jamal, with Connor or Greg at the periphery. The rate at which typicality declines with distance from the prototype may differ across groups, so DeShawn may register as more atypically White than Greg does atypically Black.

\section{Recovering the Estimand Family without Holding All Else Equal} \label{sec: inferential concerns}

\Cref{sec: substantive motivations} treated the all-else-equal contrast as a choice of estimand; whether data recover a chosen estimand is a separate question. The literature's answer to each recovery problem --- confounding under exposure, excludability under perception --- does double duty, securing recovery while silently restricting the estimand to the all-else-equal contrast. This section shows that, in each regime, a weaker condition secures recovery alone, so the choice of contrast and weighting remains substantive.

\subsection{Confounding} \label{sec: differential exposure}

A unit's race effect contrasts two potential outcomes --- the responses to a Black-marked and to a White-marked file --- and at most one can be observed; nor does the same prosecutor at a later decision supply the other. A docket that fills toward the end of a term, attention that flags over a day, or an intervening high-profile case may each change the prosecutor, so the response either file receives need not equal the response the same file would have received at the other decision. When chance alone determines which unit encounters which file, the units under different configurations differ only by chance, so systematic differences in their responses reflect the files. Studies of discrimination therefore remove confounding through the assignment of configurations to decision-makers. Experiments randomize that assignment directly \citep[e.g.,][]{bertrandmullainathan2004, senwasow2016}, and observational studies are judged by how closely their assignment processes approximate such a randomization \citep{cochran1965,rubin2008}.

By two conventions of practice, the assignment also fixes the estimand. First, the association between race and the nonracial attributes it indexes is itself called confounding, and a problem so named invites a remedy in the assignment of configurations. Second, the weights that define the estimand are read off the assignment rather than chosen in their own right.

The literature's guidance prescribes one form of assignment: The two racial conditions share one distribution over nonracial profiles --- so that whatever neighborhood and record a file contains, its chance of being marked Black is the same --- with identical probabilities for every unit. %
This one act of random assignment is chosen to satisfy two conditions in the name of removing confounding, and only one of the two matters for unconfounded inference. 

The two conditions do two jobs --- selecting the estimand and securing valid inference --- and because both are properties of one mathematical object, the researcher's single choice of assignment settles both at once. Write the assignment as the array of unit-level probabilities $p_i(z, \bm{v}) \coloneqq \Pr(\bm{T}_i = (z, \bm{v}))$ that unit $i$ encounters configuration $(z, \bm{v})$, and define the two conditions as two invariances of that array, in deliberately parallel form.
\begin{defin}[Unit-blind and profile-blind assignments] \label{def: blindnesses}
An assignment with unit-level probabilities $p_i(z, \bm{v})$ is \textit{unit-blind} if $p_i(z, \bm{v})$ does not vary with $i$: Every unit faces the same probabilities, so the assignment cannot track a unit's attributes or potential outcomes. The assignment is \textit{profile-blind} if $\Pr(Z = z \given \bm{V} = \bm{v})$ does not vary with $\bm{v}$: Race falls independently of the nonracial profile, so neither racial condition is tied to particular profiles.
\end{defin}
\noindent Unit-blindness is an invariance across units, holding when all rows of the array are identical, and profile-blindness is an invariance across profiles, holding when the share of probability on each racial condition is the same within every nonracial profile.

A stripped-down charging example makes the two conditions --- unit-blindness and profile-blindness --- concrete (\Cref{fig: assignment arrays}): Three prosecutors each read their own case file, and each file pairs the arrestee's race with a nonracial profile of neighborhood disadvantage (lo/hi) and prior record (no/ex), and the assignment's probabilities form a $3 \times 8$ array, one row per prosecutor and one column per configuration. The uniform default of \citet{hainmuelleretal2014} (panel \textit{a}) and a population-calibrated design of \citet{delacuestaetal2022} (panel \textit{b}) satisfy both invariances: The rows are identical, and the chance that a file is marked Black, $\Pr(Z = 1 \given \bm{V} = \bm{v})$, is constant across nonracial profiles; the only difference between the two panels is the weighting $\rho$ common to the two racial conditions.

The two conditions can fail separately. Panel \textit{(c)} keeps the rows identical but lets the chance that a file is marked Black vary across nonracial profiles, from $1/4$ at (lo, no) to $3/4$ at (hi, ex), so unit-blindness holds while profile-blindness fails --- and because every configuration retains positive probability, panel \textit{(c)} supports the same inferences as panels \textit{(a)} and \textit{(b)}. Unit-blindness would fail instead if some prosecutor's row differed from the rest. If, say, prosecutors in heavily policed boroughs were both more likely to receive Black-marked files and more likely to charge, the Black-file cell means would draw disproportionately on those prosecutors, mixing the effect of the file with differences among the prosecutors who read it.

\begin{figure}[H]
\centering
\includegraphics[width=\linewidth]{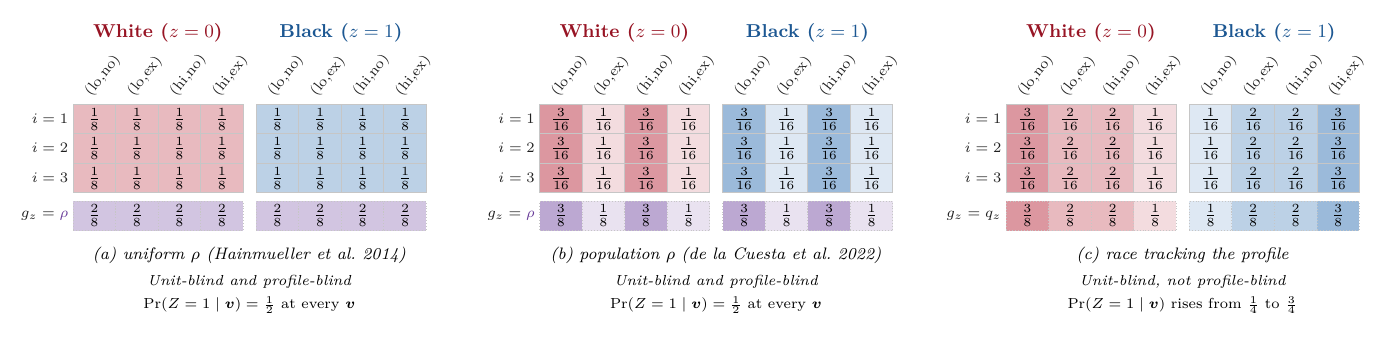}
\caption{Three configuration assignments for the charging example: one row per prosecutor $i$, one column per configuration $(z, \bm{v})$, cell shading proportional to the cell's probability. The dotted bottom margin reports the weights read off each assignment, $\Pr(\bm{V} = \bm{v} \given Z = z)$: one common weighting $\rho$ in panels \textit{(a)} and \textit{(b)}, the race-specific $q_z$ in panel \textit{(c)}.}
\label{fig: assignment arrays}
\end{figure}

Unit-blindness settles the inferential problem by itself. Every member of the estimand family is a difference between two race-specific weighted averages of cell means (\Cref{sec: anatomy estimand}). Part \textit{(i)} of \Cref{prop: two blindnesses} states that a unit-blind assignment recovers every member of the family without bias, so long as the assignment places positive probability on every profile with positive weight under the target (\Cref{sec: estimation}); the proof never invokes profile-blindness. Profile-blindness is neither necessary nor sufficient for recovering any member of the estimand family; profile-blindness contributes nothing to whether inference is confounded. If profile-blindness buys no inference, what is profile-blindness buying?

The answer is the second of the two conventions. The field reads the estimand's weights off the assignment, so the distribution the design places on nonracial profiles within each racial condition becomes the distribution over which the estimand averages, $\Pr(\bm{V} = \bm{v} \given Z = z)$. The convention is nearly universal: As \citet[][pp.~20--21]{delacuestaetal2022} report, 88\% of reviewed conjoint articles randomize every factor with the uniform distribution, and fewer than 4\% theoretically motivate the randomization distribution in the main text.
\begin{prop} \label{prop: two blindnesses}
Suppose the configuration-level SUTVA (\Cref{assm: SUTVA}), so that the estimand family of \Cref{sec: anatomy estimand} is well defined, and let the assignment be unit-blind, with fixed cell counts and $p(z, \bm{v}) > 0$ on every profile with positive weight under the target. \textit{(i)} For every member of the estimand family, the difference in cell means reweighted to that member's weighting is unbiased for that member; profile-blindness plays no role. \textit{(ii)} If instead the researcher reads the weighting off the assignment, so that nonracial profiles within each racial condition are weighted by $\Pr(\bm{V} = \bm{v} \given Z = z)$, then the weighting is common across the two racial conditions if and only if the assignment is profile-blind.
\end{prop}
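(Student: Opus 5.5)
Part \textit{(i)} rests on writing every member of the family as a linear combination of average potential outcomes $\bar{y}(z,\bm{v}) = (1/N)\sum_i y_i(z,\bm{v})$. For the AEE in \eqref{eq: AEE} and the AEWR in \eqref{eq: AEWR}, exchanging the sums over $i$ and $\bm{v}$ gives
\begin{align*}
\textrm{AEE} = \sum_{\bm{v}} \left[\bar{y}(1,\bm{v}) - \bar{y}(0,\bm{v})\right]\rho(\bm{v}), \qquad \textrm{AEWR} = \sum_{\bm{v}} \bar{y}(1,\bm{v})\, q_1(\bm{v}) - \sum_{\bm{v}} \bar{y}(0,\bm{v})\, q_0(\bm{v}).
\end{align*}
The same exchange works for a mixed effect. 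There the weight on $(z,\bm{v})$ is the product of the common weight on the fixed block and the race-specific conditional weight on the free block, as in \Cref{sec: mixed contrast formalism}. So every member has the form $\sum_{\bm{v}} \bar{y}(1,\bm{v})\,\omega_1(\bm{v}) - \sum_{\bm{v}} \bar{y}(0,\bm{v})\,\omega_0(\bm{v})$, where $\omega_1$ and $\omega_0$ are probability weights fixed by the researcher.

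The reweighted estimator is $\sum_{\bm{v}} \hat{y}(1,\bm{v})\,\omega_1(\bm{v}) - \sum_{\bm{v}} \hat{y}(0,\bm{v})\,\omega_0(\bm{v})$, where $\hat{y}(z,\bm{v})$ is the observed cell mean. By linearity, it suffices to show $\E[\hat{y}(z,\bm{v})] = \bar{y}(z,\bm{v})$ for every cell with $\omega_z(\bm{v})>0$. Under SUTVA (\Cref{assm: SUTVA}), the observed outcome of a unit in cell $(z,\bm{v})$ is $y_i(z,\bm{v})$. Under unit-blindness with fixed cell counts $n(z,\bm{v})>0$, the design is a completely randomized multi-arm experiment: every size-$n(z,\bm{v})$ subset of units is equally likely to fill the cell. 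By exchangeability, $\Pr(\bm{T}_i = (z,\bm{v})) = n(z,\bm{v})/N$ for every $i$. Writing $\hat{y}(z,\bm{v}) = n(z,\bm{v})^{-1}\sum_i \mathbbm{1}\{\bm{T}_i=(z,\bm{v})\}\, y_i(z,\bm{v})$ and taking expectations gives $\bar{y}(z,\bm{v})$.

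Nowhere in this argument does the ratio $\Pr(Z=1\given \bm{V}=\bm{v})$ enter. I will say so explicitly, and note that panel \textit{(c)} of \Cref{fig: assignment arrays} is a valid instance. The positivity hypothesis guarantees each needed cell is nonempty, so every term is defined.

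The main obstacle is interpretive rather than technical: making precise what ``unit-blind with fixed cell counts'' means, so that the exchangeability step is airtight. I will take it to mean a uniform draw over allocations with the prescribed counts. I will then check that this design yields identical rows $p_i(z,\bm{v}) = n(z,\bm{v})/N$, which is unit-blindness in the sense of \Cref{def: blindnesses}. If the intended reading is instead independent Bernoulli-type assignment, I would condition on the realized counts, which yields the same result.

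Part \textit{(ii)} is a Bayes-rule equivalence on the common row $p(z,\bm{v})$. Set $\pi_z = \sum_{\bm{v}} p(z,\bm{v})$ and $\rho_p(\bm{v}) = \sum_z p(z,\bm{v})$, both positive on the relevant support. Then
\begin{align*}
\Pr(\bm{V}=\bm{v}\given Z=z) = \frac{\Pr(Z=z\given \bm{V}=\bm{v})\,\rho_p(\bm{v})}{\pi_z}.
\end{align*}
If the assignment is profile-blind, then $\Pr(Z=z\given\bm{V}=\bm{v}) = c_z$ is constant in $\bm{v}$. Summing over $\bm{v}$ gives $c_z = \pi_z$, so both conditionals equal $\rho_p$ and the weighting is common. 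Conversely, suppose $\Pr(\bm{V}=\bm{v}\given Z=1) = \Pr(\bm{V}=\bm{v}\given Z=0) =: g(\bm{v})$. Then $p(z,\bm{v}) = \pi_z g(\bm{v})$, so $\rho_p = g$ and $\Pr(Z=z\given\bm{V}=\bm{v}) = \pi_z$, which is constant in $\bm{v}$. Hence the two conditions are equivalent.
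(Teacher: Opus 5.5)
Your proof is correct and follows the paper's route. Part \textit{(i)} is the same SUTVA-plus-linearity argument with $\Pr(\bm{T}_i=(z,\bm{v}))=n_{z,\bm{v}}/N$ that the paper gives in its unbiasedness proposition under complete random assignment, and part \textit{(ii)} is the same multiplication-rule and total-probability equivalence, run in both directions.

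The interpretive worry you flag is milder than you suggest. Identical rows together with fixed counts already force $p(z,\bm{v})=n_{z,\bm{v}}/N$: take expectations of $\sum_i \mathbbm{1}\{\bm{T}_i=(z,\bm{v})\}=n_{z,\bm{v}}$. Those marginals are all the unbiasedness step uses, so no exchangeability or uniformity over allocations is needed.
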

\noindent \Cref{app: double duty proof} gives the proof of part \textit{(ii)}; part \textit{(i)} is the unbiasedness of the plug-in estimator of the estimand family, developed in \Cref{sec: estimation} and proved in \Cref{app: randomization properties}. Part \textit{(i)} is what randomization buys; part \textit{(ii)} is what profile-blindness buys. A weighting common to the two racial conditions is the all-else-equal structure of \eqref{eq: AEE}, so profile-blindness, read for its weighting, is the selection of the all-else-equal member. Profile-blindness was never doing inference; profile-blindness was choosing the estimand.

Both concerns travel under one name, \textit{confounding} --- applied by the statistical literature on causal inference to dependence between the assignment and the potential outcomes, and by critiques of discrimination experiments \citep{heckman1998} to the association between race and the nonracial attributes it indexes. One name applied to two conditions presents their joint removal as a single requirement, which is how the choice of estimand came to arrive inside a credibility condition: To remove confounding, in the undifferentiated sense, was to run a clean experiment and, in the same act, to choose the all-else-equal effect.

Reserving \textit{confounding} for failures of unit-blindness breaks the fusion and matches the word's formal definition: An assignment mechanism is unconfounded when its probabilities do not depend on units' potential outcomes \citep{rosenbaumrubin1983, imbensrubin2015}. On this usage, the association between race and the nonracial attributes it indexes is never a confounder. Sorting those attributes into the held-fixed and the free is the choice of contrast, a choice made on substantive and often normative grounds (\Cref{sec: substantive motivations}). Confounding remains a property of the assignment, so normative considerations enter the choice of estimand, not the inference that recovers the chosen estimand.

The distinction between choosing and recovering the estimand also explains the conclusion, noted in \Cref{sec: debate}, that positive inquiry into the effects of race cannot be separated from normative inquiry \citep{hukohler-hausmann2025,hu2026}. When one act of assignment must both select the estimand and secure inference, the normative grounds of the estimand present themselves as normative grounds of inference. Separate the two jobs, and the appearance dissolves: The estimand can be, and often is, normative; inference, once the estimand is fixed, is not.

Two connections between assignment and estimand remain, and both run from the estimand to the design: support --- which profiles the assignment can produce at all --- since the assignment must place positive probability on every profile with positive weight under the chosen member of the estimand family, and precision, taken up in \Cref{sec: estimation}. The design secures the estimand's recovery; the design never selects the estimand. \citet{lundbergetal2021} recommend the same order, estimand before design, for empirical research generally, and \citet{blairetal2019} formalize the order for political science.

\subsection{Excludability} \label{sec: excludability}

In the perception-based regime, the researcher assigns the cue $w_i \in \{0, 1\}$; the perceptions the cue would induce, $z_i(w)$ and $\bm{v}_i(w)$, are potential outcomes of the cue (\Cref{sec: perception-based}). The target is the NREC of \eqref{eq: complier effect}, the average over the compliers $\mathcal{C}$ of the outcome contrasts the cue would induce. Complier status is counterfactual --- defined by both potential racial perceptions while any realized cue reveals at most one --- so the NREC cannot be formed by subsetting. Recovery proceeds instead through two averages over all $N$ units, the cue's average effect on the outcome and the cue's average effect on perceived race; under assumptions given below, the ratio of the first to the second equals the NREC:
\begin{align} \label{eq: NREC iden ratio}
\dfrac{\frac{1}{N} \sum_{i = 1}^N \left[y_i\left(z_i(1), \bm{v}_i(1)\right) - y_i\left(z_i(0), \bm{v}_i(0)\right)\right]}{\frac{1}{N} \sum_{i = 1}^N \left[z_i(1) - z_i(0)\right]},
\end{align}
the cue's average effect on the outcome divided by the cue's average effect on perceived race.

Suppose that no unit would invert the cue, perceiving the Black cue as White and the White cue as Black, so that every unit is a complier, an always-taker, or a never-taker (\Cref{sec: perception-based}). The two averages in the ratio then run over the compliers, whose contrasts the NREC averages, and the non-compliers, whose contrasts the NREC excludes. Any difference between the ratio in \eqref{eq: NREC iden ratio} and the NREC in \eqref{eq: complier effect} therefore comes from the non-compliers alone. For the ratio to equal the NREC, it thus suffices that the non-compliers contribute nothing, and the following assumption states that condition.
\begin{assm}[No isolated nonracial shift] \label{assm: no isolated nonracial shift}
For all units $i \in \{1, \ldots, N\}$ with $z_i(0) = z_i(1)$, $\bm{v}_i(0) = \bm{v}_i(1)$.
\end{assm}
\noindent \Cref{assm: no isolated nonracial shift} asserts that the cue would not move nonracial perceptions on its own: A decision-maker who would read ``Lakisha Washington'' and ``Emily Walsh'' \citep{bertrandmullainathan2004} as the same race would, under the assumption, pick up none of the names' nonracial content.

The assumption restricts non-compliers alone. A complier's perceptions remain entirely free: Racial perception may shift with nonracial perceptions in tow, or shift by itself. The NREC averages whatever joint movement the cue would induce among compliers.

\begin{prop} \label{prop: NREC identification}
Suppose the cue-level SUTVA (\Cref{assm: SUTVA cue}), no defiers, the existence of at least one complier --- conditions stated formally in \Cref{app: NREC identification} --- and perception sufficiency (\Cref{assm: PS}). Under \Cref{assm: no isolated nonracial shift}, the ratio in \eqref{eq: NREC iden ratio} equals the NREC in \eqref{eq: complier effect}.
\end{prop}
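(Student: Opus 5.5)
The plan is a direct finite-population argument, in the style of the Wald/LATE identification of \citet{frangakisrubin2002}. The ratio in \eqref{eq: NREC iden ratio} is a ratio of two sums over all $N$ units, so I handle the numerator and denominator separately and partition the index set into principal strata. Under no defiers, every unit $i$ is a complier, an always-taker, or a never-taker. So $\{1,\ldots,N\} = \mathcal{C} \cup \mathcal{A} \cup \mathcal{N}$ as a disjoint union, where $\mathcal{A}$ and $\mathcal{N}$ collect units with $z_i(0) = z_i(1) = 1$ and $z_i(0)=z_i(1)=0$, respectively.

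\textbf{Denominator.} For $i \in \mathcal{C}$, $z_i(1)-z_i(0) = 1$. For $i \in \mathcal{A}\cup\mathcal{N}$, the difference is $0$. The denominator therefore equals $\abs{\mathcal{C}}/N$, which is strictly positive by the existence of at least one complier. This makes the ratio well defined.

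\textbf{Numerator.} By the cue-level SUTVA (\Cref{assm: SUTVA cue}) and perception sufficiency (\Cref{assm: PS}), each summand is $y_i(\bm{t}_i(1)) - y_i(\bm{t}_i(0))$, with $y_i$ the common function on $\mathcal{T}$. I split the sum over the three strata.
\begin{itemize}
\item For $i \in \mathcal{A}\cup\mathcal{N}$, we have $z_i(0)=z_i(1)$. \Cref{assm: no isolated nonracial shift} then gives $\bm{v}_i(0)=\bm{v}_i(1)$, hence $\bm{t}_i(0)=\bm{t}_i(1)$, and the summand is exactly zero. (One could equally invoke \Cref{assm: PS} directly at this step.)
\item For $i\in\mathcal{C}$, the summand is $y_i(1,\bm{v}_i(1)) - y_i(0,\bm{v}_i(0))$, as in \eqref{eq: complier race effect}.
\end{itemize}
So the numerator equals $\frac{1}{N}\sum_{i\in\mathcal{C}}[y_i(1,\bm{v}_i(1)) - y_i(0,\bm{v}_i(0))]$.

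\textbf{Ratio.} Dividing the numerator by $\abs{\mathcal{C}}/N$, the factors of $1/N$ cancel, leaving the expression in \eqref{eq: complier effect}.

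There is no real obstacle. The only step needing care is the non-complier cancellation. There one must use both that $z$ is unchanged (by stratum membership) and that $\bm{v}$ is unchanged (by \Cref{assm: no isolated nonracial shift}), so that the full configuration, and not just its racial component, coincides across cue values. I would also note explicitly that no restriction is placed on $\bm{v}_i(0)$ versus $\bm{v}_i(1)$ for compliers. Because the identity is one between fixed finite-population quantities, no randomization or expectation enters; randomization matters only later, for estimating the two averages.
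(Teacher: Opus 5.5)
Your proof is correct and follows the paper's argument essentially step for step. Under no defiers you partition the units into compliers, always-takers, and never-takers. You use \Cref{assm: no isolated nonracial shift} so that every non-complier's term vanishes in both the numerator and the denominator. You then divide the remaining complier sum by the positive complier share $\abs{\mathcal{C}}/N$, exactly as the paper does.
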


The literature secures the equality of the ratio and the NREC through conditions stronger than \Cref{assm: no isolated nonracial shift} and frames the stronger conditions in terms of excludability. \citet[][pp.~122--123]{butlerhomola2017} assume the outcome does not depend on perceived nonracial features, $y_i(z, \bm{v}) = y_i(z, \bm{v}^{\prime})$. \citet[][eq.~6, p.~404]{dafoeetal2018} require that the cue move no unit's nonracial perceptions, $\bm{v}_i(0) = \bm{v}_i(1)$ for all $i$ --- \textit{information equivalence} in their vocabulary, \textit{cue stability} hereafter --- and relate the requirement to the exclusion restriction in instrumental-variable analysis \citep{landgraveweller2022,elderhayes2023}. %
With defiers ruled out, every unit is a complier or a non-complier, so a condition imposed on every unit says two separate things: what the condition requires of non-compliers and what the condition requires of compliers.
\begin{prop}[Decomposition of cue stability] \label{prop: cue stability decomposition}
Suppose the cue-level SUTVA (\Cref{assm: SUTVA cue}), no defiers, the existence of at least one complier, and perception sufficiency (\Cref{assm: PS}). Cue stability is the conjunction of \textit{(i)} \Cref{assm: no isolated nonracial shift} and \textit{(ii)} complier cue stability: $\bm{v}_i(0) = \bm{v}_i(1)$ for all $i \in \mathcal{C}$. Conjunct \textit{(i)} is necessary and sufficient for the ratio in \eqref{eq: NREC iden ratio} to equal the NREC, and, given conjunct \textit{(i)}, conjunct \textit{(ii)} is necessary and sufficient for the NREC to equal the AEE among compliers.
\end{prop}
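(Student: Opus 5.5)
The plan is to read both ``necessary and sufficient'' claims as statements that quantify over the potential-outcome functions $y_i:\mathcal{T}\to\R$. The conditions concern the perceptions $(z_i(w),\bm{v}_i(w))$ alone. A condition on perceptions is then necessary and sufficient for an equality if the equality holds for every admissible choice of the $y_i$ exactly when the condition holds. Without this reading, necessity fails trivially, for instance when every $y_i$ is constant in $\bm{v}$. Throughout, the cue-level SUTVA and perception sufficiency (\Cref{assm: PS}) let me write $y_i(w)=y_i(z_i(w),\bm{v}_i(w))$. With no defiers, every unit lies either in $\mathcal{C}$ or in the non-complier set $\mathcal{N}\coloneqq\{i: z_i(0)=z_i(1)\}$, and these sets partition $\{1,\ldots,N\}$.

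\emph{Step 1: the conjunction.} Cue stability requires $\bm{v}_i(0)=\bm{v}_i(1)$ for all $i$. Because $\mathcal{C}\cup\mathcal{N}$ is all units, this is the same as requiring it on $\mathcal{N}$, which is conjunct \textit{(i)}, and on $\mathcal{C}$, which is conjunct \textit{(ii)}.

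\emph{Step 2: the ratio versus the NREC.} Each complier contributes $z_i(1)-z_i(0)=1$ to the denominator of \eqref{eq: NREC iden ratio}, and each non-complier contributes $0$. Since $\mathcal{C}$ is nonempty, the denominator equals $\abs{\mathcal{C}}/N>0$. I then split the numerator over $\mathcal{C}$ and $\mathcal{N}$. This gives
\[
\text{ratio}-\textrm{NREC}=\frac{1}{\abs{\mathcal{C}}}\sum_{i\in\mathcal{N}}\left[y_i\left(z_i,\bm{v}_i(1)\right)-y_i\left(z_i,\bm{v}_i(0)\right)\right],
\]
where $z_i$ denotes the common value $z_i(0)=z_i(1)$. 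Sufficiency of \textit{(i)} is immediate, because each summand then compares $y_i$ at identical configurations. For necessity, suppose some $j\in\mathcal{N}$ has $\bm{v}_j(0)\neq\bm{v}_j(1)$. I choose $y_j$ to equal $1$ at $(z_j,\bm{v}_j(1))$ and $0$ elsewhere, and set every other $y_i\equiv 0$. Perception sufficiency is still satisfied, since it only constrains the outcome through the configuration. The displayed gap is then $1/\abs{\mathcal{C}}\neq 0$.

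\emph{Step 3: the NREC versus the AEE among compliers.} Here I must fix what ``the AEE among compliers'' means in this regime. I would take it to be the all-else-equal contrast at the complier's own perceived profile, $\abs{\mathcal{C}}^{-1}\sum_{i\in\mathcal{C}}[y_i(1,\bm{v}_i(0))-y_i(0,\bm{v}_i(0))]$. This is \eqref{eq: AEE} restricted to $\mathcal{C}$ with the degenerate common weighting described in \Cref{sec: choosing contrast}. Given \textit{(i)}, Step 2 says that the NREC is what the ratio recovers. Its gap from this AEE is $\abs{\mathcal{C}}^{-1}\sum_{i\in\mathcal{C}}[y_i(1,\bm{v}_i(1))-y_i(1,\bm{v}_i(0))]$. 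Conjunct \textit{(ii)} makes every summand vanish. Conversely, if some complier has $\bm{v}_i(1)\neq\bm{v}_i(0)$, the same point-mass construction as in Step 2, placed at $(1,\bm{v}_i(1))$, makes the gap nonzero.

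I expect Step 3 to be the main obstacle. The difficulty is not computational: it lies in pinning down the AEE among compliers so that the equivalence is exact. One alternative is to anchor at $\bm{v}_i(1)$ instead, which changes nothing in the argument. A harder case is a common $\rho$ that does not depend on the unit. In that case sufficiency requires that $\rho$ be the degenerate distribution at the shared profile $\bm{v}_i(0)=\bm{v}_i(1)$, and I would state this choice explicitly.
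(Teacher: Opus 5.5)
Your proposal is correct and follows the paper's proof essentially step for step. The shared elements are: the split of cue stability over the complier/non-complier partition; reading necessity as the existence of potential outcomes that break the equality, with your point mass a concrete instance of the paper's $\kappa$ construction; and anchoring the AEE among compliers at $\bm{v}_i(0)$, which gives the identical gap $\abs{\mathcal{C}}^{-1}\sum_{i\in\mathcal{C}}\left[y_i(1,\bm{v}_i(1))-y_i(1,\bm{v}_i(0))\right]$. The only difference is cosmetic: you obtain sufficiency of conjunct \textit{(i)} from an explicit gap identity summed over non-compliers, where the paper cites \Cref{prop: NREC identification}.
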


The two conjuncts restrict disjoint sets of units and do different work (\Cref{fig: assumption coverage}). Given conjunct \textit{(i)}, the ratio equals the NREC, so no recovery remains for conjunct \textit{(ii)} to help; what conjunct \textit{(ii)} determines is the estimand. Holding compliers' nonracial perceptions fixed across the cue values leaves each complier's induced contrast differing in perceived race alone, and the NREC collapses into the AEE among compliers (\Cref{sec: anatomy estimand}). I draw from the decomposition a distinction the literature leaves implicit: Conditions on the units outside the target subgroup bear on whether the ratio equals the NREC; conditions on the units inside the target subgroup bear on which member of the estimand family the ratio equals. The distinction is the perception-based counterpart of unit-blindness and profile-blindness in \Cref{sec: differential exposure}: Conjunct \textit{(i)} does the recovery work of unit-blindness, and conjunct \textit{(ii)} the estimand work of profile-blindness.

\begin{figure}[H]
\centering
\includegraphics[width=\linewidth]{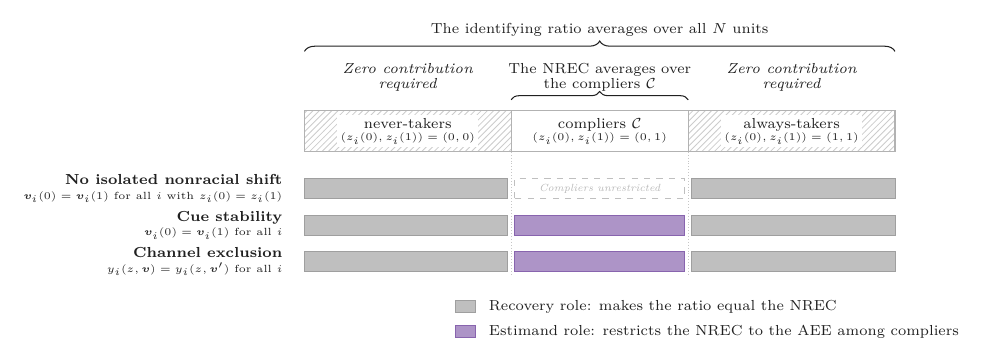}
\caption{Excludability conditions as coverage over the compliance types (defiers ruled out; segment widths illustrative). Brackets: The ratio in \eqref{eq: NREC iden ratio} averages over all $N$ units, while the NREC averages over the compliers only, so the hatched flanks must contribute zero. Rows: Each assumption covers the units it restricts; color records each block's role (\Cref{prop: cue stability decomposition}).}
\label{fig: assumption coverage}
\end{figure}

Reserving \textit{excludability} for conditions on the units the target excludes gives the word one target-invariant meaning: Conjunct \textit{(i)} exhausts excludability's content whichever member a study seeks, while conjunct \textit{(ii)} is a condition of a different kind --- not recovery but the selection of the all-else-equal member. A researcher who wants the AEE among compliers may therefore design cues with conjunct \textit{(ii)} as the aim; a researcher who wants the NREC needs conjunct \textit{(i)} alone, and on either target the movement of compliers' nonracial perceptions does not violate excludability.

The ratio in \eqref{eq: NREC iden ratio} is the Wald ratio \parencites[pp.~127--128]{angristpischke2008}[p.~180]{gerbergreen2012}, and the recovery of the NREC in \Cref{prop: NREC identification} has the same structure as the identification of local average treatment effects \citep{imbensangrist1994, angristetal1996}, with the cue as the instrument and the perceived configuration --- racial and nonracial components together --- as the treatment. \Cref{app: NREC identification} of the Supplementary Material develops the parallel, including why conjunct \textit{(ii)} has no counterpart in the classical setting of instrumental variables.

In practice, many cue-based studies measure perceived race, if at all, in a separate sample rather than on the decision-makers who supply the outcome, so such studies can estimate only the numerator of \eqref{eq: NREC iden ratio}; the numerator shares the NREC's sign and understates the NREC's magnitude (\Cref{cor: NREC lower bound} of the Supplementary Material), and a researcher can treat the unknown complier share as a sensitivity parameter, analogous to the sensitivity analyses in \citet{leavittrivera-burgos2024,leavittrivera-burgos2026}.

\section{Estimation and Inference across the Estimand Family} \label{sec: estimation}

\subsection{Exposure-Based Regime}

Under the exposure-based regime, the researcher's design choice is the set of cell counts --- how many units encounter each profile --- and random assignment allocates units to profiles given those counts.
\begin{assm}[Complete random assignment of configurations] \label{assm: CRA}
Given cell counts $n_{z, \bm{v}}$, one positive integer per profile $(z, \bm{v}) \in \mathcal{T}$, the configuration assignment $\bm{T} = (\bm{T}_1, \ldots, \bm{T}_N)$ takes each value in the resulting set $\Omega$ with probability $1/\abs{\Omega}$, so $\Pr(\bm{T}_i = (z, \bm{v})) = n_{z, \bm{v}}/N$.
\end{assm}
\noindent Because the probabilities $n_{z, \bm{v}}/N$ do not vary with the unit, the assignment is unit-blind (\Cref{def: blindnesses}), unable to track a unit's attributes or potential outcomes; the assumption leaves profile-blindness unconstrained. Unit-blindness suffices to recover whichever member of the estimand family the researcher targets (\Cref{sec: differential exposure}).

For each profile $(z, \bm{v}) \in \mathcal{T}$, the one quantity estimated from data is the cell mean
\begin{align} \label{eq: cell mean estimator}
\hat{\mu}(z, \bm{v}) \coloneqq \dfrac{1}{n_{z, \bm{v}}} \sum_{i = 1}^N Y_i \, \mathbbm{1}\{\bm{T}_i = (z, \bm{v})\},
\end{align}
the average outcome among units assigned to profile $(z, \bm{v})$, where $Y_i$ is unit $i$'s observed outcome and $\mathbbm{1}\{\cdot\}$ is the indicator function, equal to $1$ when the condition in braces holds and $0$ otherwise. Every estimand in \Cref{sec: anatomy estimand} is a difference between two race-specific weighted averages of these cell means, so one plug-in estimator covers the entire estimand family. Writing $g_0$ and $g_1$ for the weighting distributions the target estimand places on the profiles with $z = 1$ and $z = 0$, the plug-in estimator is
\begin{align} \label{eq: family estimator main}
\widehat{\theta} \coloneqq \sum_{\bm{v} \in \mathcal{V}} g_1(\bm{v})\, \hat{\mu}(1, \bm{v}) - \sum_{\bm{v} \in \mathcal{V}} g_0(\bm{v})\, \hat{\mu}(0, \bm{v}).
\end{align}
The AEE takes the common marginal, $g_0 = g_1 = \rho$; the AEWR takes the race-specific conditionals, $g_z = q_z$; and a mixed contrast takes a weight that is common on the held-fixed features and race-specific on the free ones (\Cref{sec: mixed contrast formalism}). These weights carry no hat: The researcher fixes them in advance, so \eqref{eq: family estimator main} targets the chosen member of the estimand family, and changing the weights changes the estimand, not the estimator's ability to recover the estimand.

Under \Cref{assm: CRA}, $\widehat{\theta}$ is unbiased and consistent for its estimand, whatever the weights $g_0$ and $g_1$. A single construction yields valid confidence intervals for every member of the estimand family (\Cref{app: randomization properties}). Unbiasedness holds under any design that assigns at least one unit to every profile cell: The estimator reweights the cell means to the target distribution whatever the cell counts $n_{z, \bm{v}}$, so the weights define the estimand and the counts do not.

The variance of $\widehat{\theta}$ grows as the weights $g_z(\bm{v})$ depart from the assigned shares $n_{z, \bm{v}} / N$, so a researcher who controls the design can increase precision by assigning profiles in proportion to the target weights --- the strategy \citet{delacuestaetal2022} develop for the AMCE, which in the estimand family of \Cref{sec: anatomy estimand} means assigning by the common $\rho$ for an all-else-equal target and by the race-specific $q_z$ for a within-race target. Neither weighting is inherently more precise. As in \Cref{sec: differential exposure}, the order runs from estimand to design: A researcher who specifies the member of the estimand family to target can then choose the assignment that recovers the member precisely.

\subsection{Perception-Based Regime}

In the perception-based regime the target is the NREC, and the researcher assigns only the cue.
\begin{assm}[Complete random assignment of cues] \label{assm: CRA cue}
Given cue counts $n_w$, one positive integer per value $w \in \{0, 1\}$, the cue assignment $\bm{W} = (W_1, \ldots, W_N)$ takes each value in the resulting set $\Omega_W$ with probability $1/\abs{\Omega_W}$, so $\Pr(W_i = w) = n_w/N$.
\end{assm}
\noindent Randomizing the cue does not randomize the racial contrast, which is a potential outcome of the cue. Recovery therefore needs, beyond randomization, the conditions of \Cref{sec: excludability}: no defiers and at least one complier (\Cref{app: NREC identification}), and \Cref{assm: no isolated nonracial shift}; nothing stronger is required.

Under \Cref{assm: CRA cue}, unit $i$'s cue is now the random variable $W_i$, and $Z_i \coloneqq z_i(W_i)$ is the perceived race that $W_i$ induces (\Cref{sec: perception-based}). Two difference-in-means estimators compare the outcome and perceived race across the two cue values --- the reduced-form estimator $\widehat{\Delta}_Y$ and the first-stage estimator $\widehat{\Delta}_Z$:
\begin{align}
\widehat{\Delta}_Y &\coloneqq \dfrac{1}{n_1} \sum_{i = 1}^N Y_i \, \mathbbm{1}\{W_i = 1\} - \dfrac{1}{n_0} \sum_{i = 1}^N Y_i \, \mathbbm{1}\{W_i = 0\}, \label{eq: rf estimator} \\
\widehat{\Delta}_Z &\coloneqq \dfrac{1}{n_1} \sum_{i = 1}^N Z_i \, \mathbbm{1}\{W_i = 1\} - \dfrac{1}{n_0} \sum_{i = 1}^N Z_i \, \mathbbm{1}\{W_i = 0\}. \label{eq: fs estimator}
\end{align}

When perceived race is measured on the same respondents who supply the outcome, $\widehat{\mathrm{NREC}} \coloneqq \widehat{\Delta}_Y / \widehat{\Delta}_Z$ estimates the NREC: By \Cref{prop: NREC identification}, the ratio is the instrumental-variables estimator (\Cref{sec: excludability}), with the cue as the instrument and perceived race as the treatment taken up. Absent a first stage, as in most cue-based audits, only $\widehat{\Delta}_Y$ is available, and $\widehat{\Delta}_Y$ unbiasedly estimates the lower bound in magnitude of the NREC (\Cref{cor: NREC lower bound} of the Supplementary Material). A confidence interval for the ratio can be derived from a delta-method approximation that, like the usual instrumental-variables variance, degrades when the cue moves perceived race only weakly; \Cref{app: NREC inference} gives the justification and alternatives for weak instruments.

\section{Application: Race and Language in Candidate Evaluation} \label{sec: app_1}

Would Hispanic voters evaluate a Hispanic candidate more favorably than an Anglo candidate? Coethnic preference is the expectation of the literature on linked fate --- the belief that one's own life chances are tied to those of the group --- and group consciousness \citep{dawson1994,sanchez2008,sanchezmasuoka2010}. I take the question to the candidate-evaluation experiment of \citet{zarateetal2024}, read under both regimes of \Cref{sec: formal setup}. Across both regimes, which conclusion is right comes down to which effect of race the study targets, not to which estimand the data better recover.

In a $2 \times 3$ between-subjects experiment, \citet{zarateetal2024} vary a candidate's race, Anglo ($z = 0$) or Hispanic ($z = 1$), and the language of a recorded campaign speech. The candidate, a hypothetical state representative seeking reelection, is named either Josh Martin or Josu\'{e} Mart\'{i}nez, and the written vignette that introduces the recording gives the full name and states the candidate's race outright: ``Representative [Martin/Mart\'{i}nez], who is [White/Latino]'' \citep[][p.~368]{zarateetal2024}. Language is the sole nonracial feature, so $\mathcal{V} = \{\textrm{English}, \textrm{Non-Native Spanish}, \textrm{Native Spanish}\}$: The speech is in English throughout or includes two sentences in Spanish, spoken with an American-English or a native accent, and both racial conditions use the same recordings, made by a single voice actor. Each respondent in a national sample of Hispanic adults listens to one randomly assigned recording and rates the candidate on an index combining vote intention, trust, liking, and feeling represented.

The design can be read under both regimes. The exposure reading takes the stated race as assigned, and the perception reading --- since a respondent may perceive a race other than the one stated --- takes the name and label together as a single cue. \citet{zarateetal2024} also measure the race each respondent perceives --- a measurement many cue-based studies lack --- so the perception-based regime's target is directly estimable (\Cref{sec: app perception}).

\citet{zarateetal2024} ask whether Spanish-language appeals raise Hispanic respondents' evaluations and whether respondents evaluate the Hispanic candidate more favorably than the Anglo candidate. On the second question, the authors compare the two candidates at one language at a time and find coethnic preference under English but no detectable preference under either Spanish condition. Those comparisons hold language fixed; the reanalysis asks the same question without deciding in advance whether language is held fixed across the racial conditions (the AEE) or varies with race (the AEWR).

Whether the AEE or the AEWR answers the question of coethnic preference depends on whether one treats language as separable from the category Hispanic --- the choice of \Cref{sec: debate} between the classical conception, on which a racial category consists in nominal membership alone, and the thick constructivist conception, on which the category also probabilistically indexes nonracial attributes. Commenting on the seminal Urban Institute audit of Hispanic and Anglo job applicants \citep{crossetal1990}, \citet[][pp.~217--218]{heckmansiegelman1993} exemplify the classical conception: They read the Hispanic accent as separable from the category, to be held fixed to isolate discrimination against ``Hispanics per se.''

For the most part, the literature on Latino politics instead treats language as part of what the category is, on both sides of the encounter --- the candidate's and the respondent's. On the candidate's side, Hispanic candidates court Hispanic voters \citep{barreto2007} in Spanish \citep{abrajano2010}, so the two racial conditions differ in their language distributions as part of the category's social content. On the respondent's side, race is perceived partly through speech --- ``sounding like a race'' \citep{rosa2018} --- and to be read as Hispanic is in part to be read as foreign \citep{chavez2013}, among Latino perceivers as well as White ones \citep{zoucheryan2017}.

Each side of the encounter matters under one regime. Under the exposure reading, the stated race is taken as assigned, and the open choice is the weighting over language; the weights come from the candidate's side. Weighting each racial condition by its own campaign distribution, the AEWR answers the constructivist question; weighting both by a common distribution, the AEE answers the classical one (\Cref{sec: app exposure}). Under the perception reading, the candidate's race is what the respondent perceives, and the respondent's side determines what the cue induces. Speech shapes which race the respondent perceives, foreignness may move alongside perceived race, and the target is the NREC (\Cref{sec: app perception}).

Throughout the application, the estimands average over Hispanic respondents' heterogeneity in origin, generation, and language \citep{masuokajunn2013,dowling2014,sanchezetal2012}. On the candidate's side, the campaign distributions represent how candidates speak, not how Hispanic voters do. Narrower subsets or different reference distributions would define other members of the estimand family (\Cref{sec: anatomy estimand}).

\subsection{The Exposure-Based Reading: All-Else-Equal versus Within-Race over Language} \label{sec: app exposure}

The exposure reading turns on the contrast over language, held fixed (the AEE) or varying with race (the AEWR). The AEE weights language by a common distribution $\rho$. The AEWR instead weights language by each racial condition's own $q_z$.

I ground both distributions in how candidates campaign. \citet{zarateetal2024} analyze every Spanish-language U.S.\ House advertisement aired between 2010 and 2018. Among the candidates who campaigned in Spanish --- \nSpanAnglo{} Anglo and \nSpanHisp{} Hispanic --- \pctAngloNonNative{} percent of Anglos used an American accent (Non-Native Spanish) and \pctHispNative{} percent of Hispanics a native one (\Cref{tab: lang dist}). Being Hispanic rather than Anglo thus flips which accent is typical. The campaign frequencies are the race-specific distributions $q_z$; weighting them by each racial condition's share of the Spanish-language campaigners gives the common distribution $\rho$.

These weights enact both grounds of \Cref{sec: weighting substantive}: The weights calibrate the comparison to the population of real candidacies (external validity) and evaluate each racial condition at the accent typical of its own campaigners (typicality). The two grounds coincide because typicality is here defined by the target population itself, though a researcher could define typicality against a different reference distribution. Because the same two race-conditional distributions barely overlap, the common weighting faces the structural tradeoff of \Cref{sec: substantive motivations}: Any one distribution either confines the comparison to the sliver of overlap or places weight on an accent rare under one of the two distributions (\Cref{fig: structural tradeoff}).

\begin{table}[H]
\vspace{10pt}
\input{tab_language_dist}
\end{table}

The plug-in estimator of \Cref{sec: estimation} gives an AEWR of $\AEWRest$ (95\% CI $[\AEWRlo, \AEWRhi]$), distinguishable from zero, and an AEE of $\AEEest$ (95\% CI $[\AEElo, \AEEhi]$), not (\Cref{fig: aewr aee res}). On the same experiment, then, the choice of estimand is the difference between finding coethnic preference and finding none. %

\begin{figure}[H]
\centering
\includegraphics[width=\linewidth]{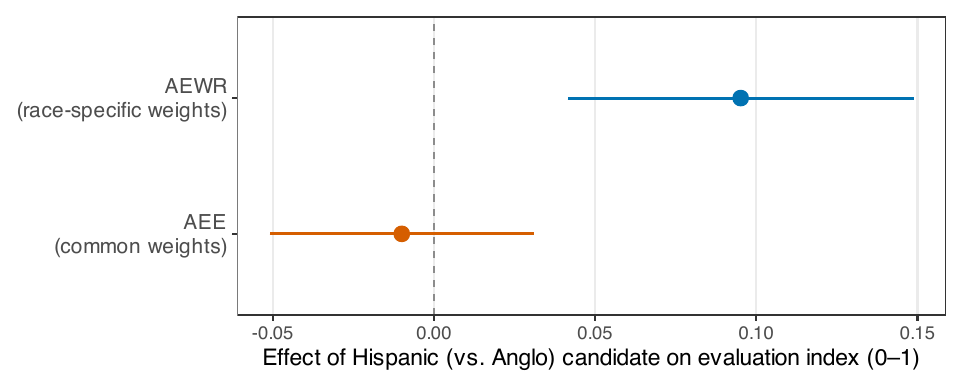}%
\caption{AEWR and AEE estimates of the effect of a Hispanic (versus Anglo) candidate on Hispanic respondents' evaluations, restricted to the two Spanish-language conditions. Bars are 95\% confidence intervals based on consistent estimators of the variance upper bound and a Normal approximation.}
\label{fig: aewr aee res}
\end{figure}

The source of the divergence between the AEWR and AEE estimates is visible in the race-by-language cell means (\Cref{tab: cell means}). Within each language, the Anglo and Hispanic mean evaluations are nearly identical. However, across languages, respondents rate Native Spanish more positively than Non-Native Spanish, for candidates of both races.

\begin{table}[H]
\vspace{10pt}
\input{tab_cell_means}
\end{table}

The AEE averages the two near-zero within-language race contrasts and is essentially zero: The common weight $\rho$ puts $\rhoNat$ on native Spanish for Anglo candidates, far above the \pctAngloNative{} percent who campaign that way, crediting them with evaluations they rarely earn. Because both within-language race contrasts are near zero, any common weighting would leave the AEE near zero --- whether the uniform default \citep[][p.~12]{hainmuelleretal2014} or the target population's distribution for external validity \citep{delacuestaetal2022}, based here on the campaign data. The AEWR instead compares the typical Hispanic candidate (Native Spanish, $\muHnat$) with the typical Anglo candidate (Non-Native, $\muAnn$), and is positive.

If the category Anglo indexes language, so that being Anglo consists partly in rarely speaking native Spanish, then the AEE weights an atypical profile as though it were common, while the AEWR does not (\Cref{sec: substantive motivations}). Both estimands are recovered from the same assignment under the same assumptions: The experiment randomizes race and language jointly, so every race-by-language cell is filled by random assignment, and each estimand is a different weighting of the same cell means (\Cref{sec: differential exposure}).

\subsection{The Perception-Based Reading: The Natural Race Effect among Compliers} \label{sec: app perception}

The name and label form the cue $w$, and the perceived-race item measures the cue's potential outcome, perceived race $z_i(w)$, at the assigned cue value (\Cref{sec: perception-based}). The item directly follows the recording and precedes the evaluation items. Answering the item could raise the salience of race for the evaluations that follow, but only salience that the item raises more under one cue value than the other would enter the contrast. The Hispanic name and label might raise the salience of race more than the White ones do, a possibility the design cannot rule out.

I estimate the NREC within each language condition. At a fixed language, respondents hear the same recording and differ only in the name and label, so every difference in their perceptions, racial or nonracial, is induced by the cue. Pooled across the language conditions, perceived race would move with the assigned language as well as with the cue, and the compliers would no longer be defined at a single language.

The instrumental-variables ratio of the evaluation on perceived race, with the cue as the instrument, estimates the NREC \eqref{eq: complier effect} (\Cref{sec: estimation}). \Cref{fig: nrec by lang} reports, by language condition, estimates of the first stage --- the cue's effect on the share perceiving the candidate as Hispanic --- and of the NREC.

\begin{figure}[H]
\centering
\includegraphics[width=\linewidth]{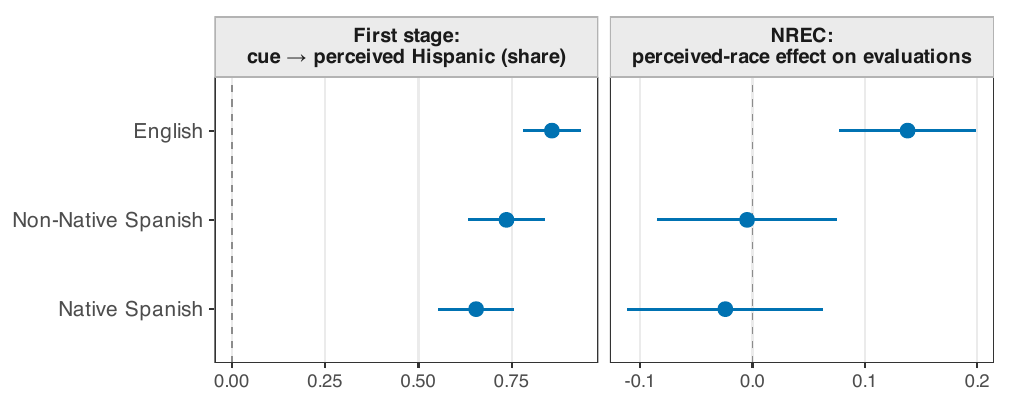}
\caption{Perception-based estimates by language condition (Study 1, Hispanic respondents): the first stage (\textit{left}) and the NREC (\textit{right}). Bars are 95\% confidence intervals; the NREC interval uses a delta-method standard error \citep{pashley2022} with a Normal approximation.}
\label{fig: nrec by lang}
\end{figure}

The NREC answers the question of coethnic preference in the perception reading's terms, and the answer depends on the language the candidate speaks. Under English, the cue moves perceived race sharply (estimated first stage $\fsEng$) and the estimated NREC is positive ($\nrecEng$, 95\% CI $[\nrecEngLo, \nrecEngHi]$): Respondents evaluate a candidate they perceive as Hispanic more favorably than one they perceive as White.

In both Spanish conditions --- non-native and native --- the estimated NREC is indistinguishable from zero, and two patterns may together account for the null results. First, within a Spanish condition the race contrast in evaluations all but vanishes, as though speaking Spanish mutes the role of perceived race in respondents' judgments, consistent with accented language operating as an intergroup cue in its own right \citep{hopkins2015}. Second, the accent appears to reshape perceived race itself, weakening the cue's first stage: Under native Spanish, \pctAngloHispNat{} percent of respondents perceive the Anglo candidate as Hispanic, against \pctAngloHispEng{} percent under English, while under non-native Spanish, \pctHispWhiteNN{} percent perceive the Hispanic candidate as White, against \pctHispWhiteEng{} percent. The two patterns together suggest that respondents perceive race partly through the speech that conveys it --- ``sounding like a race'' \citep{rosa2018} registering in the first stage itself --- and leave a clear perceived-race contrast only under English.

Whether that perceived-race effect is also the all-else-equal effect of race turns on the cue moving compliers' perceived race and nothing else, which a name bearing a race label may fail to satisfy. \Cref{fig: name bundle} bears on whether the name signals more than race, with the first-name trait ratings of \citet{elderhayes2023}: ``Josu\'e,'' the Hispanic name \citet{zarateetal2024} use, tracks Hispanic names as a whole, rated markedly more foreign, modestly more working-class, and less Republican than White-coded names. %

\begin{figure}[H]
\centering
\includegraphics[width=\linewidth]{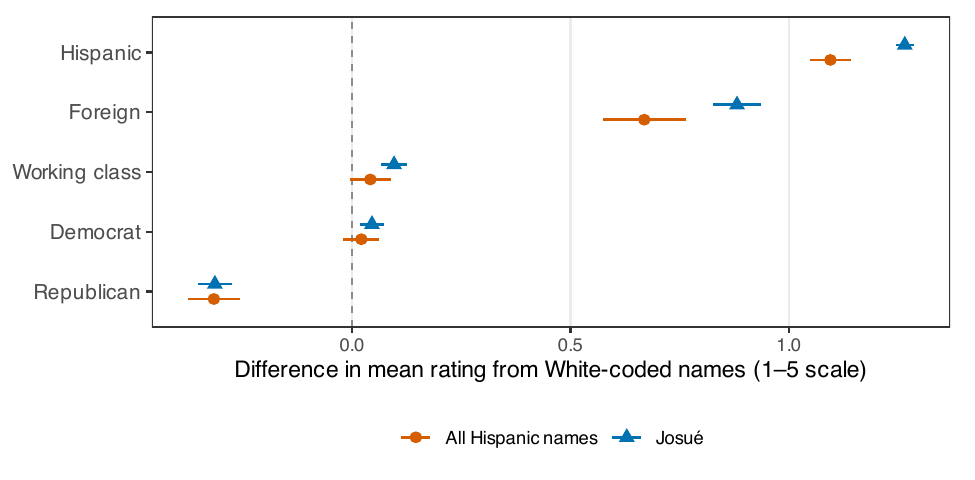}
\caption{Mean first-name trait ratings from \citet{elderhayes2023} (1--5 scale), as contrasts against White-coded names: all Hispanic-coded names (orange) and ``Josu\'e,'' the name \citet{zarateetal2024} use (blue). The ratings pool a separate respondent sample, include ``Josu\'e'' but not the Anglo name, and compare across names, so the figure shows that the name signals these traits, not how a respondent combines them. Intervals reflect variation across names --- for ``Josu\'e,'' the spread of the White-coded comparison names --- rather than rater-level sampling.}
\label{fig: name bundle}
\end{figure}

That the name conveys these traits suggests the cue moves the perceptions rather than leaving them fixed, so the NREC would bundle the movement and depart from the all-else-equal effect. The departure presumes that the name marks foreignness directly rather than that respondents infer foreignness from perceived race (\Cref{sec: perception-based}), and the two orderings cannot be told apart, not by ratings from a separate sample and not in principle, since the sequencing occurs inside the respondent's head \citep{robinsrichardson2010}. Nor need the orderings be exclusive: The departure follows if any part of the foreignness moves directly with the name.

\subsection{The Two Readings Together}

The reanalysis returns an answer about coethnic preference that the all-else-equal default conceals. Hispanic respondents do reward the Hispanic candidate, but the reward attaches to the candidate who is Hispanic in the way Hispanic candidates actually are --- speaking native-accented Spanish when they campaign in the language --- and the reward disappears when the comparison strips the language away. The common weighting evaluates the Anglo condition mainly at native Spanish, a profile few Spanish-campaigning Anglos fit; the Anglo candidate who speaks native Spanish is the counterpart of the Senegalese Muslim of the introduction, the atypical member of the category Anglo on whom an all-else-equal comparison must lean.

The two regimes reach the same lesson by different routes. What the audit literature might treat as a confound to hold fixed (\Cref{sec: differential exposure}), and what the cue-stability literature might treat as a failed manipulation (\Cref{sec: excludability}), may be the content of the category itself \citep{hukohler-hausmann2025}. The reanalysis kept the category as constituted and kept causal inference (\Cref{sec: debate}): The AEWR and the NREC let language move with the category, and both are well-defined causal estimands, with estimates and confidence intervals from the same experiment.

\section{Conclusion} \label{sec: conclusion}

This paper has argued that the literature's defenses of the all-else-equal design bundle an estimand commitment with a recovery condition. Unbundled, the same randomization recovers an entire family of race estimands --- each built from counterfactual effects at the unit level, not from disparities across people --- and the choice among that family's members is a claim about what a racial category is, not a concession on rigor. The ``Muslim effect'' of the opening pages was such a choice all along: Isolating religion from origin and language decided, before any data arrived, that the comparison would sit at the categories' edges rather than their centers. The application shows the choice governing the answer: On one experiment, Hispanic voters' coethnic preference is absent under the all-else-equal weighting and clear under the within-race weighting, because the preference lives in the language the category indexes. Three questions concern what the framework asks of empirical research once the estimand is a substantive choice.

The first question is how to choose among the members of the estimand family. The paper lays out grounds for the choice (\Cref{sec: substantive motivations}) but does not adjudicate among them, and develops the normative grounds least: The choice of contrast can be grounded in what one takes discrimination to be and in which differences across groups are invidious rather than benign \citep{hu2026}, a stance explicit in health-care policy research \citep{iom2002,mcguireetal2006} but rare in political science. The open task is the pairing itself: which conception of discrimination names which member of the estimand family, so that normative theory sets the target, empirical design recovers it, and each constrains the other rather than proceeding apart. %

The second question is what to do with a choice the data cannot settle. A study might report several members of the estimand family --- as the application does, setting the AEE beside the AEWR --- so readers see how a finding of discrimination depends on the conception of the category, much as a sensitivity analysis displays dependence on an unverifiable assumption. The study might instead commit to a single estimand and defend the conception that estimand embodies. Either path sets practical tasks: recovering the within-group distributions $q_z$ that the within-race and mixed estimands require (\Cref{app: distribution formalism}), deciding which attributes a category indexes and so belong in $\bm{v}$ (\Cref{sec: formal setup}), measuring perceived categories in the perception-based regime, and carrying the estimands to observational studies, where identification assumptions replace the randomization this paper assumes.

The third question reaches beyond discrimination to political representation \citep{pitkin1967,mansbridge1999}. When a representative is said to represent a racial group, is the group fixed by a category label or by the broader profile the label indexes? The question becomes concrete in defining the group's preference, for example the median opinion against which a legislator's votes are judged \citep{millerstokes1963,laxetal2019,rivera-burgos2025}: Is the median taken over the racial group with income and the rest held fixed, or over the group as constituted? No design can dissolve the choice. Some individuals, for example, are both Black and middle-class, so a counterfactual shift in Black opinion moves middle-class opinion with it; unlike race and income on a r\'{e}sum\'{e}, which a study can pair freely, the two opinions cannot be paired at will. The estimand family gives the choice precise form, with legislators as the decision-makers, roll-call votes as the decisions, and the constituency as the configuration, where $z$ encodes a position of the racial group of interest and $\bm{v}$ the analogous statistics for other groups.

This paper does not answer these questions. Its contribution is a framework for taking them up: The framework translates contested conceptions of race and other identity categories into distinct causal estimands that empirical studies can reliably recover. Each contested conception can then be combined with rigorous causal evidence rather than debated without it.

\newpage
\begin{singlespace}
\printbibliography
\end{singlespace}
\end{refsection}

\clearpage

\setcounter{section}{0}
\setcounter{equation}{0}
\setcounter{figure}{0}
\setcounter{table}{0}
\setcounter{footnote}{0}
\setcounter{assm}{0}
\setcounter{prop}{0}
\setcounter{defin}{0}
\setcounter{lem}{0}
\setcounter{cor}{0}
\setcounter{thm}{0}
\setcounter{algo}{0}
\setcounter{example}{0}
\renewcommand{\thesection}{S\arabic{section}}
\renewcommand{\theequation}{S\arabic{equation}}
\renewcommand{\thefigure}{S\arabic{figure}}
\renewcommand{\thetable}{S\arabic{table}}
\renewcommand{\theHsection}{S\arabic{section}}
\renewcommand{\theHequation}{S\arabic{equation}}
\renewcommand{\theHfigure}{S\arabic{figure}}
\renewcommand{\theHtable}{S\arabic{table}}
\renewcommand{\theHassm}{S\arabic{assm}}
\renewcommand{\theHprop}{S\arabic{prop}}
\renewcommand{\theHdefin}{S\arabic{defin}}
\renewcommand{\theHlem}{S\arabic{lem}}
\renewcommand{\theHcor}{S\arabic{cor}}
\renewcommand{\theHthm}{S\arabic{thm}}
\renewcommand{\theHalgo}{S\arabic{algo}}
\renewcommand{\theHexample}{S\arabic{example}}
\setcounter{tocdepth}{2}
\addtocontents{toc}{\protect\setcounter{tocdepth}{2}}

\begin{refsection}
\begin{center}
{\Large Supplementary Material for ``Which Effect of Race? Causal Inference without Holding All Else Equal''}\\[1.5ex]
\ifnum\blind=1\else Thomas Leavitt\fi
\end{center}

{\singlespacing
\setcounter{tocdepth}{2}
\tableofcontents}

\makeatletter
\renewcommand{\theassm}{S\arabic{assm}}
\renewcommand{\theprop}{S\arabic{prop}}
\renewcommand{\thecor}{S\arabic{cor}}
\renewcommand{\thelem}{S\arabic{lem}}
\renewcommand{\thethm}{S\arabic{thm}}
\renewcommand{\thedefin}{S\arabic{defin}}
\renewcommand{\theexample}{S\arabic{example}}
\makeatother

\doublespacing

\section{Formal Results}

\subsection{From decision-makers and decisions to units} \label{app: units sutva}

\Cref{sec: formal setup} of the main text indexes units directly and points here for the underlying structure. There is a set of decision-makers indexed by $j \in \{1, \ldots, J\}$, and each decision-maker $j$ faces $K_j$ decisions indexed by $k \in \{1, \ldots, K_j\}$, giving $N \coloneqq \sum_j K_j$ decision-maker--decision pairs. Under the exposure-based regime, decision-maker $j$ in decision $k$ encounters a configuration $\bm{t}_{jk} \in \mathcal{T}$. Let $\bm{t} \coloneqq (\bm{t}_{11}, \ldots, \bm{t}_{JK_J}) \in \mathcal{T}^N$ stack these configurations into a \textit{configuration assignment}. In principle, the potential outcome of the pair $(j,k)$ may depend on the entire assignment.

\begin{assm}[SUTVA] \label{assm: SUTVA}
For all decision-makers $j$, decisions $k$, and configuration assignments $\bm{t}, \bm{t}^{\prime} \in \mathcal{T}^N$, $y_{jk}(\bm{t}) = y_{jk}(\bm{t}^{\prime})$ whenever $\bm{t}_{jk} = \bm{t}^{\prime}_{jk}$.
\end{assm}

\noindent \Cref{assm: SUTVA} rules out interference between units and carryover across a decision-maker's own decisions, encompassing the ``stability and no carryover effects'' assumption of \citet[][p.~8]{hainmuelleretal2014}. This assumption licenses reindexing the pairs $(j,k)$ as units $i \in \{1, \ldots, N\}$ whose potential outcomes depend only on their own configurations, defining the function $y_i: \mathcal{T} \to \R$ used throughout the main text.

Under the perception-based regime, only the cue is assigned. Let $\bm{w} \coloneqq (w_{11}, \ldots, w_{JK_J})\allowbreak \in \{0, 1\}^N$ stack the cues into a \textit{cue assignment}. The analogue of \Cref{assm: SUTVA} applies to the response, the racial perception, and the nonracial perceptions at once.

\begin{assm}[SUTVA for the cue] \label{assm: SUTVA cue}
For all decision-makers $j$, decisions $k$, and cue assignments $\bm{w}, \bm{w}^{\prime} \in \{0, 1\}^N$,
\begin{align*}
y_{jk}(\bm{w}) = y_{jk}(\bm{w}^{\prime}), \quad z_{jk}(\bm{w}) = z_{jk}(\bm{w}^{\prime}), \quad \text{and} \quad \bm{v}_{jk}(\bm{w}) = \bm{v}_{jk}(\bm{w}^{\prime})
\end{align*}
whenever $w_{jk} = w_{jk}^{\prime}$.
\end{assm}

\noindent Reindexing as before yields the unit-level potential outcomes $z_i(w)$, $\bm{v}_i(w)$, and $y_i(w)$ of \Cref{sec: perception-based}.

\subsection{The gap between two estimands} \label{sec: gap}

A single identity, stated as \eqref{eq: gap identity} in \Cref{prop: gap} below, gives the gap between any two members of the estimand family. The identity is purely algebraic: \Cref{assm: SUTVA} makes the estimands well defined, but the identity itself invokes no assumption beyond the estimands' definitions.

\begin{prop}[Gap between two members of the estimand family] \label{prop: gap}
Write
\begin{align*}
\theta(g_0, g_1) \coloneqq \frac{1}{N} \sum_{i=1}^N \sum_{\bm{v} \in \mathcal{V}} \left[y_i(1, \bm{v})\, g_1(\bm{v}) - y_i(0, \bm{v})\, g_0(\bm{v})\right]
\end{align*}
for the member of the estimand family that weights nonracial profiles by $g_0$ under racial condition $z = 0$ and by $g_1$ under racial condition $z = 1$ --- the estimand targeted by the plug-in estimator of the estimand family, \eqref{eq: family estimator main} of the main text --- and $\bar{y}(z, \bm{v}) \coloneqq (1/N) \sum_{i=1}^N y_i(z, \bm{v})$. For any two members of the estimand family,
\begin{align} \label{eq: gap identity}
\theta(g_0, g_1) - \theta(g_0^{\prime}, g_1^{\prime}) = \sum_{\bm{v} \in \mathcal{V}} \left\{\bar{y}(1, \bm{v})\left[g_1(\bm{v}) - g_1^{\prime}(\bm{v})\right] - \bar{y}(0, \bm{v})\left[g_0(\bm{v}) - g_0^{\prime}(\bm{v})\right]\right\}.
\end{align}
\end{prop}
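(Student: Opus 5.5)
The identity in \eqref{eq: gap identity} is purely algebraic, so I would prove it by direct manipulation of the definition of $\theta(g_0, g_1)$, relying only on the finiteness of $N$ and $\mathcal{V}$. \Cref{assm: SUTVA} serves only to make each $y_i(z, \bm{v})$ a well-defined number.

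The first step rewrites $\theta(g_0, g_1)$ in terms of the average potential outcomes $\bar{y}(z, \bm{v})$. Because both sums are finite, I can interchange the order of summation over $i$ and over $\bm{v}$. The weights $g_z(\bm{v})$ do not depend on $i$, since the weighting is common across units as stipulated in \Cref{sec: anatomy estimand}, so they factor out of the average over units. This gives
\begin{align*}
\theta(g_0, g_1) = \sum_{\bm{v} \in \mathcal{V}} \left[\bar{y}(1, \bm{v})\, g_1(\bm{v}) - \bar{y}(0, \bm{v})\, g_0(\bm{v})\right].
\end{align*}
Applying the same representation to $\theta(g_0^{\prime}, g_1^{\prime})$ yields the analogous expression with primed weights.

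The second step subtracts the two expressions term by term over the common index set $\mathcal{V}$. It then collects the $z = 1$ terms, $\bar{y}(1, \bm{v})\,[g_1(\bm{v}) - g_1^{\prime}(\bm{v})]$, and the $z = 0$ terms, $\bar{y}(0, \bm{v})\,[g_0(\bm{v}) - g_0^{\prime}(\bm{v})]$, using linearity of finite sums. The result is exactly \eqref{eq: gap identity}.

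No step is a real obstacle; the only point to watch is that the weights are unit-invariant, which is what lets them pass through the average over $i$. If unit-specific weights were allowed, the identity would instead involve $\frac{1}{N}\sum_i y_i(z,\bm{v})\,[g_{z,i}(\bm{v}) - g^{\prime}_{z,i}(\bm{v})]$, so I would note explicitly that the commonality assumption is being used. The special cases discussed in the main text follow immediately from the identity. If $g_0 = g_1$ and $g_0^{\prime} = g_1^{\prime}$ with matching differences, the gap depends on the differences $\bar{y}(1,\bm{v}) - \bar{y}(0,\bm{v})$. If $\bar{y}(z, \cdot)$ is constant in $\bm{v}$, each bracketed sum vanishes because the weights in every distribution sum to one.
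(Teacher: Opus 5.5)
Your proof is correct and is essentially the paper's argument: both use only linearity of finite sums, the exchange of the sums over $i$ and $\bm{v}$, and the fact that the weights do not depend on $i$. The only difference is order, and it is immaterial: you first rewrite each estimand in terms of the averages $\bar{y}(z,\bm{v})$ and then subtract, whereas the paper subtracts at the unit level and then averages over units.
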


\begin{proof}
The proof subtracts the two estimands term by term and then exchanges the order of the two finite sums, so that each difference of weights multiplies an average potential outcome.

Applying the definition of $\theta$ to each pair of weights,
\begin{align*}
\theta(g_0, g_1) - \theta(g_0^{\prime}, g_1^{\prime}) & = \dfrac{1}{N} \sum_{i=1}^N \sum_{\bm{v} \in \mathcal{V}} \left[y_i(1, \bm{v})\, g_1(\bm{v}) - y_i(0, \bm{v})\, g_0(\bm{v})\right] \\
& \qquad - \dfrac{1}{N} \sum_{i=1}^N \sum_{\bm{v} \in \mathcal{V}} \left[y_i(1, \bm{v})\, g_1^{\prime}(\bm{v}) - y_i(0, \bm{v})\, g_0^{\prime}(\bm{v})\right] \\
& = \dfrac{1}{N} \sum_{i=1}^N \sum_{\bm{v} \in \mathcal{V}} \left\{y_i(1, \bm{v})\left[g_1(\bm{v}) - g_1^{\prime}(\bm{v})\right] - y_i(0, \bm{v})\left[g_0(\bm{v}) - g_0^{\prime}(\bm{v})\right]\right\}
\end{align*}
where the second equality combines the two sums term by term and, within each $(i, \bm{v})$ term, factors $y_i(1, \bm{v})$ and $y_i(0, \bm{v})$ out of their respective differences of weights.
Both index sets are finite, so the sums over $i$ and $\bm{v}$ may be exchanged, and the average $(1/N) \sum_{i}$ distributes over the potential outcomes because the differences of weights do not depend on $i$:
\begin{align*}
\theta(g_0, g_1) - \theta(g_0^{\prime}, g_1^{\prime}) & = \sum_{\bm{v} \in \mathcal{V}} \Bigg\{\left[\dfrac{1}{N} \sum_{i=1}^N y_i(1, \bm{v})\right]\left[g_1(\bm{v}) - g_1^{\prime}(\bm{v})\right] \\
& \qquad \qquad - \left[\dfrac{1}{N} \sum_{i=1}^N y_i(0, \bm{v})\right]\left[g_0(\bm{v}) - g_0^{\prime}(\bm{v})\right]\Bigg\}.
\end{align*}
Substituting $\bar{y}(z, \bm{v}) = (1/N)\sum_{i} y_i(z, \bm{v})$ then yields the identity \eqref{eq: gap identity}.
\end{proof}

The difference between the AEWR and the AEE is the instance of the identity \eqref{eq: gap identity} with $(g_0, g_1) = (q_0, q_1)$ and $(g_0^{\prime}, g_1^{\prime}) = (\rho, \rho)$. A mixed contrast enters the identity \eqref{eq: gap identity} through the mixed weights $g_z(\bm{v}) = \rho(\bm{v}_{\mathrm{fix}})\, q_z(\bm{v}_{\mathrm{free}} \given \bm{v}_{\mathrm{fix}})$, stated as \eqref{eq: mixed weights} in \Cref{sec: mixed contrast formalism}. Substituting the mixed weights on one side of \eqref{eq: gap identity} gives the gap between a mixed effect and the AEE or the AEWR, and substituting the mixed weights of two different partitions on the two sides gives the gap between two mixed effects.

Two members of the estimand family coincide when the right-hand side of the identity \eqref{eq: gap identity} vanishes. The right-hand side vanishes when the weights agree, $g_z = g_z^{\prime}$ for $z \in \{0, 1\}$, and also when $\bar{y}(z, \bm{v})$ does not depend on $\bm{v}$. Writing $\bar{y}(z)$ for the common value and factoring the common value out of the sum,
\begin{align*}
\sum_{\bm{v} \in \mathcal{V}} \bar{y}(z, \bm{v})\left[g_z(\bm{v}) - g_z^{\prime}(\bm{v})\right] = \bar{y}(z) \left[\sum_{\bm{v} \in \mathcal{V}} g_z(\bm{v}) - \sum_{\bm{v} \in \mathcal{V}} g_z^{\prime}(\bm{v})\right] = \bar{y}(z)\,(1 - 1) = 0,
\end{align*}
because $g_z$ and $g_z^{\prime}$ are probability distributions on $\mathcal{V}$ and therefore each normalizes to one: $\sum_{\bm{v} \in \mathcal{V}} g_z(\bm{v}) = \sum_{\bm{v} \in \mathcal{V}} g_z^{\prime}(\bm{v}) = 1$.

The AEWR-versus-AEE pairing gives the cleanest reading of the gap as a failure of two operations to commute, differencing across racial conditions and averaging over nonracial profiles: The AEE differences at each shared profile and then averages over $\rho$, whereas the AEWR averages each racial condition over its own conditional $q_z$ and then differences the two race-specific averages. The two estimands agree exactly when the two operations commute.

\subsection{The double duty of a profile-blind assignment} \label{app: double duty proof}

\Cref{sec: differential exposure} of the main text states \Cref{prop: two blindnesses} and points here for the proof of part \textit{(ii)}; part \textit{(i)} is the unbiasedness of the plug-in estimator for the targeted member of the estimand family, proved in \Cref{app: randomization properties}.

\begin{proof}[Proof of part \textit{(ii)} of \Cref{prop: two blindnesses}]
The weighting read off the assignment is $g_z(\bm{v}) \coloneqq \Pr(\bm{V} = \bm{v} \given Z = z)$ for each racial condition $z \in \{0, 1\}$, and profile-blindness (\Cref{def: blindnesses}) is the condition that $\Pr(Z = z \given \bm{V} = \bm{v})$ does not vary with $\bm{v}$. The weightings are well defined because of the proposition's hypothesis that $p(z, \bm{v}) > 0$ on every profile with positive weight under the target. The target's weights are probability distributions, so each racial condition contains at least one profile with positive weight, and therefore $\Pr(Z = z) = \sum_{\bm{v}} p(z, \bm{v}) > 0$ for each racial condition $z$. Conditioning on $\bm{V} = \bm{v}$ is understood at profiles with $\Pr(\bm{V} = \bm{v}) > 0$, and at profiles with $\Pr(\bm{V} = \bm{v}) = 0$ both weightings place zero weight, $g_1(\bm{v}) = g_0(\bm{v}) = 0$.

Part \textit{(ii)} asserts an equivalence, and the proof establishes the equivalence one implication at a time. The first implication is that profile-blindness forces the weightings read off the two racial conditions to coincide; the second is that coinciding weightings force profile-blindness. Both implications rest on two elementary facts. The first fact is that a joint probability factors through either of its conditionals,
\begin{align*}
\Pr(A, B) = \Pr(A \given B) \Pr(B) = \Pr(B \given A) \Pr(A),
\end{align*}
which is the definition of conditional probability rearranged; the proof below refers to the factorization as the multiplication rule. The second fact is the law of total probability. Each implication also establishes, along the way, the same intermediate fact --- derived within each argument, not presupposed --- that a weighting common to the two racial conditions can only be the marginal profile distribution $\Pr(\bm{V} = \bm{v})$.

Suppose first that the assignment is profile-blind, so that $\Pr(Z = z \given \bm{V} = \bm{v})$ equals a constant $c_z$ at every profile with $\Pr(\bm{V} = \bm{v}) > 0$. By the law of total probability, the constant equals the marginal probability of the racial condition,
\begin{align*}
\Pr(Z = z) = \sum_{\bm{v}} \Pr(Z = z \given \bm{V} = \bm{v}) \Pr(\bm{V} = \bm{v}) = c_z \sum_{\bm{v}} \Pr(\bm{V} = \bm{v}) = c_z.
\end{align*}
Then, at every profile with $\Pr(\bm{V} = \bm{v}) > 0$, writing the joint probability by the multiplication rule in each of its two orders,
\begin{align*}
g_z(\bm{v}) &= \dfrac{\Pr(Z = z, \bm{V} = \bm{v})}{\Pr(Z = z)} = \dfrac{\Pr(Z = z \given \bm{V} = \bm{v}) \Pr(\bm{V} = \bm{v})}{\Pr(Z = z)} \\
&= \dfrac{c_z \Pr(\bm{V} = \bm{v})}{c_z} = \Pr(\bm{V} = \bm{v}),
\end{align*}
where the constant $c_z$ cancels between the numerator and the denominator. The right-hand side does not depend on $z$, so $g_1(\bm{v}) = g_0(\bm{v}) = \Pr(\bm{V} = \bm{v})$ at every profile. The weightings read off the two racial conditions therefore coincide, and the common weighting equals the marginal profile distribution, which is the intermediate fact.

Suppose conversely that the weighting read off the assignment is common across the two racial conditions, $g_1 = g_0 \eqqcolon \rho$. The law of total probability over the two racial conditions shows that the common weighting $\rho$ is the marginal profile distribution,
\begin{align*}
\Pr(\bm{V} = \bm{v}) = \sum_{z \in \{0, 1\}} \Pr(\bm{V} = \bm{v} \given Z = z) \Pr(Z = z) = \rho(\bm{v}) \sum_{z \in \{0, 1\}} \Pr(Z = z) = \rho(\bm{v}),
\end{align*}
where the middle equality factors $\rho(\bm{v})$ out of the sum because the common weighting does not depend on the racial condition $z$. Then, at every profile with $\Pr(\bm{V} = \bm{v}) > 0$, again writing the joint probability by the multiplication rule in each of its two orders,
\begin{align*}
\Pr(Z = z \given \bm{V} = \bm{v}) &= \dfrac{\Pr(Z = z, \bm{V} = \bm{v})}{\Pr(\bm{V} = \bm{v})} = \dfrac{g_z(\bm{v}) \Pr(Z = z)}{\Pr(\bm{V} = \bm{v})} \\
&= \dfrac{\rho(\bm{v}) \Pr(Z = z)}{\rho(\bm{v})} = \Pr(Z = z),
\end{align*}
where the common weight $\rho(\bm{v})$ cancels between the numerator and the denominator. The right-hand side does not vary with $\bm{v}$, so the assignment is profile-blind.
\end{proof}

\noindent For a researcher who reads the estimand's weights off the assignment, as part \textit{(ii)} supposes, the intermediate fact of the proof has a design consequence. Each weighting $g_z(\bm{v}) = \Pr(\bm{V} = \bm{v} \given Z = z)$ is the distribution of nonracial profiles within racial condition $z$, and the marginal profile distribution mixes the two weightings in proportion to the probabilities of the racial conditions,
\begin{align*}
\Pr(\bm{V} = \bm{v}) = g_0(\bm{v})\, \Pr(Z = 0) + g_1(\bm{v})\, \Pr(Z = 1),
\end{align*}
by the law of total probability, because each configuration falls under exactly one of the two racial conditions. When $g_1 = g_0 = \rho$, the mixture of the two equal weightings is
\begin{align*}
\rho(\bm{v}) \left[\Pr(Z = 0) + \Pr(Z = 1)\right] = \rho(\bm{v}),
\end{align*}
so the common weighting coincides with the marginal profile distribution --- the distribution written as $\rho$ in \Cref{fig: assignment arrays} of the main text. For the researcher who reads the estimand's weights off the assignment, selecting the common weighting $\rho$ and selecting the assignment's marginal distribution over nonracial profiles are therefore one choice, not two.

\subsection{Identification of the NREC in the perception-based regime} \label{app: NREC identification}

\Cref{sec: excludability} of the main text states \Cref{prop: NREC identification} and \Cref{prop: cue stability decomposition} and points here for the proofs and for the two standard conditions on how the cue would move each unit's perceived race, that is, on the potential outcomes $z_i(0)$ and $z_i(1)$ of the cue.

The first condition rules out defiers, the units that would invert the cue. With Black and White as the two racial conditions of the running example, a defier would perceive the Black cue as White and the White cue as Black.
\begin{assm}[No defiers] \label{assm: no defiers}
For all units $i \in \{1, \ldots, N\}$, $(z_i(0), z_i(1)) \neq (1, 0)$.
\end{assm}
\noindent To illustrate what \Cref{assm: no defiers} permits and what the assumption rules out, consider the name-based racial cue of the running example, with DeShawn coded to read as Black and Connor as White (\Cref{sec: perception-based} of the main text). A decision-maker might read ``Connor'' as Black rather than White, but would then plausibly read ``DeShawn'' as Black as well; a decision-maker who read ``DeShawn'' as White would likewise plausibly read ``Connor'' as White. Either decision-maker would fail to register the cue's intended racial contrast, a failure that \Cref{assm: no defiers} permits; the assumption rules out only the decision-maker who would invert the contrast. Once defiers are ruled out, the denominator of \eqref{eq: NREC iden ratio} of the main text --- the cue's average effect on perceived race --- equals the share of compliers, with no defiers to offset them.

The second condition is a positive complier share. The cue must move at least one unit's perceived race in the direction the cue intends --- in the running example, toward White under the White cue and toward Black under the Black cue --- and movement alone does not suffice, because a defier's perceived race also moves, in the opposite direction.
\begin{assm}[Existence of compliers] \label{assm: compliers exist}
At least one of the $i \in \{1, \ldots, N\}$ units has $z_i(0) = 0$ and $z_i(1) = 1$, i.e., $\mathcal{C} \neq \emptyset$.
\end{assm}
\noindent Without such a complier the complier share is zero; the denominator of \eqref{eq: NREC iden ratio} then vanishes and the NREC, an average over $\mathcal{C}$, is undefined.

Under \Cref{assm: no defiers} and \Cref{assm: compliers exist}, together with the hypotheses that \Cref{prop: NREC identification} states in the main text --- the cue-level SUTVA (\Cref{assm: SUTVA cue}), perception sufficiency, and \Cref{assm: no isolated nonracial shift} --- the proposition follows.

\begin{proof}[Proof of \Cref{prop: NREC identification}]
The proof shows that non-compliers contribute zero to both the numerator and the denominator of the ratio in \eqref{eq: NREC iden ratio}, so that the numerator and the denominator each reduce to a sum over compliers, and the ratio of those two sums is the NREC.

By \Cref{assm: no defiers}, each unit belongs to one of three types: a complier ($z_i(0) = 0, z_i(1) = 1$), an always-taker ($z_i(0) = z_i(1) = 1$), or a never-taker ($z_i(0) = z_i(1) = 0$). For non-compliers, $z_i(0) = z_i(1)$, and by \Cref{assm: no isolated nonracial shift}, $\bm{v}_i(0) = \bm{v}_i(1)$ as well, so $y_i\left(z_i(1), \bm{v}_i(1)\right) - y_i\left(z_i(0), \bm{v}_i(0)\right) = 0$ and $z_i(1) - z_i(0) = 0$ for every non-complier. The numerator and denominator of \eqref{eq: NREC iden ratio} therefore reduce to sums over compliers:
\begin{align*}
\frac{1}{N} \sum_{i = 1}^N \left[y_i\left(z_i(1), \bm{v}_i(1)\right) - y_i\left(z_i(0), \bm{v}_i(0)\right)\right] & = \frac{1}{N} \sum_{i \in \mathcal{C}} \left[y_i\left(1, \bm{v}_i(1)\right) - y_i\left(0, \bm{v}_i(0)\right)\right], \\
\frac{1}{N} \sum_{i = 1}^N \left[z_i(1) - z_i(0)\right] & = \frac{\abs{\mathcal{C}}}{N}.
\end{align*}
By \Cref{assm: compliers exist}, $\abs{\mathcal{C}} \geq 1$, so the denominator $\abs{\mathcal{C}}/N$ is positive and the ratio is well defined. Taking the ratio of the two sums over compliers,
\begin{align*}
\dfrac{\frac{1}{N} \sum_{i \in \mathcal{C}} \left[y_i\left(1, \bm{v}_i(1)\right) - y_i\left(0, \bm{v}_i(0)\right)\right]}{\abs{\mathcal{C}}/N}
& = \dfrac{1}{\abs{\mathcal{C}}} \sum_{i \in \mathcal{C}} \left[y_i\left(1, \bm{v}_i(1)\right) - y_i\left(0, \bm{v}_i(0)\right)\right],
\end{align*}
and applying the definition of the NREC in \eqref{eq: complier effect} of the main text yields the result.
\end{proof}

The numerator of \eqref{eq: NREC iden ratio} --- the cue's average effect on the outcome over all $N$ units --- is a lower bound in magnitude on the NREC. In practice, many cue-based studies cannot compute the ratio: The denominator --- the cue's average effect on perceived race --- requires perceived race to be measured on the decision-makers who supply the outcome, and most studies measure perceived race, if at all, in a separate sample \citep{gaddis2017a, landgraveweller2022, elderhayes2023}. Such studies can estimate the numerator alone, and the numerator alone is still informative.
\begin{cor}[The reduced-form effect bounds the NREC] \label{cor: NREC lower bound}
Under the conditions of \Cref{prop: NREC identification}, the numerator of \eqref{eq: NREC iden ratio} equals $(\abs{\mathcal{C}}/N) \, \textrm{NREC}$, where the complier share $\abs{\mathcal{C}}/N$ lies in $(0, 1]$.
\end{cor}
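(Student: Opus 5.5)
The corollary follows by reusing the first display in the proof of \Cref{prop: NREC identification} rather than proving anything new. That proof establishes, under \Cref{assm: SUTVA cue}, \Cref{assm: no defiers}, \Cref{assm: compliers exist}, \Cref{assm: PS}, and \Cref{assm: no isolated nonracial shift}, that every non-complier contributes zero to the numerator of \eqref{eq: NREC iden ratio}. With defiers excluded, the non-compliers are the always-takers and never-takers, for whom $z_i(0)=z_i(1)$ and hence $\bm{v}_i(0)=\bm{v}_i(1)$. The numerator therefore reduces to
\begin{align*}
\frac{1}{N}\sum_{i\in\mathcal{C}}\left[y_i\left(1,\bm{v}_i(1)\right)-y_i\left(0,\bm{v}_i(0)\right)\right].
\end{align*}

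The next step multiplies and divides by $\abs{\mathcal{C}}$, which is legitimate because $\abs{\mathcal{C}}\geq 1$ by \Cref{assm: compliers exist}. This writes the sum as $(\abs{\mathcal{C}}/N)$ times the complier average. By the definition in \eqref{eq: complier effect}, that complier average is exactly the NREC, so the numerator equals $(\abs{\mathcal{C}}/N)\,\textrm{NREC}$.

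For the range of the complier share, \Cref{assm: compliers exist} gives $\abs{\mathcal{C}}\geq 1$, so the share is positive. Since $\mathcal{C}\subseteq\{1,\ldots,N\}$, we also have $\abs{\mathcal{C}}\leq N$. Hence $\abs{\mathcal{C}}/N\in(0,1]$.

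The interpretive claim stated before the corollary then follows directly. Multiplying by a factor in $(0,1]$ preserves sign and cannot increase absolute value, so the numerator shares the NREC's sign and is no larger in magnitude. There is no real obstacle; the only point needing care is to cite the non-complier cancellation, which relies on \Cref{assm: no isolated nonracial shift} and the no-defiers condition, rather than re-deriving it.
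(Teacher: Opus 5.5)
Your proof is correct and essentially the paper's. The only difference is the route to the identity: the paper shows the denominator of \eqref{eq: NREC iden ratio} equals $\abs{\mathcal{C}}/N$ and multiplies the conclusion of \Cref{prop: NREC identification} (ratio $=$ NREC) by it, whereas you read the identity directly off the reduction of the numerator to a complier sum inside that proposition's proof, which bypasses the denominator; your argument for $\abs{\mathcal{C}}/N \in (0,1]$ matches the paper's exactly.
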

\noindent \Cref{cor: NREC lower bound} follows from a single equation. Writing $\Delta_Y$ for the numerator of the ratio in \eqref{eq: NREC iden ratio}, the equation is $\Delta_Y = (\abs{\mathcal{C}}/N)\, \textrm{NREC}$. In words, the numerator $\Delta_Y$ averages the cue's effects on the outcome over all $N$ units, and every non-complier contributes zero to that average, so the average dilutes the compliers' average effect by the factor $\abs{\mathcal{C}}/N$, the complier share.

The proof below derives the equation $\Delta_Y = (\abs{\mathcal{C}}/N)\, \textrm{NREC}$ and records three of the equation's consequences. First, because the complier share is at most one, the dilution of the compliers' average effect cannot enlarge that effect, so $\abs{\Delta_Y} \leq \abs{\textrm{NREC}}$. Second, the two magnitudes $\abs{\Delta_Y}$ and $\abs{\textrm{NREC}}$ are equal exactly when the dilution changes nothing, either because every unit is a complier ($\abs{\mathcal{C}} = N$) or because there is no effect to dilute ($\textrm{NREC} = 0$). Third, because the complier share is positive, dividing $\Delta_Y$ by the complier share undoes the dilution and recovers the NREC, $\textrm{NREC} = (N / \abs{\mathcal{C}})\, \Delta_Y$, with the factor $N / \abs{\mathcal{C}}$ confined to the finite set $\{N/k : k = 1, \ldots, N\}$.

\begin{proof}[Proof of \Cref{cor: NREC lower bound}]
The proof shows that the denominator of the ratio in \eqref{eq: NREC iden ratio} equals the complier share and combines that value with \Cref{prop: NREC identification} to obtain $\Delta_Y = (\abs{\mathcal{C}}/N)\, \textrm{NREC}$; the three consequences then follow from $0 < \abs{\mathcal{C}}/N \leq 1$.

Write $\Delta_Y \coloneqq \frac{1}{N} \sum_{i = 1}^N \left[y_i\left(z_i(1), \bm{v}_i(1)\right) - y_i\left(z_i(0), \bm{v}_i(0)\right)\right]$ for the numerator of \eqref{eq: NREC iden ratio} and $\Delta_Z \coloneqq \frac{1}{N} \sum_{i = 1}^N \left[z_i(1) - z_i(0)\right]$ for its denominator. As established in the proof of \Cref{prop: NREC identification}, each unit is a complier, always-taker, or never-taker (\Cref{assm: no defiers}), contributing $z_i(1) - z_i(0)$ equal to $1$, $0$, and $0$ respectively, so $\Delta_Z = \abs{\mathcal{C}}/N$. \Cref{prop: NREC identification} gives $\Delta_Y / \Delta_Z = \textrm{NREC}$, hence $\Delta_Y = \Delta_Z\, \textrm{NREC} = (\abs{\mathcal{C}}/N)\, \textrm{NREC}$.

By \Cref{assm: compliers exist}, $\abs{\mathcal{C}} \geq 1$, and $\abs{\mathcal{C}} \leq N$ since $\mathcal{C} \subseteq \{1, \ldots, N\}$; therefore $\abs{\mathcal{C}}/N \in (0, 1]$.
Taking absolute values of both sides of $\Delta_Y = (\abs{\mathcal{C}}/N)\, \textrm{NREC}$,
\begin{align*}
\abs{\Delta_Y} = \dfrac{\abs{\mathcal{C}}}{N}\, \abs{\textrm{NREC}} \leq \abs{\textrm{NREC}},
\end{align*}
because $\abs{\mathcal{C}}/N \leq 1$.
The inequality holds with equality exactly when $\abs{\mathcal{C}}/N = 1$ --- equivalently, $\abs{\mathcal{C}} = N$ --- or when $\textrm{NREC} = 0$. Finally, because $\abs{\mathcal{C}}/N > 0$, dividing both sides of $\Delta_Y = (\abs{\mathcal{C}}/N)\, \textrm{NREC}$ by the complier share recovers the NREC from the numerator alone, $\textrm{NREC} = (N / \abs{\mathcal{C}})\, \Delta_Y$; as $\abs{\mathcal{C}}$ ranges over the integers $\{1, \ldots, N\}$, the factor $N / \abs{\mathcal{C}}$ ranges over $\{N/k : k = 1, \ldots, N\}$.
\end{proof}

An estimator that uses the numerator alone therefore shares the NREC's sign but understates the NREC's magnitude: The estimator is conservative, guaranteed never to overstate. \citet{kaufmanetal2026} suggest that the understatement may be substantial, and a researcher could treat the unknown complier share as a sensitivity parameter, reassessing the implied NREC across the share's possible values, analogous to the sensitivity analyses in \citet{leavittrivera-burgos2024,leavittrivera-burgos2026}.

Finally, \Cref{prop: cue stability decomposition} of the main text decomposes cue stability into two conjuncts and assigns a role to each conjunct of the decomposition.

\begin{proof}[Proof of \Cref{prop: cue stability decomposition}]
The proof proceeds in three parts: The partition of units into compliers and non-compliers splits cue stability into the two conjuncts; conjunct (i) is then shown to be necessary and sufficient for the ratio to equal the NREC; and, given conjunct (i), conjunct (ii) is shown to be necessary and sufficient for the NREC to equal the AEE among compliers. Each sufficiency claim holds for all potential outcomes, and each necessity claim is established by constructing potential outcomes under which the conjunct fails and the corresponding equality fails with it.

By \Cref{assm: no defiers}, the units partition into compliers ($z_i(0) = 0$, $z_i(1) = 1$) and non-compliers ($z_i(0) = z_i(1)$). Cue stability requires $\bm{v}_i(0) = \bm{v}_i(1)$ for every unit, so cue stability holds if and only if the equality $\bm{v}_i(0) = \bm{v}_i(1)$ holds on each cell of the partition: on non-compliers, which is \Cref{assm: no isolated nonracial shift} of the main text, conjunct (i), and on compliers, which is conjunct (ii).

Sufficiency of conjunct (i) for the ratio in \eqref{eq: NREC iden ratio} to equal the NREC is \Cref{prop: NREC identification}. \Cref{prop: NREC identification} concludes that the ratio equals the NREC for all potential outcomes satisfying the proposition's assumptions, so, to show that conjunct (i) is necessary for that conclusion, it suffices to construct one set of potential outcomes that satisfies every assumption except conjunct (i) and under which the ratio differs from the NREC. The argument below constructs such a set of potential outcomes.

Suppose that conjunct (i) fails for some non-complier $j$, so that $\bm{v}_j(0) \neq \bm{v}_j(1)$. Consider potential outcomes with $y_j\left(z_j(0), \bm{v}_j(1)\right) - y_j\left(z_j(0), \bm{v}_j(0)\right) = \kappa \neq 0$, with every complier's contrast equal to zero, and with the corresponding term $y_i\left(z_i(0), \bm{v}_i(1)\right) - y_i\left(z_i(0), \bm{v}_i(0)\right)$ equal to zero for every non-complier $i \neq j$. The denominator of \eqref{eq: NREC iden ratio} still equals $\abs{\mathcal{C}}/N > 0$, while the numerator equals $\kappa/N \neq 0$ because unit $j$'s term is the only nonzero term, so the ratio equals $\kappa/\abs{\mathcal{C}} \neq 0 = \textrm{NREC}$.

The third part of the proof takes conjunct (i) as given and shows that conjunct (ii) is necessary and sufficient for the NREC to equal the AEE among compliers. Given conjunct (i), the ratio in \eqref{eq: NREC iden ratio} equals the NREC by the sufficiency just established, so recovery is settled, and the remaining question is whether the NREC equals the AEE among compliers --- the average over $\mathcal{C}$ of the all-else-equal contrast evaluated, for each complier, at the nonracial perception the complier would hold under cue value $0$, $\frac{1}{\abs{\mathcal{C}}} \sum_{i \in \mathcal{C}} \left[y_i\left(1, \bm{v}_i(0)\right) - y_i\left(0, \bm{v}_i(0)\right)\right]$. Subtracting the AEE among compliers from the NREC in \eqref{eq: complier effect} term by term,
\begin{align*}
& \dfrac{1}{\abs{\mathcal{C}}} \sum_{i \in \mathcal{C}} \left[y_i\left(1, \bm{v}_i(1)\right) - y_i\left(0, \bm{v}_i(0)\right)\right] - \dfrac{1}{\abs{\mathcal{C}}} \sum_{i \in \mathcal{C}} \left[y_i\left(1, \bm{v}_i(0)\right) - y_i\left(0, \bm{v}_i(0)\right)\right] \\
& = \dfrac{1}{\abs{\mathcal{C}}} \sum_{i \in \mathcal{C}} \left\{\left[y_i\left(1, \bm{v}_i(1)\right) - y_i\left(0, \bm{v}_i(0)\right)\right] - \left[y_i\left(1, \bm{v}_i(0)\right) - y_i\left(0, \bm{v}_i(0)\right)\right]\right\} \\
& = \dfrac{1}{\abs{\mathcal{C}}} \sum_{i \in \mathcal{C}} \left[y_i\left(1, \bm{v}_i(1)\right) - y_i\left(1, \bm{v}_i(0)\right)\right],
\end{align*}
where the first equality combines the two sums term by term and the second cancels the terms $y_i\left(0, \bm{v}_i(0)\right)$ within each summand. The difference between the NREC and the AEE among compliers therefore equals the final sum, and both directions of the third part read off that sum.

To establish sufficiency, suppose conjunct (ii) holds. Every term of the final sum vanishes because $\bm{v}_i(1) = \bm{v}_i(0)$ for each complier $i \in \mathcal{C}$, so the NREC equals the AEE among compliers for all potential outcomes.

To establish necessity, suppose conjunct (ii) fails for some complier $j$, so that $\bm{v}_j(1) \neq \bm{v}_j(0)$; a counterexample analogous to the one for conjunct (i) then breaks the equality. Consider potential outcomes with $y_j\left(1, \bm{v}_j(1)\right) - y_j\left(1, \bm{v}_j(0)\right) \neq 0$ and with the corresponding terms of all other compliers equal to zero. The final sum is then nonzero, so the NREC differs from the AEE among compliers, and the equality cannot hold for all potential outcomes without conjunct (ii).
\end{proof}

\paragraph{Why conjunct (ii) has no classical counterpart.} The identification of the NREC in \Cref{prop: NREC identification} has the same structure as the identification of local average treatment effects \citep{imbensangrist1994, angristetal1996, gerbergreen2012}: The cue $w$ is the instrument, and the perceived configuration $(z_i(w), \bm{v}_i(w))$ --- racial and nonracial components together --- is the treatment. Yet the classical treatment of instrumental variables demands no condition like conjunct \textit{(ii)}, and the reason is the dimension of the treatment. In the classical setting the treatment is a single variable, the analogue of perceived race $z_i(w)$ alone, so compliance fully specifies a complier's treatment at both values of the instrument, $z_i(0) = 0$ and $z_i(1) = 1$, and nothing about what compliers would receive is left to assume. Under the perception-based regime, compliance is defined by the racial component alone, so the same specification, $z_i(0) = 0$ and $z_i(1) = 1$, leaves the nonracial component $\bm{v}_i(w)$ of what compliers would perceive unspecified. Conjunct \textit{(ii)} supplies the specification, $\bm{v}_i(0) = \bm{v}_i(1)$ for each complier $i \in \mathcal{C}$.

A condition that specifies which treatment the compliers would receive selects the estimand rather than recovers it. Under conjunct \textit{(ii)}, each complier's induced contrast $y_i\left(1, \bm{v}_i(1)\right) - y_i\left(0, \bm{v}_i(0)\right)$ becomes $y_i\left(1, \bm{v}_i(0)\right) - y_i\left(0, \bm{v}_i(0)\right)$, a contrast that differs in perceived race alone, so the NREC collapses into the AEE among compliers. Recovery needs no such condition, because, given conjunct \textit{(i)} --- $\bm{v}_i(0) = \bm{v}_i(1)$ for every non-complier --- the ratio in \eqref{eq: NREC iden ratio} equals the NREC with or without conjunct \textit{(ii)} (\Cref{prop: cue stability decomposition}). The distinction between selecting and recovering the estimand is the one that \Cref{sec: excludability} of the main text draws from the decomposition of cue stability, and the classical setting could leave the distinction implicit because a one-dimensional treatment leaves conjunct \textit{(ii)} nothing to specify.

\subsection{The mixed contrast} \label{sec: mixed contrast formalism}

\Cref{sec: choosing contrast} of the main text introduces the mixed contrast as one that holds some nonracial features fixed across racial conditions and lets the rest vary by race. This subsection defines the partition of the nonracial features into a held-fixed set and a free set. From the joint PMF $q(z, \bm{v})$ and the common weighting $\rho$ of \Cref{sec: anatomy estimand} of the main text, the subsection then derives the two distributions a mixed contrast needs, namely the conditional distribution of the free features given the held-fixed features within each racial condition and the marginal distribution of the held-fixed features.

Partition the indices of the nonracial features, $\{1, \ldots, L\} = \mathcal{I}_{\mathrm{fix}} \cup \mathcal{I}_{\mathrm{free}}$ with $\mathcal{I}_{\mathrm{fix}} \cap \mathcal{I}_{\mathrm{free}} = \emptyset$. Write $\bm{v} = (\bm{v}_{\mathrm{fix}}, \bm{v}_{\mathrm{free}})$, with $\bm{v}_{\mathrm{fix}} \in \mathcal{V}_{\mathrm{fix}} \coloneqq \prod_{\ell \in \mathcal{I}_{\mathrm{fix}}} \mathcal{V}_\ell$ collecting the features held fixed and $\bm{v}_{\mathrm{free}} \in \mathcal{V}_{\mathrm{free}} \coloneqq \prod_{\ell \in \mathcal{I}_{\mathrm{free}}} \mathcal{V}_\ell$ collecting the free features. The AEE is the case $\mathcal{I}_{\mathrm{free}} = \emptyset$ (every feature held fixed) and the AEWR the case $\mathcal{I}_{\mathrm{fix}} = \emptyset$ (no feature held fixed).

When $\mathcal{I}_{\mathrm{fix}}$ and $\mathcal{I}_{\mathrm{free}}$ are both nonempty, the weighting follows the partition, as \Cref{sec: choosing contrast} of the main text describes. A distribution common to the two racial conditions weights the held-fixed features, as the common weighting $\rho$ weights all of $\bm{v}$ for the AEE, and race-specific distributions weight the free features given the held-fixed value, as the $q_z$ weight all of $\bm{v}$ for the AEWR. The construction below gives the two distributions and then assembles the mixed effect in a single expression.

Within racial condition $z$ at held-fixed value $\bm{v}_{\mathrm{fix}}$, the free features follow the conditional distribution
\begin{align} \label{eq: qz cond vE}
q_z(\bm{v}_{\mathrm{free}} \given \bm{v}_{\mathrm{fix}}) \coloneqq \dfrac{q(z, \bm{v}_{\mathrm{fix}}, \bm{v}_{\mathrm{free}})}{\sum_{\bm{v}_{\mathrm{free}}^{\prime} \in \mathcal{V}_{\mathrm{free}}} q(z, \bm{v}_{\mathrm{fix}}, \bm{v}_{\mathrm{free}}^{\prime})},
\end{align}
which the joint PMF $q(z, \bm{v})$ of \Cref{sec: anatomy estimand} of the main text induces, and which is defined at each pair of a racial condition and a held-fixed value for which the denominator of \eqref{eq: qz cond vE} is positive. The held-fixed features play the role that the whole profile plays in the AEE, so their weights must be common to the two racial conditions. The aggregation over the held-fixed values uses the marginal distribution of the held-fixed features under the common weighting $\rho$ of \Cref{sec: anatomy estimand} of the main text,
\begin{align} \label{eq: vE marginal dist}
\rho(\bm{v}_{\mathrm{fix}}) \coloneqq \sum_{\bm{v}_{\mathrm{free}} \in \mathcal{V}_{\mathrm{free}}} \rho(\bm{v}_{\mathrm{fix}}, \bm{v}_{\mathrm{free}}).
\end{align}

For unit $i$, the race effect under the mixed contrast combines the two distributions in one expression,
\begin{equation} \label{eq: contrasts part margin}
\begin{split}
\sum_{\bm{v}_{\mathrm{fix}} \in \mathcal{V}_{\mathrm{fix}}} \Bigg[\sum_{\bm{v}_{\mathrm{free}} \in \mathcal{V}_{\mathrm{free}}} & y_i(1, \bm{v}_{\mathrm{fix}}, \bm{v}_{\mathrm{free}})\, q_1(\bm{v}_{\mathrm{free}} \given \bm{v}_{\mathrm{fix}}) \\
{} - \sum_{\bm{v}_{\mathrm{free}} \in \mathcal{V}_{\mathrm{free}}} & y_i(0, \bm{v}_{\mathrm{fix}}, \bm{v}_{\mathrm{free}})\, q_0(\bm{v}_{\mathrm{free}} \given \bm{v}_{\mathrm{fix}})\Bigg] \rho(\bm{v}_{\mathrm{fix}}).
\end{split}
\end{equation}
Read the expression from the inside out. Each inner sum averages unit $i$'s potential outcomes over the free features within one racial condition at the held-fixed value $\bm{v}_{\mathrm{fix}}$ with the weights in \eqref{eq: qz cond vE}. The difference between the two inner sums contrasts the racial conditions at that common held-fixed value, so the contrast is all-else-equal on the held-fixed features and within-race on the free features. The outer sum aggregates the contrasts over the held-fixed values with the common weights in \eqref{eq: vE marginal dist}.

Averaging the race effect \eqref{eq: contrasts part margin} over the $N$ units gives the mixed effect, and collecting the weight that multiplies each potential outcome $y_i(z, \bm{v}_{\mathrm{fix}}, \bm{v}_{\mathrm{free}})$ shows that the mixed effect is the member of the estimand family with the mixed weights
\begin{align} \label{eq: mixed weights}
g_z(\bm{v}) = \rho(\bm{v}_{\mathrm{fix}})\, q_z(\bm{v}_{\mathrm{free}} \given \bm{v}_{\mathrm{fix}}),
\end{align}
which are common on the held-fixed features and race-specific on the free features given the held-fixed value. \Cref{sec: gap} substitutes the mixed weights in \eqref{eq: mixed weights} into the gap identity \eqref{eq: gap identity} to obtain the gap between a mixed effect and any other member of the estimand family --- the AEE, the AEWR, or a mixed effect built on a different partition of the nonracial features.

\subsection{The structural tradeoff} \label{sec: tradeoff}

\Cref{sec: substantive motivations} of the main text argues that, when racial classifications index different regions of the attribute space, no common all-else-equal weighting can both restrict attention to profiles typical of both groups and represent the typical members of each group. This subsection makes that claim precise and proves it.

Fix $\varepsilon \in (0, 1)$. For each race $z \in \{0, 1\}$, call a set $\mathcal{R}_z \subseteq \mathcal{V}$ a \textit{typical region for race $z$} if
\begin{align*}
q_z\left(\mathcal{R}_z\right) \coloneqq \sum_{\bm{v} \in \mathcal{R}_z} q_z(\bm{v}) \geq 1 - \varepsilon,
\end{align*}
so that the profiles of at least a $1 - \varepsilon$ share of individuals with race $z$ lie in $\mathcal{R}_z$. On the thick constructivist conception of \Cref{sec: debate} of the main text, a racial category indexes a distribution of nonracial attributes; the typical region for race $z$ is the region of the attribute space on which the indexed distribution $q_z$ concentrates. The intersection $\mathcal{R}_0 \cap \mathcal{R}_1$ is the \textit{both-typical region}.

\begin{prop}[Structural tradeoff] \label{prop: tradeoff}
Fix $\varepsilon \in (0, 1)$ and typical regions $\mathcal{R}_0, \mathcal{R}_1 \subseteq \mathcal{V}$. Suppose that for at least one race $z \in \{0, 1\}$,
\begin{align} \label{eq: tradeoff hypothesis}
q_z\left(\mathcal{R}_0 \cap \mathcal{R}_1\right) < 1 - \varepsilon.
\end{align}
Then no weighting distribution $\rho$ on $\mathcal{V}$, common across the two racial conditions, satisfies both of the following constraints:
\begin{enumerate}
\item[\textnormal{(i)}] $\operatorname{supp}(\rho) \subseteq \mathcal{R}_0 \cap \mathcal{R}_1$, so that $\rho$ places all of its weight on both-typical profiles; and
\item[\textnormal{(ii)}] $q_z\left(\operatorname{supp}(\rho)\right) \geq 1 - \varepsilon$ for each race $z \in \{0, 1\}$, so that the support of $\rho$ retains at least a $1 - \varepsilon$ share of each group.
\end{enumerate}
\end{prop}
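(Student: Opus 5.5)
The plan is a proof by contradiction built on the monotonicity of the set function $q_z(\cdot)$. Suppose some common weighting $\rho$ on $\mathcal{V}$ satisfies both (i) and (ii). Fix the race $z^\ast$ for which the hypothesis \eqref{eq: tradeoff hypothesis} holds, so that
\begin{align*}
q_{z^\ast}\left(\mathcal{R}_0 \cap \mathcal{R}_1\right) < 1 - \varepsilon.
\end{align*}
Constraint (i) gives the containment $\operatorname{supp}(\rho) \subseteq \mathcal{R}_0 \cap \mathcal{R}_1$. Because $q_{z^\ast}$ is a probability mass function, its values are nonnegative, and the sum defining $q_{z^\ast}(\cdot)$ over a subset cannot exceed the sum over a superset. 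Hence
\begin{align*}
q_{z^\ast}\left(\operatorname{supp}(\rho)\right) \leq q_{z^\ast}\left(\mathcal{R}_0 \cap \mathcal{R}_1\right) < 1 - \varepsilon.
\end{align*}
This contradicts constraint (ii) evaluated at $z = z^\ast$, which requires $q_{z^\ast}(\operatorname{supp}(\rho)) \geq 1 - \varepsilon$.

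The steps, in order, are as follows. First, record the monotonicity of $A \mapsto q_z(A) = \sum_{\bm{v} \in A} q_z(\bm{v})$ under set inclusion, which follows from nonnegativity of the summands. Second, apply constraint (i) to obtain the containment and the upper bound above. Third, combine the bound with the hypothesis for $z^\ast$ and note that it violates (ii) at that race. Note that (ii) is only needed at $z^\ast$, so the argument in fact shows slightly more than stated: for that race no $\rho$ supported on the both-typical region can retain a $1-\varepsilon$ share.

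There is no real obstacle; the argument is elementary. The only point needing care is well-definedness. The conditionals $q_z$ must be defined, meaning each racial condition has positive mass under $q$ (\Cref{sec: anatomy estimand}), and $\operatorname{supp}(\rho)$ must be understood as $\{\bm{v} : \rho(\bm{v}) > 0\}$. It is also worth remarking that the typical-region property $q_z(\mathcal{R}_z) \geq 1-\varepsilon$ is never used in the proof. That property only gives the proposition its interpretation, namely that the hypothesis \eqref{eq: tradeoff hypothesis} expresses limited overlap of the two groups' typical regions.
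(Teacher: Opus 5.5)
Your proof is correct and matches the paper's own argument. The paper likewise assumes constraint (i) and uses monotonicity of $q_{z^{\star}}$ under set inclusion to obtain $q_{z^{\star}}(\operatorname{supp}(\rho)) \leq q_{z^{\star}}(\mathcal{R}_0 \cap \mathcal{R}_1) < 1 - \varepsilon$, which negates (ii) at $z^{\star}$; you frame this as a contradiction and the paper as a direct derivation of the negation, but the logic is the same, and your remark that the typical-region property $q_z(\mathcal{R}_z) \geq 1 - \varepsilon$ is never used holds for the paper's proof too.
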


\begin{proof}
The proof supposes constraint (i) and derives the negation of constraint (ii) from the proposition's hypothesis, the inequality \eqref{eq: tradeoff hypothesis}.

Let $z^{\star} \in \{0, 1\}$ be a race for which \eqref{eq: tradeoff hypothesis} holds, and suppose a common weighting distribution $\rho$ satisfies constraint (i), so that $\operatorname{supp}(\rho) \subseteq \mathcal{R}_0 \cap \mathcal{R}_1$. For any set of profiles $\mathcal{A} \subseteq \mathcal{V}$, the quantity $q_{z^{\star}}(\mathcal{A}) = \sum_{\bm{v} \in \mathcal{A}} q_{z^{\star}}(\bm{v})$ is the share of individuals with race $z^{\star}$ whose profiles lie in $\mathcal{A}$. Each term of the sum is nonnegative, so enlarging $\mathcal{A}$ to any set $\mathcal{B}$ with $\mathcal{A} \subseteq \mathcal{B} \subseteq \mathcal{V}$ adds terms and removes none; hence $q_{z^{\star}}\left(\mathcal{A}\right) \leq q_{z^{\star}}\left(\mathcal{B}\right)$ (monotonicity with respect to set inclusion).

Take $\mathcal{A} = \operatorname{supp}(\rho)$ and $\mathcal{B} = \mathcal{R}_0 \cap \mathcal{R}_1$; the inclusion $\mathcal{A} \subseteq \mathcal{B}$ is then exactly constraint (i). Monotonicity with respect to set inclusion at these two sets, $q_{z^{\star}}\left(\operatorname{supp}(\rho)\right) \leq q_{z^{\star}}\left(\mathcal{R}_0 \cap \mathcal{R}_1\right)$, gives the weak inequality, and the hypothesis \eqref{eq: tradeoff hypothesis}, $q_{z^{\star}}\left(\mathcal{R}_0 \cap \mathcal{R}_1\right) < 1 - \varepsilon$, gives the strict inequality, in the chain of inequalities
\begin{align} \label{eq: tradeoff chain}
q_{z^{\star}}\left(\operatorname{supp}(\rho)\right) \;\leq\; q_{z^{\star}}\left(\mathcal{R}_0 \cap \mathcal{R}_1\right) \;<\; 1 - \varepsilon.
\end{align}

In the chain \eqref{eq: tradeoff chain}, a weak inequality is followed by a strict inequality, and the two inequalities compose by transitivity into the strict inequality between the ends, $q_{z^{\star}}\left(\operatorname{supp}(\rho)\right) < 1 - \varepsilon$. The strict inequality between the ends states, in words, that the support of $\rho$ retains less than a $1 - \varepsilon$ share of the individuals with race $z^{\star}$, and the statement is the negation of constraint (ii) at $z = z^{\star}$. Hence no common weighting distribution $\rho$ satisfies constraints (i) and (ii) together.
\end{proof}

\Cref{prop: tradeoff} admits two logically equivalent statements --- the proposition as stated and the proposition's contrapositive --- and each statement shows what imposing one of the two constraints forces the researcher to give up. As stated, the proposition says that if the common weighting $\rho$ places all of its weight on both-typical profiles, as constraint (i) requires, then the support of $\rho$ retains less than a $1 - \varepsilon$ share of at least one group, so constraint (ii) fails. In the contrapositive, the proposition says that if the common weighting $\rho$ retains at least a $1 - \varepsilon$ share of each group, as constraint (ii) requires, then $\rho$ must place positive weight on profiles outside the typical region for at least one race, so constraint (i) fails.

\textit{Worked example.} A stylized version of the running example shows, in numbers, what each constraint of \Cref{prop: tradeoff} forces the researcher to give up. The example follows the shape that \Cref{fig: structural tradeoff} of the main text depicts: two distributions of neighborhoods and prior records that overlap only slightly (\Cref{sec: substantive motivations} of the main text). Code the two nonracial features of the running example as binary --- $v_1 = 1$ for a disadvantaged neighborhood and $v_2 = 1$ for an extensive prior record --- so that the profiles $\bm{v} = (v_1, v_2)$ take four possible values. Suppose that, among arrestees, Black, relative to White, indexes a greater likelihood of arrest in a disadvantaged neighborhood and a greater likelihood of an extensive prior record. The distribution $q_1$ of profiles among Black arrestees then places its largest share on the profile with both features, smaller shares on the two profiles with one feature, and its smallest share on the profile with neither:
\begin{align*}
q_1(1,1) = .60, \quad q_1(1,0) = .20, \quad q_1(0,1) = .15, \quad q_1(0,0) = .05.
\end{align*}
The distribution $q_0$ of profiles among White arrestees concentrates on the profile with neither feature and places the same shares as $q_1$ on the single-feature profiles: $q_0(0,0) = .60$, $q_0(1,0) = .20$, $q_0(0,1) = .15$, and $q_0(1,1) = .05$. For both groups, the profile with an extensive prior record but no disadvantaged neighborhood is the rarer of the two single-feature profiles, $q_z(0,1) = .15 < .20 = q_z(1,0)$, because an extensive prior record arises partly from the greater policing of disadvantaged neighborhoods.

Fix $\varepsilon = .10$, so that a typical region must hold at least a $.90$ share of its group. The set $\mathcal{R}_1 = \{(1,1), (1,0), (0,1)\}$ is a typical region for race $1$, with $q_1(\mathcal{R}_1) = .95 \geq .90$, and the set $\mathcal{R}_0 = \{(0,0), (0,1), (1,0)\}$ is a typical region for race $0$, with $q_0(\mathcal{R}_0) = .95$. The both-typical region is the pair of single-feature profiles, $\mathcal{R}_0 \cap \mathcal{R}_1 = \{(1,0), (0,1)\}$. The both-typical region holds only a $.35$ share of each group, $q_z\left(\mathcal{R}_0 \cap \mathcal{R}_1\right) = .35 < .90$, so the hypothesis \eqref{eq: tradeoff hypothesis} holds for both races.

A common weighting $\rho$ that satisfies constraint (i) places all of its weight on the two single-feature profiles, so the support of $\rho$ retains at most a $.35$ share of each group --- far below the $.90$ share that constraint (ii) demands. A common weighting that satisfies constraint (ii) must retain at least a $.90$ share of group $1$, and the support of any such weighting includes the profile $(1,1)$, because the three profiles other than $(1,1)$ together hold only a $.40$ share of group $1$. The profile $(1,1)$ carries a $.05$ share of group $0$, so the weighting places weight on a profile rare under group $0$'s distribution. The same argument with the groups exchanged forces weight on $(0,0)$, a profile rare under group $1$'s distribution.

Whenever the hypothesis \eqref{eq: tradeoff hypothesis} holds, so that the both-typical region contains less than a $1 - \varepsilon$ share of at least one group, constraints (i) and (ii) of \Cref{prop: tradeoff} cannot hold together, and the researcher must give up either the restriction to both-typical profiles or the representation of each group's typical members.

\subsection{Distribution specifications} \label{app: distribution formalism}

\Cref{sec: weighting substantive} of the main text motivates specifications of $q_z$ and $\rho$; this subsection develops the specifications in full. The three parts below take up the restriction of the support to exclude implausible profiles, the information that calibrating the weighting distributions to a reference population requires, and the translation of graded typicality into a distribution that concentrates around each group's prototype.

\subsubsection{Restricting the support}

Restrictions on implausible attribute combinations operate directly on the joint PMF $q(z,\bm{v}_{\mathrm{fix}},\bm{v}_{\mathrm{free}})$ of \Cref{sec: anatomy estimand} of the main text, defined on the space $\mathcal{T}$ of configurations. \citet[pp.~13--16]{hainmuelleretal2014} build randomization restrictions for implausible combinations into the canonical conjoint framework and note that causal effects are then defined only over the combinations the design retains (p.~20).

The researcher judges some configurations in $\mathcal{T}$ implausible and retains the rest in the comparison. Let $\mathcal{T}^{\ast} \subseteq \mathcal{T}$ denote the set of retained configurations. Excluding the implausible configurations from the comparison is a choice about the estimand's weighting, not a claim that the excluded configurations cannot occur: The new joint PMF $q^{\ast}$ places zero weight on every configuration outside $\mathcal{T}^{\ast}$ and rescales the weights that $q$ places on the configurations in $\mathcal{T}^{\ast}$ so that the rescaled weights sum to one. Formally, define
\begin{align} \label{eq: q star def}
q^{\ast}(z,\bm{v}_{\mathrm{fix}},\bm{v}_{\mathrm{free}}) \coloneqq\dfrac{q(z,\bm{v}_{\mathrm{fix}},\bm{v}_{\mathrm{free}}) \mathbbm{1}\{(z,\bm{v}_{\mathrm{fix}},\bm{v}_{\mathrm{free}}) \in \mathcal{T}^{\ast}\}}{\sum_{(z^{\prime},\bm{v}_{\mathrm{fix}}^{\prime},\bm{v}_{\mathrm{free}}^{\prime}) \in \mathcal{T}^{\ast}} q(z^{\prime},\bm{v}_{\mathrm{fix}}^{\prime},\bm{v}_{\mathrm{free}}^{\prime})},
\end{align}
whenever the denominator is positive.

For a configuration retained in $\mathcal{T}^{\ast}$, the indicator in the numerator of \eqref{eq: q star def} equals one, so $q^{\ast}$ returns the weight $q(z,\bm{v}_{\mathrm{fix}},\bm{v}_{\mathrm{free}})$ divided by the total weight that $q$ places on $\mathcal{T}^{\ast}$. For a configuration outside $\mathcal{T}^{\ast}$, the indicator equals zero, so $q^{\ast}$ returns zero. The values of $q^{\ast}$ are therefore nonnegative and sum to one over $\mathcal{T}^{\ast}$, so $q^{\ast}$ is a PMF on $\mathcal{T}^{\ast}$, proportional to $q$ on the configurations retained in $\mathcal{T}^{\ast}$. All subsequent quantities --- including the conditional distribution $q_z(\bm{v}_{\mathrm{free}} \given \bm{v}_{\mathrm{fix}})$ in \eqref{eq: qz cond vE} and the marginal distribution $\rho(\bm{v}_{\mathrm{fix}})$ in \eqref{eq: vE marginal dist} --- are then computed from $q^{\ast}$ rather than from $q$.

Replacing $q$ with $q^{\ast}$ leaves the formal structure of the mixed effect in \eqref{eq: contrasts part margin} unchanged. The replacement can nonetheless leave parts of the mixed effect with nothing to average or to compare, because each inner average in \eqref{eq: contrasts part margin} draws its terms from the configurations retained in $\mathcal{T}^{\ast}$ alone. To state when each part of the mixed effect exists, collect, for racial condition $z$ at held-fixed value $\bm{v}_{\mathrm{fix}}$, the free-feature values that the restriction to $\mathcal{T}^{\ast}$ leaves available,
\begin{align} \label{eq: retained free values}
\mathcal{V}^{\ast}_{\mathrm{free}}(z, \bm{v}_{\mathrm{fix}}) \coloneqq \left\{\bm{v}_{\mathrm{free}} \in \mathcal{V}_{\mathrm{free}} : (z, \bm{v}_{\mathrm{fix}}, \bm{v}_{\mathrm{free}}) \in \mathcal{T}^{\ast}\right\}.
\end{align}

Two requirements for the existence of the mixed effect follow. First, under $q^{\ast}$, the inner average over the free features within racial condition $z$ at held-fixed value $\bm{v}_{\mathrm{fix}}$ becomes
\begin{align*}
\sum_{\bm{v}_{\mathrm{free}} \in \mathcal{V}^{\ast}_{\mathrm{free}}(z, \bm{v}_{\mathrm{fix}})} y_i(z, \bm{v}_{\mathrm{fix}}, \bm{v}_{\mathrm{free}})\, q^{\ast}_z(\bm{v}_{\mathrm{free}} \given \bm{v}_{\mathrm{fix}}),
\end{align*}
because $q^{\ast}_z(\bm{v}_{\mathrm{free}} \given \bm{v}_{\mathrm{fix}})$ is positive only at the free-feature values in $\mathcal{V}^{\ast}_{\mathrm{free}}(z, \bm{v}_{\mathrm{fix}})$. The inner average therefore exists only when $\mathcal{V}^{\ast}_{\mathrm{free}}(z, \bm{v}_{\mathrm{fix}}) \neq \emptyset$. Second, the contrast at held-fixed value $\bm{v}_{\mathrm{fix}}$ subtracts the inner average for racial condition $0$ from the inner average for racial condition $1$. The contrast therefore exists only when both sets of free-feature values are nonempty, $\mathcal{V}^{\ast}_{\mathrm{free}}(0, \bm{v}_{\mathrm{fix}}) \neq \emptyset$ and $\mathcal{V}^{\ast}_{\mathrm{free}}(1, \bm{v}_{\mathrm{fix}}) \neq \emptyset$.

If $\mathcal{V}^{\ast}_{\mathrm{free}}(z, \bm{v}_{\mathrm{fix}})$ is empty for exactly one of the two racial conditions --- as when a combination of features is judged plausible for one group alone --- the contrast at that held-fixed value is undefined. The outer summation in \eqref{eq: contrasts part margin} weights each held-fixed value by the restricted marginal distribution $\rho^{\ast}(\bm{v}_{\mathrm{fix}})$, which places positive weight only on the held-fixed values that appear in at least one configuration retained in $\mathcal{T}^{\ast}$. The mixed effect under $q^{\ast}$ therefore exists when the second requirement holds at every held-fixed value on which $\rho^{\ast}$ places positive weight.

The existence condition is membership in the estimand family, whose weights are probability distributions (\Cref{sec: anatomy estimand} of the main text). For each racial condition $z$, the mixed weights under the restriction to $\mathcal{T}^{\ast}$ sum to
\begin{align*}
\sum_{\bm{v}_{\mathrm{fix}} \in \mathcal{V}_{\mathrm{fix}}} \rho^{\ast}(\bm{v}_{\mathrm{fix}}) \sum_{\bm{v}_{\mathrm{free}} \in \mathcal{V}_{\mathrm{free}}} q^{\ast}_z(\bm{v}_{\mathrm{free}} \given \bm{v}_{\mathrm{fix}}).
\end{align*}
The inner sum equals one at each held-fixed value where the conditional distribution $q^{\ast}_z(\bm{v}_{\mathrm{free}} \given \bm{v}_{\mathrm{fix}})$ exists, so the mixed weights are well defined and sum to one exactly when the conditional distribution exists at every held-fixed value on which $\rho^{\ast}$ places positive weight.

\subsubsection{Calibration to a reference population}

The construction of the mixed contrast in \Cref{sec: mixed contrast formalism} partitions the nonracial features into a held-fixed set and a free set. Under the partition, calibrating the joint PMF $q(z, \bm{v}_{\mathrm{fix}}, \bm{v}_{\mathrm{free}})$ to match a reference population requires more information than the analogous calibration in \citet{delacuestaetal2022}. The reason lies in which pieces of the joint PMF the mixed effect uses. Written with the mixed weights of \eqref{eq: mixed weights}, the race effect under the mixed contrast for unit $i$ in \eqref{eq: contrasts part margin} is
\begin{align*}
\sum_{\bm{v} \in \mathcal{V}} \left[y_i(1, \bm{v})\, \rho(\bm{v}_{\mathrm{fix}})\, q_1(\bm{v}_{\mathrm{free}} \given \bm{v}_{\mathrm{fix}}) - y_i(0, \bm{v})\, \rho(\bm{v}_{\mathrm{fix}})\, q_0(\bm{v}_{\mathrm{free}} \given \bm{v}_{\mathrm{fix}})\right],
\end{align*}
so the mixed effect uses the joint PMF only through the marginal distribution $\rho(\bm{v}_{\mathrm{fix}})$ of the held-fixed features and the conditional distributions $q_z(\bm{v}_{\mathrm{free}} \given \bm{v}_{\mathrm{fix}})$ of the free features within each racial condition. Calibrating the mixed effect means supplying the marginal distribution and the conditional distributions from data on the reference population.

The lighter requirement in \citet[][p.~21]{delacuestaetal2022} --- each attribute's marginal distribution alone --- rests on the assumption that the attributes have no three-way or higher-order interactions, and supplying partial joint distributions relaxes the assumption \citep[][pp.~21, 31]{delacuestaetal2022}. The assumption restricts the outcomes, and under that restriction the estimand of \citet{delacuestaetal2022} needs no more from the profile distribution than the marginal distributions provide. No restriction on the outcomes does the same for the mixed effect, because the weights of the mixed effect are themselves built from the joint PMF. A researcher could instead restrict the reference population --- supposing, for example, that the attributes are independent within each racial condition --- and calibrate from marginal distributions alone. The supposition would then fix the weights, and with the weights the estimand, by assumption rather than by data.

A reference population is often summarized by \textit{single-attribute margins} --- for each attribute separately, the share of the population at each value of that attribute, with no record of how the attributes combine. Many joint distributions share the same single-attribute margins while disagreeing about the dependence across attributes. Both distributions that the calibration needs are properties of the dependence: The marginal distribution $\rho(\bm{v}_{\mathrm{fix}})$ assigns probability to whole bundles of held-fixed values, and the conditional distribution $q_z(\bm{v}_{\mathrm{free}} \given \bm{v}_{\mathrm{fix}})$ records how the free features vary with the racial condition and the held-fixed value.

\textit{Worked example.} The example exhibits two populations that share every single-attribute margin yet differ in the conditional distribution the calibration needs. Take the two nonracial features of the running example of the main text --- the location of the arrest and the extent of the prior record --- coded as binary: $v_1 = 1$ for an arrest in a disadvantaged neighborhood and $v_1 = 0$ otherwise, and $v_2 = 1$ for an extensive prior record and $v_2 = 0$ otherwise. Consider a stylized reference population of $100$ arrestees that splits evenly on each feature, $\Pr(v_1 = 1) = \Pr(v_2 = 1) = 1/2$, and two ways the features can combine within the stylized population (\Cref{tab: margins example}). In the left population of the table, the two features are independent, and each of the four combinations of $(v_1, v_2)$ holds $25$ arrestees. In the right population, the two features coincide, with the matching combinations $(1, 1)$ and $(0, 0)$ holding $50$ arrestees each and the combinations $(1, 0)$ and $(0, 1)$ holding none. The conditional distribution of prior record given neighborhood is even in the left population, $\Pr(v_2 = 1 \given v_1) = 1/2$, and degenerate in the right population, $\Pr(v_2 = v_1 \given v_1) = 1$. The shading in \Cref{tab: margins example} marks what the two populations share: The shaded row and column totals --- the single-attribute margins --- are identical, while the unshaded interior cells differ.

\begin{table}[H]
\vspace{10pt}
\centering
\begin{tabular}{lccc}
\multicolumn{4}{c}{\textbf{Independent features}} \\
\toprule
& \multicolumn{2}{c}{$v_2$} & \\
\cmidrule(lr){2-3}
$v_1$ & $1$ & $0$ & Total \\
\midrule
$1$ & 25 & 25 & \cellcolor{gray!15}50 \\
$0$ & 25 & 25 & \cellcolor{gray!15}50 \\
Total & \cellcolor{gray!15}50 & \cellcolor{gray!15}50 & \cellcolor{gray!15}100 \\
\bottomrule
\end{tabular}
\qquad
\begin{tabular}{lccc}
\multicolumn{4}{c}{\textbf{Coinciding features}} \\
\toprule
& \multicolumn{2}{c}{$v_2$} & \\
\cmidrule(lr){2-3}
$v_1$ & $1$ & $0$ & Total \\
\midrule
$1$ & 50 & 0 & \cellcolor{gray!15}50 \\
$0$ & 0 & 50 & \cellcolor{gray!15}50 \\
Total & \cellcolor{gray!15}50 & \cellcolor{gray!15}50 & \cellcolor{gray!15}100 \\
\bottomrule
\end{tabular}
\caption{Two stylized populations of $100$ arrestees, one per table. Cells count arrestees by neighborhood $v_1$ (rows) and prior record $v_2$ (columns); because each population holds $100$ arrestees, dividing any cell by $100$ gives the corresponding population proportion. The shaded cells --- the row and column totals, which are the single-attribute margins --- are identical across the two populations; the unshaded interior cells differ.}
\label{tab: margins example}
\end{table}

The conditional distributions $q_z(\bm{v}_{\mathrm{free}} \given \bm{v}_{\mathrm{fix}})$ are underdetermined in the same way as the conditional distribution of the worked example, and more severely. The conditioning includes the racial condition $z$, so the calibration also needs the dependence between the nonracial features and race. Identifying the marginal distribution $\rho(\bm{v}_{\mathrm{fix}})$ and the conditional distributions $q_z(\bm{v}_{\mathrm{free}} \given \bm{v}_{\mathrm{fix}})$ therefore requires joint information: cross-tabulations of the free features against the held-fixed features and race in the reference population, or individual-level data on the members of the reference population, from which any such cross-tabulation can be computed.

\subsubsection{Typicality} \label{app: typicality construction}

The construction sketched in \Cref{sec: weighting substantive} of the main text translates graded typicality into the conditional distribution $q_z(\bm{v}_{\mathrm{free}})$. Three elements of the construction come from the theory of conceptual spaces \citep{gardenfors2000,gardenfors2014}: the representation of objects as points in a space of \textit{quality dimensions} --- scales on which the objects can be compared \citep[][p.~2]{gardenfors2000} --- with similarity decreasing in distance \citep[][p.~44]{gardenfors2000}; the representation of a category by a prototype, the category's most typical point \citep[][pp.~87--91]{gardenfors2000}; and a distance that weights each dimension by the dimension's salience \citep[][p.~104]{gardenfors2000}. The exponential decay of similarity in distance follows \citet{nosofsky1986} and \citet{shepard1987}, whom \citet[][p.~20]{gardenfors2000} also follows. The assembly is this paper's: group-specific prototypes, salience weights, and concentration parameters, applied to the finite set $\mathcal{V}_{\mathrm{free}}$ and normalized into a weighting distribution for the estimand family, together with the conditional-prototype variant below.

The construction needs three ingredients.
\begin{itemize}
\item An embedding $\phi: \mathcal{V}_{\mathrm{free}} \to \R^D$ --- a function that represents the elements of one set as points of a space --- maps each free-feature profile to a point of a $D$-dimensional conceptual space, with each coordinate one quality dimension. The number of dimensions $D$ may be smaller than the number of free features, as when several features collapse into a single scale, or larger, as when one feature occupies several scales --- neighborhood disadvantage in the running example, for instance, may occupy one coordinate for income and one for intensity of policing.
\item A racial prototype $\bm{p}_z \in \R^D$ locates the most typical bundle of free features for group $z$ in the conceptual space.
\item Salience weights $\bm{a}_z \in [0,1]^D$ with $\sum_{k = 1}^D a_{z,k} = 1$ record how diagnostic each dimension of the conceptual space is for membership in group $z$. A difference along a highly salient dimension reduces typicality more than the same difference along a less salient dimension.
\end{itemize}
Typicality is similarity to the prototype, and similarity in a conceptual space decreases with distance, so grading typicality requires a distance from the prototype in which differences along more salient dimensions matter more. The embedding, the prototype, and the salience weights combine into such a distance, the weighted Euclidean distance from the prototype,
\begin{align} \label{eq: typicality distance}
d_z(\bm{v}_{\mathrm{free}}, \bm{p}_z) \coloneqq \sqrt{\left(\phi(\bm{v}_{\mathrm{free}}) - \bm{p}_z\right)^{\top} \diag{\bm{a}_z} \left(\phi(\bm{v}_{\mathrm{free}}) - \bm{p}_z\right)},
\end{align}
where $\diag{\bm{a}_z}$ denotes the $D \times D$ diagonal matrix with the salience weights $a_{z,1}, \ldots, a_{z,D}$ on its diagonal. The distance is zero exactly when $\phi(\bm{v}_{\mathrm{free}}) = \bm{p}_z$ and positive otherwise.

Exponentiate the negative squared distance, scaled by a concentration parameter $\lambda_z \geq 0$ that sets how fast a profile's weight declines with distance from the prototype, and normalize over $\mathcal{V}_{\mathrm{free}}$. The result is a conditional distribution that concentrates around the prototype \citep{nosofsky1986,shepard1987}:
\begin{align} \label{eq: typicality dist}
q_z(\bm{v}_{\mathrm{free}}; \lambda_z) \coloneqq \dfrac{\exp\left(-\lambda_z \, d_z(\bm{v}_{\mathrm{free}}, \bm{p}_z)^2\right)}{\sum_{\bm{v}_{\mathrm{free}}^{\prime} \in \mathcal{V}_{\mathrm{free}}} \exp\left(-\lambda_z \, d_z(\bm{v}_{\mathrm{free}}^{\prime}, \bm{p}_z)^2\right)}.
\end{align}
In words, a profile's weight under \eqref{eq: typicality dist} declines exponentially in the profile's weighted squared distance from the prototype, so profiles near the prototype carry most of the weight and profiles far from the prototype contribute little weight.

The two extremes of the concentration parameter show its role: At $\lambda_z = 0$, every profile in $\mathcal{V}_{\mathrm{free}}$ receives equal weight; as $\lambda_z \to \infty$, the distribution collapses to a point mass on the profile closest to the prototype $\bm{p}_z$. The two groups may differ in their concentration parameters $\lambda_0$ and $\lambda_1$ --- the formal counterpart of the example in \Cref{sec: weighting substantive} of the main text, where DeShawn may register as more atypically White than Greg does atypically Black because typicality declines at different rates for the two groups.

\textit{Worked example.} The application of \Cref{sec: app exposure} of the main text supplies a deliberately simple instance --- two profiles and one dimension --- in which the typicality construction reproduces the application's weights exactly and shows what each parameter contributes. Among the candidates who campaigned in Spanish in the content analysis of \citet{zarateetal2024}, the within-race accent distributions over $\mathcal{V}_{\mathrm{free}} = \{\textrm{Non-Native Spanish}, \textrm{Native Spanish}\}$ are $q_0 = (.96, .04)$ for Anglo and $q_1 = (.17, .83)$ for Hispanic candidates (\Cref{tab: lang dist} of the main text). Take a single quality dimension recording the nativeness of the accent, $D = 1$, with the embedding $\phi(\textrm{Non-Native Spanish}) = 0$ and $\phi(\textrm{Native Spanish}) = 1$; with one dimension, the salience weights are trivial, $a_{z,1} = 1$ for both groups, and salience plays no role until a second dimension enters. Set each group's prototype at the accent typical of the group's own campaigners, $\bm{p}_0 = 0$ (non-native) and $\bm{p}_1 = 1$ (native), so the distance \eqref{eq: typicality distance} between a profile and a prototype equals $0$ or $1$.

With the distances so arranged, the distribution \eqref{eq: typicality dist} places weight $1/(1 + e^{-\lambda_z})$ on the accent at group $z$'s prototype and weight $e^{-\lambda_z}/(1 + e^{-\lambda_z})$ on the other accent, so the concentration parameter equals the logarithm of the group's odds of its typical accent. The exact odds come from the counts of the content analysis: $47$ non-native and $2$ native Spanish speakers among the Anglo campaigners, and $13$ non-native and $64$ native among the Hispanic campaigners. Setting $\lambda_1 = \ln(64/13) \approx 1.59$ recovers $q_1$ exactly, and setting $\lambda_0 = \ln(47/2) \approx 3.16$ recovers $q_0$ exactly. The larger $\lambda_0$ says that Anglo campaigners concentrate more tightly on their typical accent than Hispanic campaigners do on theirs. The equality between the typicality weighting and the campaign distributions is the case that \Cref{sec: app exposure} of the main text describes, in which typicality is defined by the target population itself.

\textit{Conditional-prototype variant.} The most typical bundle of free features for a group may differ across values of the features that the mixed contrast holds fixed. For each pair of a group $z$ and a value $\bm{v}_{\mathrm{fix}}$ of the held-fixed features, a conditional prototype $\bm{p}_{z,\bm{v}_{\mathrm{fix}}} \in \R^D$ locates the most typical bundle of free features for group $z$ among individuals whose held-fixed features equal $\bm{v}_{\mathrm{fix}}$. Replacing the prototype $\bm{p}_z$ with the conditional prototype $\bm{p}_{z,\bm{v}_{\mathrm{fix}}}$ in the distance \eqref{eq: typicality distance} and the distribution \eqref{eq: typicality dist} yields
\begin{align} \label{eq: typicality dist conditional}
q_z(\bm{v}_{\mathrm{free}} \given \bm{v}_{\mathrm{fix}}; \lambda_z) \coloneqq
\dfrac{\exp\left(-\lambda_z \, d_z(\bm{v}_{\mathrm{free}},\bm{p}_{z,\bm{v}_{\mathrm{fix}}})^2\right)}
{\sum_{\bm{v}_{\mathrm{free}}^{\prime} \in \mathcal{V}_{\mathrm{free}}}
\exp\left(-\lambda_z \, d_z(\bm{v}_{\mathrm{free}}^{\prime},\bm{p}_{z,\bm{v}_{\mathrm{fix}}})^2\right)}.
\end{align}
For example, if $\bm{v}_{\mathrm{fix}}$ contains the neighborhood of the running example and $\bm{v}_{\mathrm{free}}$ contains the prior record, the most typical prior record for group $z$ among arrestees from a disadvantaged neighborhood may differ from the most typical prior record for group $z$ overall, so the conditional prototype shifts with the neighborhood.

\subsection{Randomization-based properties of the plug-in estimator} \label{app: randomization properties}

This subsection establishes the randomization-based properties of the plug-in estimator of the estimand family, $\widehat{\theta}$ of \eqref{eq: family estimator main} of the main text, restated with its ingredients in \eqref{eq: theta estimator} below. The estimator targets the finite-population estimand
\begin{align} \label{eq: theta estimand app}
\theta \coloneqq \sum_{\bm{v} \in \mathcal{V}} g_1(\bm{v})\, \mu(1, \bm{v}) - \sum_{\bm{v} \in \mathcal{V}} g_0(\bm{v})\, \mu(0, \bm{v}),
\end{align}
where $\mu(z, \bm{v}) \coloneqq \dfrac{1}{N} \sum_{i = 1}^N y_i(z, \bm{v})$ denotes the mean of the potential outcomes at profile $(z, \bm{v})$ over all $N$ units.

The mean $\mu(z, \bm{v})$ equals the average potential outcome $\bar{y}(z, \bm{v})$ of \Cref{prop: gap}; this subsection writes $\mu(z, \bm{v})$ so that each population mean pairs with its estimator, the cell mean $\hat{\mu}(z, \bm{v})$ of \eqref{eq: cell mean estimator} of the main text. Exchanging the finite sums over $i$ and $\bm{v}$ shows that $\theta$ coincides with $\theta(g_0, g_1)$ of \Cref{prop: gap}.

The AEE ($g_0 = g_1 = \rho$), the AEWR ($g_z = q_z$), and the mixed contrasts ($g_z$ as in \eqref{eq: mixed weights}) are the special cases obtained by fixing the weights $g_0$ and $g_1$, so the results below cover the whole estimand family.

Let $n_{z, \bm{v}}$ be fixed positive integers for all $(z, \bm{v}) \in \mathcal{T}$ satisfying $\sum_{\bm{v} \in \mathcal{V}} n_{0, \bm{v}} + \sum_{\bm{v} \in \mathcal{V}} n_{1, \bm{v}} = N$. For each $(z, \bm{v}) \in \mathcal{T}$, refer to the set of units assigned to profile $(z, \bm{v})$ as the \textit{cell} corresponding to profile $(z, \bm{v})$; the integer $n_{z, \bm{v}}$ fixes the size of that cell.

Define the set of configuration assignments consistent with these fixed cell sizes as
\begin{align}
\Omega \coloneqq \bigcap_{(z, \bm{v}) \in \mathcal{T}} \left\{\bm{t} \in \mathcal{T}^N: \sum \limits_{i = 1}^N \mathbbm{1}\{\bm{t}_i = (z, \bm{v})\} = n_{z, \bm{v}}\right\}.
\end{align}
Thus, $\Omega$ consists of all configuration assignments $\bm{t} \in \mathcal{T}^N$ such that, for every profile $(z, \bm{v}) \in \mathcal{T}$, exactly $n_{z, \bm{v}}$ components of $\bm{t}$ equal $(z, \bm{v})$.

Complete random assignment (\Cref{assm: CRA}) makes the random configuration assignment $\bm{T}$ uniformly distributed on $\Omega$, so $\Pr(\bm{T} = \bm{t}) = 1/\abs{\Omega}$ for all $\bm{t} \in \Omega$, where $\abs{\Omega}$, the number of assignments in $\Omega$, equals $N! / \prod_{(z, \bm{v}) \in \mathcal{T}} n_{z, \bm{v}}!$.

To construct the plug-in estimator $\widehat{\theta}$, define the observed outcome for unit $i$ as $Y_i \coloneqq y_i(\bm{T})$, a random quantity through $\bm{T}$. For each $(z, \bm{v}) \in \mathcal{T}$, define the mean of the observed outcomes within the cell corresponding to profile $(z, \bm{v})$ as
\begin{align} \label{eq: profile-cell estimator I}
\hat{\mu}(z, \bm{v}) \coloneqq \dfrac{1}{n_{z, \bm{v}}} \sum \limits_{i = 1}^N Y_i \,\mathbbm{1}\left\{\bm{T}_i = (z, \bm{v})\right\},
\end{align}
which is a function of the random configuration assignment $\bm{T}$. The plug-in estimator $\widehat{\theta}$ is then
\begin{align} \label{eq: theta estimator}
\widehat{\theta} \coloneqq \sum \limits_{\bm{v} \in \mathcal{V}} g_1(\bm{v}) \hat{\mu}(1, \bm{v}) - \sum \limits_{\bm{v} \in \mathcal{V}} g_0(\bm{v}) \hat{\mu}(0, \bm{v}).
\end{align}
All randomness in $\widehat{\theta}$ arises from the random assignment $\bm{T}$, which enters through the cell means $\hat{\mu}(z, \bm{v})$.

\begin{lem}[Randomization moments] \label{lem: randomization moms}
Under complete random assignment (\Cref{assm: CRA}), for every unit $i \in \{1, \ldots, N\}$ and every profile $(z, \bm{v}) \in \mathcal{T}$,
\begin{align}
\E\left[\mathbbm{1}\{\bm{T}_i = (z, \bm{v})\}\right] & = \dfrac{n_{z, \bm{v}}}{N} \label{eq: assign EV} \\
\Var\left[\mathbbm{1}\{\bm{T}_i = (z, \bm{v})\}\right] & = \dfrac{n_{z, \bm{v}}(N - n_{z, \bm{v}})}{N^2} \label{eq: assign Var}
\end{align}
and, for any units $i$ and $j$ and any profiles $(z, \bm{v}), (z^{\prime},\bm{v}^{\prime}) \in \mathcal{T}$,
\begin{align}
\Cov\left[\mathbbm{1}\left\{\bm{T}_i = (z, \bm{v})\right\}, \mathbbm{1}\left\{\bm{T}_j = (z, \bm{v})\right\} \right] & = - \dfrac{n_{z, \bm{v}}(N - n_{z, \bm{v}})}{N^2 (N-1)} \text{ for } i \neq j \label{eq: assign Cov I} \\ 
\Cov\left[\mathbbm{1}\left\{\bm{T}_i = (z, \bm{v})\right\}, \mathbbm{1}\left\{\bm{T}_j = (z^{\prime},\bm{v}^{\prime})\right\}\right] & = \dfrac{n_{z, \bm{v}} n_{z^{\prime},\bm{v}^{\prime}}}{N^2 (N - 1)} \text{ for } i \neq j \text{ and } (z, \bm{v}) \neq (z^{\prime},\bm{v}^{\prime}) \label{eq: assign Cov II} \\
\Cov\left[\mathbbm{1}\{\bm{T}_i = (z, \bm{v})\}, \mathbbm{1}\{\bm{T}_i = (z^{\prime},\bm{v}^{\prime})\}\right] & = - \dfrac{n_{z, \bm{v}} n_{z^{\prime},\bm{v}^{\prime}}}{N^2} \text{ for } (z, \bm{v}) \neq (z^{\prime},\bm{v}^{\prime}). \label{eq: assign Cov III}
\end{align}
\end{lem}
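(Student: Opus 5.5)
The plan is to reduce every moment in the lemma to counting assignments in $\Omega$, using only that $\bm{T}$ is uniform on $\Omega$ under \Cref{assm: CRA}. The key fact is exchangeability: permuting unit labels maps $\Omega$ bijectively onto itself. Hence the marginal law of $\bm{T}_i$ is the same for every $i$, and the joint law of $(\bm{T}_i, \bm{T}_j)$ is the same for every ordered pair $i \neq j$.

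\emph{Expectation and variance.} For \eqref{eq: assign EV}, sum the indicators over units. Every $\bm{t} \in \Omega$ satisfies $\sum_{i} \mathbbm{1}\{\bm{t}_i = (z, \bm{v})\} = n_{z, \bm{v}}$, so taking expectations gives $\sum_i \Pr(\bm{T}_i = (z, \bm{v})) = n_{z, \bm{v}}$. By exchangeability each of the $N$ terms is equal, so each equals $n_{z, \bm{v}}/N$. Equivalently, the count of assignments with $\bm{T}_i = (z,\bm{v})$ is $(N-1)!/[(n_{z,\bm{v}}-1)!\prod_{\text{other cells}} n!]$, and dividing by $\abs{\Omega}$ gives the same ratio. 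Then \eqref{eq: assign Var} follows because an indicator is Bernoulli, so its variance is $p(1-p)$ with $p = n_{z,\bm{v}}/N$.

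\emph{Same-unit covariance.} For \eqref{eq: assign Cov III}, unit $i$ receives exactly one profile. The product $\mathbbm{1}\{\bm{T}_i = (z, \bm{v})\}\mathbbm{1}\{\bm{T}_i = (z^{\prime},\bm{v}^{\prime})\}$ is therefore identically zero for distinct profiles, and the covariance is minus the product of the two means.

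\emph{Cross-unit covariances.} The cross-unit cases require the pairwise joint probabilities, and this is the step needing the most care. I would use the sum trick again rather than raw multinomial counting. Fix $i$ and condition on $\bm{T}_i = (z, \bm{v})$. The remaining $N-1$ units are then uniformly assigned with cell counts $n_{z,\bm{v}}-1$ in that cell and unchanged counts elsewhere. Applying the expectation result to this reduced design gives
\begin{align*}
\Pr\bigl(\bm{T}_j = (z, \bm{v}) \given \bm{T}_i = (z, \bm{v})\bigr) = \dfrac{n_{z,\bm{v}}-1}{N-1}, \qquad \Pr\bigl(\bm{T}_j = (z^{\prime}, \bm{v}^{\prime}) \given \bm{T}_i = (z, \bm{v})\bigr) = \dfrac{n_{z^{\prime},\bm{v}^{\prime}}}{N-1}.
\end{align*}
One can justify this directly by noting that $\{\bm{t}\in\Omega:\bm{t}_i=(z,\bm{v})\}$ is in bijection with the reduced $\Omega$. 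Multiplying by $n_{z,\bm{v}}/N$ and subtracting the product of means gives the two covariance formulas. For the same-profile case,
\begin{align*}
\dfrac{n_{z,\bm{v}}}{N}\left(\dfrac{n_{z,\bm{v}}-1}{N-1} - \dfrac{n_{z,\bm{v}}}{N}\right) = -\dfrac{n_{z,\bm{v}}(N-n_{z,\bm{v}})}{N^2(N-1)}.
\end{align*}
For distinct profiles,
\begin{align*}
\dfrac{n_{z,\bm{v}}n_{z^{\prime},\bm{v}^{\prime}}}{N}\left(\dfrac{1}{N-1}-\dfrac{1}{N}\right) = \dfrac{n_{z,\bm{v}}n_{z^{\prime},\bm{v}^{\prime}}}{N^2(N-1)},
\end{align*}
which is positive, matching \eqref{eq: assign Cov II}.

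The main obstacle is only the bookkeeping behind the conditional step, namely confirming that the restricted assignment set is again a complete randomization. This holds because fixing one coordinate of an element of $\Omega$ leaves exactly the constraint that the other coordinates fill the decremented counts. The case $n_{z,\bm{v}} = 1$ causes no trouble, since the conditional probability is then $0$.
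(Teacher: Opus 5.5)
Your proof is correct, but it organizes the computation differently from the paper. The paper obtains every joint probability by direct multinomial counting. It fixes the coordinates of the units the event names, counts the completions as a multinomial coefficient, and divides by $\abs{\Omega} = N!/\prod_{(z,\bm{v}) \in \mathcal{T}} n_{z,\bm{v}}!$. For two units in one cell this involves $(n_{z,\bm{v}}-2)!$, so the paper needs a separate remark for the case $n_{z,\bm{v}} = 1$.

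You instead derive the marginal from two facts: $\Omega$ is invariant under relabeling of units, and $\sum_i \mathbbm{1}\{\bm{T}_i = (z,\bm{v})\} = n_{z,\bm{v}}$ holds deterministically. You then get the pairwise probabilities by conditioning on $\bm{T}_i$ and rerunning the same argument on the reduced design of $N-1$ units.

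Your route has three advantages. It avoids factorial algebra. It treats the singleton cell uniformly. It also exposes the sampling-without-replacement structure behind the covariances, namely the conditional probabilities $(n_{z,\bm{v}}-1)/(N-1)$ and $n_{z',\bm{v}'}/(N-1)$.

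One point should be stated explicitly. The marginal argument you reapply never uses positivity of the cell counts. It therefore still applies when the decremented count is zero, even though such a reduced design falls outside \Cref{assm: CRA} as literally stated.

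The paper's count is more mechanical, but it verifies each formula on its own without any appeal to symmetry.
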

\begin{proof}
The proof computes each moment by counting assignments. The counting is justified by the design: Complete random assignment places equal probability $1/\abs{\Omega}$ on every assignment in $\Omega$, so the probability of any event equals the number of assignments in the event divided by $\abs{\Omega}$. Every moment in the lemma reduces to such a probability because the expectation of a product of indicators is the probability that all of the indicated events occur. Each count then follows from fixing the coordinates of the units that the event names and distributing the remaining units over the remaining places in each cell.

Fix a unit $i \in \{1,\ldots,N\}$ and a profile $(z, \bm{v}) \in \mathcal{T}$. Then
\begin{align*}
\E\left[\mathbbm{1}\{\bm{T}_i = (z, \bm{v})\}\right] & = \Pr\left(\bm{T}_i = (z, \bm{v})\right) \\ 
& = \sum \limits_{\bm{t} \in \Omega} \mathbbm{1}\{\bm{t}_i = (z, \bm{v})\} \Pr(\bm{T} = \bm{t}).
\end{align*}
Under complete random assignment (\Cref{assm: CRA}), $\Pr(\bm{T} = \bm{t}) = 1/\abs{\Omega}$ for all $\bm{t} \in \Omega$. Therefore,
\begin{align*}
\Pr\left(\bm{T}_i = (z, \bm{v})\right) = \dfrac{1}{\abs{\Omega}} \sum \limits_{\bm{t} \in \Omega} \mathbbm{1}\{\bm{t}_i = (z, \bm{v})\}.
\end{align*}
To evaluate the number of assignments in $\Omega$ for which $\bm{t}_i = (z, \bm{v})$, note that fixing $\bm{t}_i = (z, \bm{v})$ reduces the remaining counts to $n_{z, \bm{v}} - 1$ in cell $(z, \bm{v})$ and $n_{z^{\prime},\bm{v}^{\prime}}$ in every other cell $(z^{\prime},\bm{v}^{\prime}) \neq (z, \bm{v})$. Thus,
\begin{align*}
\sum_{\bm{t} \in \Omega} \mathbbm{1}\{\bm{t}_i = (z, \bm{v})\} & = \dfrac{(N - 1)!}{(n_{z, \bm{v}} - 1)! \prod_{(z^{\prime},\bm{v}^{\prime}) \neq (z, \bm{v})} n_{z^{\prime},\bm{v}^{\prime}}!}.
\end{align*}
Writing the size of $\Omega$, stated in the introduction to this subsection, with the factor for cell $(z, \bm{v})$ separated from the product of factorials,
\begin{align*}
\abs{\Omega} & = \dfrac{N!}{n_{z, \bm{v}}! \prod_{(z^{\prime},\bm{v}^{\prime}) \neq (z, \bm{v})} n_{z^{\prime},\bm{v}^{\prime}}!},
\end{align*}
which implies that
\begin{align*}
\Pr\left(\bm{T}_i = (z, \bm{v})\right) & = \dfrac{1}{\abs{\Omega}} \sum \limits_{\bm{t} \in \Omega} \mathbbm{1}\{\bm{t}_i = (z, \bm{v})\} \\
& = \dfrac{(N-1)!}{(n_{z, \bm{v}}-1)! \prod_{(z^{\prime},\bm{v}^{\prime}) \neq (z, \bm{v})} n_{z^{\prime},\bm{v}^{\prime}}!} \dfrac{n_{z, \bm{v}}! \prod_{(z^{\prime},\bm{v}^{\prime}) \neq (z, \bm{v})} n_{z^{\prime}, \bm{v}^{\prime}}!}{N!} \\
& = \dfrac{n_{z, \bm{v}}}{N},
\end{align*}
which establishes \eqref{eq: assign EV}.

Because an indicator equals its own square, the definition of variance then gives
\begin{align*}
\Var\left[\mathbbm{1}\{\bm{T}_i = (z, \bm{v})\}\right]
& = \E\left[\mathbbm{1}\{\bm{T}_i = (z, \bm{v})\}^2\right]
   - \E\left[\mathbbm{1}\{\bm{T}_i = (z, \bm{v})\}\right]^2 \\
& = \E\left[\mathbbm{1}\{\bm{T}_i = (z, \bm{v})\}\right] 
   - \E\left[\mathbbm{1}\{\bm{T}_i = (z, \bm{v})\}\right]^2 \\ 
& = \dfrac{n_{z, \bm{v}}}{N}
   - \left(\dfrac{n_{z, \bm{v}}}{N}\right)^2 \\
& = \dfrac{n_{z, \bm{v}}(N - n_{z, \bm{v}})}{N^2},
\end{align*}
which establishes \eqref{eq: assign Var}.

Next, fix distinct units $i \neq j$. For the same profile $(z, \bm{v})$, the product $\mathbbm{1}\{\bm{T}_i = (z, \bm{v})\}\, \mathbbm{1}\{\bm{T}_j = (z, \bm{v})\}$ equals $1$ exactly when both units fall in cell $(z, \bm{v})$, so its expectation is the joint probability $\Pr\left(\bm{T}_i = (z, \bm{v}), \bm{T}_j = (z, \bm{v})\right)$. The count parallels the derivation of \eqref{eq: assign EV}: Fixing $\bm{t}_i = \bm{t}_j = (z, \bm{v})$ reduces the remaining counts to $n_{z, \bm{v}} - 2$ in cell $(z, \bm{v})$ and $n_{z^{\prime},\bm{v}^{\prime}}$ in every other cell $(z^{\prime},\bm{v}^{\prime}) \neq (z, \bm{v})$, so
\small
\begin{align*}
\Pr\left(\bm{T}_i = (z, \bm{v}), \bm{T}_j = (z, \bm{v})\right) & = \dfrac{(N - 2)!}{(n_{z, \bm{v}} - 2)! \prod_{(z^{\prime},\bm{v}^{\prime}) \neq (z, \bm{v})} n_{z^{\prime},\bm{v}^{\prime}}!} \dfrac{n_{z, \bm{v}}! \prod_{(z^{\prime},\bm{v}^{\prime}) \neq (z, \bm{v})} n_{z^{\prime},\bm{v}^{\prime}}!}{N!} \\
& = \dfrac{(N - 2)!}{N!} \dfrac{n_{z, \bm{v}}!}{(n_{z, \bm{v}} - 2)!} \\
& = \dfrac{n_{z, \bm{v}}(n_{z, \bm{v}} - 1)}{N(N - 1)}.
\end{align*}
\normalsize
The count assumes $n_{z, \bm{v}} \geq 2$; when $n_{z, \bm{v}} = 1$, no assignment places two distinct units in the cell, so the joint probability equals $0$, which the final expression also gives. Therefore,
\small
\begin{align*}
\Cov\left[\mathbbm{1}\{\bm{T}_i = (z, \bm{v})\},
          \mathbbm{1}\{\bm{T}_j = (z, \bm{v})\}\right]
&= \E\left[\mathbbm{1}\{\bm{T}_i = (z, \bm{v})\}
           \mathbbm{1}\{\bm{T}_j = (z, \bm{v})\}\right] \\
& - \E\left[\mathbbm{1}\{\bm{T}_i = (z, \bm{v})\}\right]
        \E\left[\mathbbm{1}\{\bm{T}_j = (z, \bm{v})\}\right] \\
& = \dfrac{n_{z, \bm{v}}(n_{z, \bm{v}} - 1)}{N(N - 1)}
   - \dfrac{n_{z, \bm{v}}^2}{N^2} \\
& = - \dfrac{n_{z, \bm{v}}(N - n_{z, \bm{v}})}{N^2(N - 1)},
\end{align*}
\normalsize
which establishes \eqref{eq: assign Cov I}.

For $(z, \bm{v}) \neq (z^{\prime}, \bm{v}^{\prime})$, still with $i \neq j$, the count is analogous: Fixing $\bm{t}_i = (z, \bm{v})$ and $\bm{t}_j = (z^{\prime}, \bm{v}^{\prime})$ reduces the remaining counts to $n_{z, \bm{v}} - 1$ and $n_{z^{\prime},\bm{v}^{\prime}} - 1$ in those two cells, so
\begin{align*}
\E\left[\mathbbm{1}\{\bm{T}_i = (z, \bm{v})\}\, \mathbbm{1}\{\bm{T}_j = (z^{\prime}, \bm{v}^{\prime})\}\right] = \Pr\left(\bm{T}_i = (z, \bm{v}), \bm{T}_j = (z^{\prime}, \bm{v}^{\prime})\right) = \dfrac{n_{z, \bm{v}}}{N} \dfrac{n_{z^{\prime}, \bm{v}^{\prime}}}{N - 1}.
\end{align*}
Therefore,
\small
\begin{align*}
\Cov\left[\mathbbm{1}\{\bm{T}_i = (z, \bm{v})\},
          \mathbbm{1}\{\bm{T}_j = (z^{\prime}, \bm{v}^{\prime})\}\right]
& = \dfrac{n_{z, \bm{v}}}{N}
   \dfrac{n_{z^{\prime}, \bm{v}^{\prime}}}{N - 1}
   - \dfrac{n_{z, \bm{v}}}{N}
     \dfrac{n_{z^{\prime},\bm{v}^{\prime}}}{N} \\
& = \dfrac{n_{z, \bm{v}} n_{z^{\prime},\bm{v}^{\prime}}}{N^2 (N - 1)},
\end{align*}
\normalsize
which establishes \eqref{eq: assign Cov II}.

Finally, fix a single unit $i$ and two distinct profiles $(z, \bm{v}) \neq (z^{\prime},\bm{v}^{\prime})$. A unit receives exactly one profile, so the two indicator events are disjoint and the product of the two indicators equals zero at every assignment in $\Omega$:
\begin{align*}
\mathbbm{1}\{\bm{T}_i = (z, \bm{v})\} \mathbbm{1}\{\bm{T}_i = (z^{\prime},\bm{v}^{\prime})\} = 0.
\end{align*}
Therefore,
\begin{align*}
\Cov\left[\mathbbm{1}\{\bm{T}_i = (z, \bm{v})\},
          \mathbbm{1}\{\bm{T}_i = (z^{\prime},\bm{v}^{\prime})\}\right]
& = 0 
   - \dfrac{n_{z, \bm{v}}}{N}
     \dfrac{n_{z^{\prime},\bm{v}^{\prime}}}{N} \\
& = - \dfrac{n_{z, \bm{v}} n_{z^{\prime},\bm{v}^{\prime}}}{N^2},
\end{align*}
which establishes \eqref{eq: assign Cov III}.
\end{proof}

The first proposition of this subsection establishes part \textit{(i)} of \Cref{prop: two blindnesses} of the main text --- the plug-in estimator $\widehat{\theta}$ is unbiased for the estimand $\theta$ --- the part whose proof \Cref{app: double duty proof} defers to this subsection.

\begin{prop}[Unbiasedness of the plug-in estimator] \label{prop: theta EV}
Under \Cref{assm: SUTVA}, together with \Cref{assm: CRA} of the main text, $\widehat{\theta}$ in \eqref{eq: theta estimator} is unbiased for $\theta$ in \eqref{eq: theta estimand app}:
\begin{align}
\E\left[\widehat{\theta}\right] = \theta,
\end{align}
where the expectation $\E$ is taken with respect to the randomization distribution of the configuration assignment $\bm{T}$, which \Cref{assm: CRA} fixes as the uniform distribution on $\Omega$.
\end{prop}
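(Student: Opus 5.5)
The plan is to reduce the claim to unbiasedness of each cell mean and then use linearity. Because the weights $g_0$ and $g_1$ are fixed by the researcher and carry no randomness, linearity of expectation applied to \eqref{eq: theta estimator} gives
\begin{align*}
\E\left[\widehat{\theta}\right] = \sum_{\bm{v} \in \mathcal{V}} g_1(\bm{v})\, \E\left[\hat{\mu}(1, \bm{v})\right] - \sum_{\bm{v} \in \mathcal{V}} g_0(\bm{v})\, \E\left[\hat{\mu}(0, \bm{v})\right].
\end{align*}
It therefore suffices to show that $\E[\hat{\mu}(z, \bm{v})] = \mu(z, \bm{v})$ for every profile $(z, \bm{v}) \in \mathcal{T}$. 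Comparing term by term with \eqref{eq: theta estimand app} then yields $\E[\widehat{\theta}] = \theta$.

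For a fixed cell, I first use \Cref{assm: SUTVA} to rewrite each summand of \eqref{eq: profile-cell estimator I}. On the event $\bm{T}_i = (z, \bm{v})$, the observed outcome $Y_i = y_i(\bm{T})$ equals the fixed number $y_i(z, \bm{v})$. Hence $Y_i \mathbbm{1}\{\bm{T}_i = (z, \bm{v})\} = y_i(z, \bm{v})\, \mathbbm{1}\{\bm{T}_i = (z, \bm{v})\}$ identically, and the only random quantity left is the indicator. Taking expectations and applying \eqref{eq: assign EV} of \Cref{lem: randomization moms}:
\begin{align*}
\E\left[\hat{\mu}(z, \bm{v})\right] = \dfrac{1}{n_{z, \bm{v}}} \sum_{i=1}^N y_i(z, \bm{v})\, \dfrac{n_{z, \bm{v}}}{N} = \dfrac{1}{N}\sum_{i=1}^N y_i(z, \bm{v}) = \mu(z, \bm{v}).
\end{align*}
This step needs $n_{z, \bm{v}} \geq 1$, which \Cref{assm: CRA} guarantees by making every cell count a positive integer, so no cell mean is undefined.

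The main point requiring care, rather than a real obstacle, is the SUTVA substitution. Without \Cref{assm: SUTVA}, $Y_i$ could depend on other units' configurations, and the product $Y_i \mathbbm{1}\{\cdot\}$ would not factor into a constant times an indicator. I will state this explicitly. I will also note that profile-blindness is never invoked: only the unit-invariant marginal probability $n_{z,\bm{v}}/N$ enters. This confirms that part \textit{(i)} of \Cref{prop: two blindnesses} rests on unit-blindness alone, for arbitrary weights $g_0$ and $g_1$.
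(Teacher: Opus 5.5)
Your proposal is correct and follows essentially the same route as the paper's proof. Both use the SUTVA substitution $Y_i \mathbbm{1}\{\bm{T}_i = (z,\bm{v})\} = y_i(z,\bm{v})\,\mathbbm{1}\{\bm{T}_i = (z,\bm{v})\}$, pull out the fixed weights by linearity of expectation, and then use the marginal assignment probability $n_{z,\bm{v}}/N$ from the randomization-moments lemma to cancel the factor $1/n_{z,\bm{v}}$.
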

\begin{proof}
The proof writes each cell mean as a sum of fixed potential outcomes multiplied by random assignment indicators and applies the expectation of each indicator from \Cref{lem: randomization moms}.

Under SUTVA (\Cref{assm: SUTVA}), $Y_i = y_i(\bm{T}) = y_i(\bm{T}_i)$, so
\begin{align*}
\hat{\mu}(z, \bm{v}) = \dfrac{1}{n_{z, \bm{v}}} \sum \limits_{i = 1}^N y_i(\bm{T}_i) \mathbbm{1}\{\bm{T}_i = (z, \bm{v})\}.
\end{align*}
Each summand is nonzero only when $\bm{T}_i = (z, \bm{v})$, so replacing the random argument $\bm{T}_i$ of $y_i$ with the fixed profile $(z, \bm{v})$ leaves every summand unchanged:
\begin{align} \label{eq: profile-cell estimator II}
\hat{\mu}(z, \bm{v}) & = \dfrac{1}{n_{z, \bm{v}}} \sum \limits_{i = 1}^N y_i(z, \bm{v}) \mathbbm{1}\{\bm{T}_i = (z, \bm{v})\}.
\end{align}
Since $g_1(\bm{v})$ and $g_0(\bm{v})$ are fixed constants for each attribute profile $\bm{v} \in \mathcal{V}$, the linearity of expectation gives
\begin{align} \label{eq: profile-cell estimator III}
\E\left[\widehat{\theta}\right] & = \sum \limits_{\bm{v} \in \mathcal{V}} g_1(\bm{v}) \E\left[\hat{\mu}(1, \bm{v})\right] - \sum \limits_{\bm{v} \in \mathcal{V}} g_0(\bm{v}) \E\left[\hat{\mu}(0, \bm{v})\right].
\end{align}
Under complete random assignment (\Cref{assm: CRA}), the cell sizes $n_{z, \bm{v}}$ are fixed across all $\bm{t} \in \Omega$, and the potential outcomes $y_i(z, \bm{v})$ are fixed features of the finite population. Substituting \eqref{eq: profile-cell estimator II} into \eqref{eq: profile-cell estimator III} and again applying the linearity of expectation therefore yields
\begin{equation*}
\begin{split}
\E\left[\widehat{\theta}\right] & = \sum_{\bm{v} \in \mathcal{V}} g_1(\bm{v}) \dfrac{1}{n_{1, \bm{v}}} \sum \limits_{i = 1}^N y_i(1, \bm{v}) \E\left[\mathbbm{1}\{\bm{T}_i = (1, \bm{v})\}\right] \\
& - \sum \limits_{\bm{v} \in \mathcal{V}} g_0(\bm{v}) \dfrac{1}{n_{0, \bm{v}}} \sum \limits_{i = 1}^N y_i(0, \bm{v}) \E\left[\mathbbm{1}\{\bm{T}_i = (0, \bm{v})\}\right].
\end{split}
\end{equation*}
Then, under complete random assignment (\Cref{assm: CRA}), \eqref{eq: assign EV} of \Cref{lem: randomization moms} implies that
\begin{align*}
\E\left[\widehat{\theta}\right] = \sum_{\bm{v} \in \mathcal{V}} g_1(\bm{v}) \dfrac{1}{n_{1, \bm{v}}} \sum \limits_{i = 1}^N y_i(1, \bm{v}) \left(n_{1, \bm{v}} / N\right) - \sum \limits_{\bm{v} \in \mathcal{V}} g_0(\bm{v}) \dfrac{1}{n_{0, \bm{v}}} \sum \limits_{i = 1}^N y_i(0, \bm{v}) \left(n_{0, \bm{v}} / N\right).
\end{align*}
Since $\dfrac{1}{n_{z, \bm{v}}} \left(n_{z, \bm{v}} / N\right) = \dfrac{1}{N}$, each inner sum reduces to $\dfrac{1}{N} \sum_{i = 1}^N y_i(z, \bm{v}) = \mu(z, \bm{v})$, so
\begin{align*}
\E\left[\widehat{\theta}\right] & = \sum_{\bm{v} \in \mathcal{V}} g_1(\bm{v}) \mu(1, \bm{v}) - \sum \limits_{\bm{v} \in \mathcal{V}} g_0(\bm{v}) \mu(0, \bm{v}),
\end{align*}
which equals $\theta$ in \eqref{eq: theta estimand app}, thereby completing the proof.
\end{proof}

The second proposition of this subsection derives the exact variance of the plug-in estimator $\widehat{\theta}$. The variance involves two population quantities. For each profile $(z, \bm{v}) \in \mathcal{T}$, define the finite-population variance of the potential outcomes at $(z, \bm{v})$ as
\begin{align} \label{eq: S2 def}
S^2\left(z, \bm{v}\right) & \coloneqq \dfrac{1}{N - 1} \sum \limits_{i = 1}^N \left[y_i(z, \bm{v}) - \mu(z, \bm{v})\right]^2,
\end{align}
and, for any two profiles $(z, \bm{v}), (z^{\prime}, \bm{v}^{\prime}) \in \mathcal{T}$, define the finite-population covariance of the potential outcomes at $(z, \bm{v})$ and $(z^{\prime}, \bm{v}^{\prime})$ as
\begin{align} \label{eq: S cov def}
S\left(z, \bm{v}; z^{\prime}, \bm{v}^{\prime}\right) & \coloneqq \dfrac{1}{N - 1} \sum \limits_{i = 1}^N \left[y_i(z, \bm{v}) - \mu(z, \bm{v})\right] \left[y_i(z^{\prime}, \bm{v}^{\prime}) - \mu(z^{\prime}, \bm{v}^{\prime})\right].
\end{align}

\begin{prop}[Exact variance of the plug-in estimator] \label{prop: theta Var}
Under \Cref{assm: SUTVA}, together with \Cref{assm: CRA} of the main text,
\begin{align*}
\Var\left[\widehat{\theta}\right] & = \sum \limits_{\bm{v} \in \mathcal{V}} g_1(\bm{v})^2 \dfrac{N - n_{1, \bm{v}}}{N} \dfrac{S^2(1, \bm{v})}{n_{1, \bm{v}}} + \sum \limits_{\bm{v} \in \mathcal{V}} g_0(\bm{v})^2 \dfrac{N - n_{0, \bm{v}}}{N} \dfrac{S^2(0, \bm{v})}{n_{0, \bm{v}}} \\ 
& - \dfrac{1}{N} \sum \limits_{\bm{v} \neq \bm{v}^{\prime}} g_1(\bm{v}) g_1(\bm{v}^{\prime}) S(1, \bm{v}; 1, \bm{v}^{\prime}) - \dfrac{1}{N} \sum \limits_{\bm{v} \neq \bm{v}^{\prime}} g_0(\bm{v}) g_0(\bm{v}^{\prime}) S(0, \bm{v}; 0, \bm{v}^{\prime}) \\ 
& + \dfrac{2}{N} \sum \limits_{\bm{v} \in \mathcal{V}} \sum \limits_{\bm{v}^{\prime} \in \mathcal{V}} g_1(\bm{v}) g_0(\bm{v}^{\prime}) S(1, \bm{v} ; 0, \bm{v}^{\prime}),
\end{align*}
where the variance $\Var$ is taken with respect to the same randomization distribution of $\bm{T}$ as in \Cref{prop: theta EV}.
\end{prop}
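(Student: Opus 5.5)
The plan is to write $\widehat{\theta}$ as a linear combination of the random assignment indicators with fixed coefficients. Its variance then becomes a double sum of the indicator covariances already computed in \Cref{lem: randomization moms}. As in the proof of \Cref{prop: theta EV}, SUTVA (\Cref{assm: SUTVA}) lets me replace each cell mean by \eqref{eq: profile-cell estimator II}. Writing $s_1 \coloneqq 1$, $s_0 \coloneqq -1$, and $c(z,\bm{v}) \coloneqq s_z\, g_z(\bm{v})/n_{z,\bm{v}}$, this gives
\begin{align*}
\widehat{\theta} = \sum_{i=1}^N \sum_{(z,\bm{v}) \in \mathcal{T}} c(z,\bm{v})\, y_i(z,\bm{v})\, \mathbbm{1}\{\bm{T}_i = (z,\bm{v})\},
\end{align*}
where the only random quantities are the indicators. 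Bilinearity of covariance then expresses $\Var[\widehat{\theta}]$ as a sum over ordered pairs of units $(i,j)$ and ordered pairs of profiles $(t,t')$. Each summand is $c(t)c(t')\,y_i(t)\,y_j(t')$ times the covariance of the corresponding indicators. I will split this sum into four blocks, according to whether $i=j$ or $i\neq j$ and whether $t = t'$ or $t \neq t'$, and plug in \eqref{eq: assign Var}, \eqref{eq: assign Cov I}, \eqref{eq: assign Cov III}, and \eqref{eq: assign Cov II} respectively.

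\textbf{Same-profile blocks ($t = t'$).} These two blocks reproduce the classical completely randomized calculation. Write $\sum_{i\neq j} y_i y_j = (\sum_i y_i)^2 - \sum_i y_i^2$ and combine the diagonal and off-diagonal terms. The result should be $c(t)^2 \, n_t(N-n_t)\, S^2(t)/N$. Since $c(t)^2 n_t^2 = g_z(\bm{v})^2$, this equals $g_z(\bm{v})^2 \frac{N-n_{z,\bm{v}}}{N}\frac{S^2(z,\bm{v})}{n_{z,\bm{v}}}$, which gives the first line of the stated formula.

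\textbf{Distinct-profile blocks ($t \neq t'$).} Here I use $\sum_{i\neq j} y_i(t) y_j(t') = N^2\mu(t)\mu(t') - \sum_i y_i(t)y_i(t')$. Combined with the two covariances $n_tn_{t'}/[N^2(N-1)]$ for $i\neq j$ and $-n_tn_{t'}/N^2$ for $i=j$, the bracket should collapse to
\begin{align*}
\frac{c(t)c(t')\,n_tn_{t'}}{N^2}\cdot\frac{N}{N-1}\Big[N\mu(t)\mu(t') - \sum_i y_i(t)y_i(t')\Big] = -\frac{1}{N}\, s_z s_{z'}\, g_z(\bm{v})g_{z'}(\bm{v}')\, S(t;t').
\end{align*}

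\textbf{Assembly.} Summing the distinct-profile result over ordered pairs gives the remaining terms. Pairs within one racial condition have $s_zs_{z'} = 1$ and $\bm{v}\neq\bm{v}'$; these yield the two $-\frac1N\sum_{\bm{v}\neq\bm{v}'}$ terms. Cross-race pairs have $s_zs_{z'}=-1$ and arise in both orders, $(1,\bm{v};0,\bm{v}')$ and $(0,\bm{v}';1,\bm{v})$. By symmetry of $S$ these contribute $+\frac{2}{N}\sum_{\bm{v}}\sum_{\bm{v}'} g_1g_0 S(1,\bm{v};0,\bm{v}')$, which completes the stated formula.

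The main obstacle is the bookkeeping in the distinct-profile step. One has to check that the $i=j$ and $i\neq j$ pieces recombine exactly into $-\frac{1}{N}$ times the centered covariance $S$. The factor $\frac{N}{N-1}$ must match the $\frac{1}{N-1}$ normalization in \eqref{eq: S cov def}, and the $N\mu\mu'$ correction must produce the centering. I will verify this identity once for generic $t\neq t'$ before specializing the signs.
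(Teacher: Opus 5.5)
Your proposal is correct and follows essentially the same route as the paper. Both arguments use SUTVA to write each cell mean as fixed potential outcomes times assignment indicators, expand the variance by bilinearity of covariance, and evaluate each block with the indicator moments of \Cref{lem: randomization moms} and the identity $\sum_{i\neq j} y_i(t)y_j(t') = N^2\mu(t)\mu(t') - \sum_i y_i(t)y_i(t')$. The only difference is bookkeeping: the paper first splits $\Var[\widehat{\theta}]$ into the two within-condition variances and the cross-condition covariance and treats each separately, whereas your signed coefficients $c(z,\bm{v}) = s_z\, g_z(\bm{v})/n_{z,\bm{v}}$ let one generic distinct-profile computation produce the within-condition and cross-condition covariance terms together.
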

\begin{proof}
The proof decomposes the variance of $\widehat{\theta}$ into three components --- the variance of the weighted sum under each racial condition and the covariance between the two weighted sums --- and evaluates each component with the moments of the indicators from \Cref{lem: randomization moms}.

The estimator $\widehat{\theta}$ in \eqref{eq: theta estimator} is a linear combination of the cell means $\hat{\mu}(z, \bm{v})$ with fixed weights $g_1(\bm{v})$ and $-g_0(\bm{v})$. For fixed constants $\alpha_k$ and $\beta_l$ and random variables $X_k$ and $Y_l$, covariance distributes over weighted sums: $\Cov\left[\sum_k \alpha_k X_k, \, \sum_l \beta_l Y_l\right] = \sum_k \sum_l \alpha_k \beta_l\, \Cov\left[X_k, Y_l\right]$ (the bilinearity of covariance). The variance of a random variable is the covariance of the random variable with itself. Writing $\Var[\widehat{\theta}]$ as the covariance of $\widehat{\theta}$ with itself and applying the bilinearity of covariance to the difference of the two weighted sums in \eqref{eq: theta estimator},
\begin{equation} \label{eq: theta var decomp}
\begin{split}
\Var\left[\widehat{\theta}\right] & = \Var\left[\sum \limits_{\bm{v} \in \mathcal{V}} g_1(\bm{v}) \hat{\mu}(1, \bm{v})\right] + \Var\left[\sum \limits_{\bm{v} \in \mathcal{V}} g_0(\bm{v}) \hat{\mu}(0, \bm{v})\right] \\ 
& - 2 \Cov\left[ \sum \limits_{\bm{v} \in \mathcal{V}} g_1(\bm{v}) \hat{\mu}(1, \bm{v}), \, \sum \limits_{\bm{v} \in \mathcal{V}} g_0(\bm{v}) \hat{\mu}(0, \bm{v})\right]. 
\end{split}
\end{equation}
Take the three components of \eqref{eq: theta var decomp} one at a time. For the first two, again by the bilinearity of covariance,
\small
\begin{align}
\Var\left[\sum \limits_{\bm{v} \in \mathcal{V}} g_1(\bm{v}) \hat{\mu}(1, \bm{v})\right] & = \sum_{\bm{v} \in \mathcal{V}} g_1(\bm{v})^2 \Var\left[\hat{\mu}(1, \bm{v})\right] + \sum_{\bm{v} \neq \bm{v}^{\prime}} g_1(\bm{v}) g_1(\bm{v}^{\prime}) \Cov\left[\hat{\mu}(1, \bm{v}), \hat{\mu}(1, \bm{v}^{\prime})\right] \label{eq: theta var decomp A} \\
\Var\left[\sum \limits_{\bm{v} \in \mathcal{V}} g_0(\bm{v}) \hat{\mu}(0, \bm{v})\right] & = \sum_{\bm{v} \in \mathcal{V}} g_0(\bm{v})^2 \Var\left[\hat{\mu}(0, \bm{v})\right] + \sum_{\bm{v} \neq \bm{v}^{\prime}}  g_0(\bm{v}) g_0(\bm{v}^{\prime})\Cov\left[\hat{\mu}(0, \bm{v}), \hat{\mu}(0, \bm{v}^{\prime})\right]. \label{eq: theta var decomp B}
\end{align}
\normalsize
Under SUTVA (\Cref{assm: SUTVA}), the cell mean $\hat{\mu}(z, \bm{v})$ takes the form \eqref{eq: profile-cell estimator II}, so
\begin{align*}
\Var\left[\hat{\mu}(z, \bm{v})\right] & = \Var\left[\dfrac{1}{n_{z, \bm{v}}} \sum \limits_{i = 1}^N y_i(z, \bm{v}) \mathbbm{1}\{\bm{T}_i = (z, \bm{v})\}\right],
\end{align*}
which, by the bilinearity of covariance, is
\small
\begin{align*}
\Var\left[\hat{\mu}(z, \bm{v})\right] & = \dfrac{1}{n_{z, \bm{v}}^2} \sum \limits_{i = 1}^N y_i(z, \bm{v})^2 \Var\left[\mathbbm{1}\{\bm{T}_i = (z, \bm{v})\}\right] \\ 
& \qquad + \dfrac{1}{n_{z, \bm{v}}^2} \sum \limits_{i \neq j} y_i(z, \bm{v}) y_j(z, \bm{v}) \Cov\left[\mathbbm{1}\{\bm{T}_i = (z, \bm{v})\}, \mathbbm{1}\{\bm{T}_j = (z, \bm{v})\}\right].
\end{align*}
Then, under complete random assignment (\Cref{assm: CRA}), \eqref{eq: assign Var} and \eqref{eq: assign Cov I} of \Cref{lem: randomization moms} imply that
\begin{align*}
\Var\left[\hat{\mu}(z, \bm{v})\right] & = \dfrac{1}{n_{z, \bm{v}}^2} \sum \limits_{i = 1}^N y_i(z, \bm{v})^2 \left(\dfrac{n_{z, \bm{v}}(N - n_{z, \bm{v}})}{N^2}\right) \\
& \qquad + \dfrac{1}{n_{z, \bm{v}}^2} \sum \limits_{i \neq j} y_i(z, \bm{v}) y_j(z, \bm{v}) \left(- \dfrac{n_{z, \bm{v}}(N - n_{z, \bm{v}})}{N^2(N - 1)}\right).
\end{align*}
\normalsize
Factoring $\dfrac{n_{z, \bm{v}}(N - n_{z, \bm{v}})}{n_{z, \bm{v}}^2 N^2} = \dfrac{N - n_{z, \bm{v}}}{n_{z, \bm{v}} N^2}$ out of both terms of the display above, so that the two sums collect into a single factor in brackets,
\begin{align} \label{eq: mu var bracket}
\Var\left[\hat{\mu}(z, \bm{v})\right] & = \dfrac{N - n_{z, \bm{v}}}{n_{z, \bm{v}} N^2} \left[\sum \limits_{i = 1}^N y_i(z, \bm{v})^2 - \dfrac{1}{N - 1} \sum \limits_{i \neq j} y_i(z, \bm{v}) y_j(z, \bm{v})\right].
\end{align}
The remaining steps rewrite the factor in brackets in \eqref{eq: mu var bracket} as a multiple of a finite-population variance. Splitting the square of the sum $\left(\sum_{i = 1}^N y_i(z, \bm{v})\right)^2 = \sum_{i = 1}^N y_i(z, \bm{v})^2 + \sum_{i \neq j} y_i(z, \bm{v}) y_j(z, \bm{v})$ into its $i = j$ and $i \neq j$ terms and using $\sum_{i = 1}^N y_i(z, \bm{v}) = N \mu(z, \bm{v})$,
\begin{align*}
\sum \limits_{i \neq j} y_i(z, \bm{v}) y_j(z, \bm{v}) & = N^2 \mu(z, \bm{v})^2 - \sum \limits_{i = 1}^N y_i(z, \bm{v})^2.
\end{align*}
Substituting this expression for $\sum_{i \neq j} y_i(z, \bm{v}) y_j(z, \bm{v})$ into the factor in brackets in \eqref{eq: mu var bracket} and combining the two terms of that factor over the common denominator $N - 1$,
\small
\begin{align*}
\sum \limits_{i = 1}^N y_i(z, \bm{v})^2 - \dfrac{N^2 \mu(z, \bm{v})^2 - \sum_{i = 1}^N y_i(z, \bm{v})^2}{N - 1}
& = \dfrac{N \left[\sum_{i = 1}^N y_i(z, \bm{v})^2 - N \mu(z, \bm{v})^2\right]}{N - 1}.
\end{align*}
\normalsize
Expanding the square in $\sum_{i = 1}^N \left[y_i(z, \bm{v}) - \mu(z, \bm{v})\right]^2$ and again using $\sum_{i = 1}^N y_i(z, \bm{v}) = N \mu(z, \bm{v})$ shows that $\sum_{i = 1}^N \left[y_i(z, \bm{v}) - \mu(z, \bm{v})\right]^2 = \sum_{i = 1}^N y_i(z, \bm{v})^2 - N \mu(z, \bm{v})^2$.
The factor in brackets in \eqref{eq: mu var bracket} therefore equals $\dfrac{N}{N - 1} \sum_{i = 1}^N \left[y_i(z, \bm{v}) - \mu(z, \bm{v})\right]^2$, which is $N S^2(z, \bm{v})$ by the definition of the finite-population variance in \eqref{eq: S2 def}, so that
\begin{align}
\Var\left[\hat{\mu}(z, \bm{v})\right] & = \dfrac{N - n_{z, \bm{v}}}{n_{z, \bm{v}} N^2} \cdot \dfrac{N}{N - 1} \sum \limits_{i = 1}^N \left[y_i(z, \bm{v}) - \mu(z, \bm{v})\right]^2 = \dfrac{N - n_{z, \bm{v}}}{N} \dfrac{S^2(z, \bm{v})}{n_{z, \bm{v}}}.
\end{align}

For the covariance terms in \eqref{eq: theta var decomp A} and \eqref{eq: theta var decomp B}, fix $z$ and $\bm{v} \neq \bm{v}^{\prime}$. SUTVA (\Cref{assm: SUTVA}) and the bilinearity of covariance imply that
\small
\begin{align*}
\Cov\left[\hat{\mu}(z, \bm{v}), \hat{\mu}(z, \bm{v}^{\prime})\right] & = \Cov\left[
\dfrac{1}{n_{z, \bm{v}}} \sum \limits_{i = 1}^N y_i(z, \bm{v}) \mathbbm{1}\{\bm{T}_i = (z, \bm{v})\}, \, \dfrac{1}{n_{z, \bm{v}^{\prime}}} \sum \limits_{j=1}^N y_j(z, \bm{v}^{\prime}) \mathbbm{1}\{\bm{T}_j = (z, \bm{v}^{\prime})\}\right] \\
& = \dfrac{1}{n_{z, \bm{v}} n_{z, \bm{v}^{\prime}}} \sum \limits_{i = 1}^N \sum_{j = 1}^N y_i(z, \bm{v}) y_j(z, \bm{v}^{\prime}) \Cov\left[\mathbbm{1}\{\bm{T}_i = (z, \bm{v})\}, \mathbbm{1}\{\bm{T}_j = (z, \bm{v}^{\prime})\}\right].
\end{align*}
\normalsize
Then \eqref{eq: assign Cov II} and \eqref{eq: assign Cov III} of \Cref{lem: randomization moms} imply that
\small
\begin{align}
\Cov\left[\hat{\mu}(z, \bm{v}), \hat{\mu}(z, \bm{v}^{\prime})\right] & = \dfrac{1}{n_{z, \bm{v}} n_{z, \bm{v}^{\prime}}}
\Bigg[\sum \limits_{i = 1}^N y_i(z, \bm{v}) y_i(z, \bm{v}^{\prime}) \left(- \dfrac{n_{z, \bm{v}} n_{z, \bm{v}^{\prime}}}{N^2}\right) \nonumber \\
& \qquad \qquad + \sum \limits_{i \neq j} y_i(z, \bm{v}) y_j(z, \bm{v}^{\prime}) \dfrac{n_{z, \bm{v}} n_{z, \bm{v}^{\prime}}}{N^2(N - 1)}\Bigg] \nonumber \\
& = \dfrac{1}{N^2} \left[- \sum \limits_{i = 1}^N y_i(z, \bm{v}) y_i(z, \bm{v}^{\prime}) + \dfrac{1}{N - 1} \sum \limits_{i \neq j} y_i(z, \bm{v}) y_j(z, \bm{v}^{\prime})\right]. \label{eq: cov bracket}
\end{align}
\normalsize
The remaining steps rewrite the factor in brackets in \eqref{eq: cov bracket} as a multiple of a finite-population covariance. Splitting the product of the two sums into $i = j$ and $i \neq j$ terms and using $\sum_{i = 1}^N y_i(z, \bm{v}) = N \mu(z, \bm{v})$,
\begin{align*}
\sum \limits_{i \neq j} y_i(z, \bm{v}) y_j(z, \bm{v}^{\prime}) & = \left(\sum \limits_{i = 1}^N y_i(z, \bm{v})\right) \left(\sum \limits_{j = 1}^N y_j(z, \bm{v}^{\prime})\right) - \sum \limits_{i = 1}^N y_i(z, \bm{v}) y_i(z, \bm{v}^{\prime}) \\
& = N^2 \mu(z, \bm{v})\, \mu(z, \bm{v}^{\prime}) - \sum \limits_{i = 1}^N y_i(z, \bm{v}) y_i(z, \bm{v}^{\prime}).
\end{align*}
Substituting this expression for $\sum_{i \neq j} y_i(z, \bm{v}) y_j(z, \bm{v}^{\prime})$ into the factor in brackets in \eqref{eq: cov bracket} and combining the two terms of that factor over the common denominator $N - 1$, which gives the product term $\sum_{i = 1}^N y_i(z, \bm{v}) y_i(z, \bm{v}^{\prime})$ the coefficient $-(N - 1) - 1 = -N$,
\small
\begin{align*}
& - \sum \limits_{i = 1}^N y_i(z, \bm{v}) y_i(z, \bm{v}^{\prime}) + \dfrac{N^2 \mu(z, \bm{v})\, \mu(z, \bm{v}^{\prime}) - \sum_{i = 1}^N y_i(z, \bm{v}) y_i(z, \bm{v}^{\prime})}{N - 1} \\
& = - \dfrac{N}{N - 1} \left[\sum \limits_{i = 1}^N y_i(z, \bm{v}) y_i(z, \bm{v}^{\prime}) - N \mu(z, \bm{v})\, \mu(z, \bm{v}^{\prime})\right].
\end{align*}
\normalsize
Expanding the product of the two deviations from the cell means $\mu(z, \bm{v})$ and $\mu(z, \bm{v}^{\prime})$ in each summand below and using the identities $\sum_{i = 1}^N y_i(z, \bm{v}) = N \mu(z, \bm{v})$ and $\sum_{i = 1}^N y_i(z, \bm{v}^{\prime}) = N \mu(z, \bm{v}^{\prime})$ shows that
\begin{align*}
\sum \limits_{i = 1}^N \left[y_i(z, \bm{v}) - \mu(z, \bm{v})\right] \left[y_i(z, \bm{v}^{\prime}) - \mu(z, \bm{v}^{\prime})\right] = \sum \limits_{i = 1}^N y_i(z, \bm{v}) y_i(z, \bm{v}^{\prime}) - N \mu(z, \bm{v})\, \mu(z, \bm{v}^{\prime}).
\end{align*}
By the definition of the finite-population covariance in \eqref{eq: S cov def}, the left-hand side of the display above equals $(N - 1)\, S\left(z, \bm{v}; z, \bm{v}^{\prime}\right)$, so the factor in brackets in \eqref{eq: cov bracket} equals $- \dfrac{N}{N - 1} \cdot (N - 1)\, S\left(z, \bm{v}; z, \bm{v}^{\prime}\right) = - N S\left(z, \bm{v}; z, \bm{v}^{\prime}\right)$. Restoring the leading factor $1/N^2$ that multiplies the brackets in \eqref{eq: cov bracket},
\begin{align*}
\Cov\left[\hat{\mu}(z, \bm{v}), \hat{\mu}(z, \bm{v}^{\prime})\right] & = - \dfrac{1}{N} S\left(z, \bm{v}; z, \bm{v}^{\prime}\right).
\end{align*}

Consequently, the first two components of $\Var[\widehat{\theta}]$ in \eqref{eq: theta var decomp A} and \eqref{eq: theta var decomp B}, respectively, are
\small
\begin{align}
\Var\left[\sum \limits_{\bm{v} \in \mathcal{V}} g_1(\bm{v}) \hat{\mu}(1, \bm{v})\right] & = \sum_{\bm{v} \in \mathcal{V}} g_1(\bm{v})^2 \dfrac{N - n_{1, \bm{v}}}{N} \dfrac{S^2(1, \bm{v})}{n_{1, \bm{v}}} - \sum_{\bm{v} \neq \bm{v}^{\prime}} g_1(\bm{v}) g_1(\bm{v}^{\prime}) \dfrac{1}{N} S\left(1, \bm{v}; 1, \bm{v}^{\prime}\right) \label{eq: theta var decomp A final} \\
\Var\left[\sum \limits_{\bm{v} \in \mathcal{V}} g_0(\bm{v}) \hat{\mu}(0, \bm{v})\right] & = \sum_{\bm{v} \in \mathcal{V}} g_0(\bm{v})^2 \dfrac{N - n_{0, \bm{v}}}{N} \dfrac{S^2(0, \bm{v})}{n_{0, \bm{v}}} - \sum_{\bm{v} \neq \bm{v}^{\prime}}  g_0(\bm{v}) g_0(\bm{v}^{\prime}) \dfrac{1}{N} S\left(0, \bm{v}; 0, \bm{v}^{\prime}\right). \label{eq: theta var decomp B final}
\end{align}
\normalsize

For the third component of $\Var[\widehat{\theta}]$ in \eqref{eq: theta var decomp},
\begin{align} \label{eq: theta var decomp C}
- 2 \Cov\left[ \sum \limits_{\bm{v} \in \mathcal{V}} g_1(\bm{v}) \hat{\mu}(1, \bm{v}), \, \sum \limits_{\bm{v} \in \mathcal{V}} g_0(\bm{v}) \hat{\mu}(0, \bm{v})\right],
\end{align}
the bilinearity of covariance implies that 
\small
\begin{align}
\Cov\left[\sum \limits_{\bm{v} \in \mathcal{V}} g_1(\bm{v}) \hat{\mu}(1, \bm{v}), \, \sum \limits_{\bm{v}^{\prime} \in \mathcal{V}} g_0(\bm{v}^{\prime}) \hat{\mu}(0, \bm{v}^{\prime})\right] & = \sum \limits_{\bm{v} \in \mathcal{V}} \sum_{\bm{v}^{\prime} \in \mathcal{V}} g_1(\bm{v}) g_0(\bm{v}^{\prime}) \Cov\left[\hat{\mu}(1, \bm{v}), \hat{\mu}(0, \bm{v}^{\prime})\right]. \label{eq: theta var decomp C cov}
\end{align}
\normalsize
Then SUTVA (\Cref{assm: SUTVA}), together with \eqref{eq: assign Cov II} and \eqref{eq: assign Cov III} of \Cref{lem: randomization moms}, yields
\begin{align*}
\Cov\left[\hat{\mu}(1, \bm{v}), \hat{\mu}(0, \bm{v}^{\prime})\right] & = \dfrac{1}{n_{1, \bm{v}} n_{0, \bm{v}^{\prime}}} \Bigg[\sum \limits_{i = 1}^N y_i(1, \bm{v}) y_i(0, \bm{v}^{\prime}) \left(- \dfrac{n_{1, \bm{v}} n_{0, \bm{v}^{\prime}}}{N^2}\right) \\
& + \sum \limits_{i \neq j} y_i(1, \bm{v}) y_j(0, \bm{v}^{\prime}) \dfrac{n_{1, \bm{v}} n_{0, \bm{v}^{\prime}}}{N^2 (N-1)} \Bigg] \\
& = \dfrac{1}{N^2} \left[- \sum \limits_{i=1}^N y_i(1, \bm{v}) y_i(0, \bm{v}^{\prime})
+ \dfrac{1}{N-1} \sum \limits_{i \neq j} y_i(1, \bm{v}) y_j(0, \bm{v}^{\prime})
\right],
\end{align*}
which reduces by the same three steps as the covariance $\Cov\left[\hat{\mu}(z, \bm{v}), \hat{\mu}(z, \bm{v}^{\prime})\right]$ above --- splitting the product of the two sums, combining over the common denominator $N - 1$, and recognizing the finite-population covariance \eqref{eq: S cov def} --- now with $y_i(1, \bm{v})$ and $y_i(0, \bm{v}^{\prime})$ in place of $y_i(z, \bm{v})$ and $y_i(z, \bm{v}^{\prime})$.
Restoring the leading factor $1/N^2$ that multiplies the brackets in the display above then yields
\begin{align} \label{eq: cov across z}
\Cov\left[\hat{\mu}(1, \bm{v}), \hat{\mu}(0, \bm{v}^{\prime})\right] & = - \dfrac{1}{N} S\left(1, \bm{v}; 0, \bm{v}^{\prime}\right).
\end{align}

Substituting \eqref{eq: cov across z} into \eqref{eq: theta var decomp C cov} then yields
\begin{align*}
\Cov\left[\sum \limits_{\bm{v} \in \mathcal{V}} g_1(\bm{v}) \hat{\mu}(1, \bm{v}), \, \sum \limits_{\bm{v}^{\prime} \in \mathcal{V}} g_0(\bm{v}^{\prime}) \hat{\mu}(0, \bm{v}^{\prime})\right]
& = - \dfrac{1}{N} \sum_{\bm{v} \in \mathcal{V}} \sum_{\bm{v}^{\prime} \in \mathcal{V}} g_1(\bm{v}) g_0(\bm{v}^{\prime}) S(1, \bm{v}; 0, \bm{v}^{\prime}).
\end{align*}
Multiplying the covariance by $-2$ shows that the third component \eqref{eq: theta var decomp C} of the variance decomposition in \eqref{eq: theta var decomp} equals
\begin{align} \label{eq: theta var decomp C final}
\dfrac{2}{N} \sum_{\bm{v} \in \mathcal{V}} \sum_{\bm{v}^{\prime} \in \mathcal{V}} g_1(\bm{v}) g_0(\bm{v}^{\prime}) S(1, \bm{v}; 0, \bm{v}^{\prime}).
\end{align}

Finally, putting \eqref{eq: theta var decomp A final}, \eqref{eq: theta var decomp B final}, and \eqref{eq: theta var decomp C final} together yields
\begin{equation}
\begin{split}
\Var\left[\widehat{\theta}\right] & = \sum \limits_{\bm{v} \in \mathcal{V}} g_1(\bm{v})^2 \dfrac{N - n_{1, \bm{v}}}{N} \dfrac{S^2(1, \bm{v})}{n_{1, \bm{v}}} + \sum \limits_{\bm{v} \in \mathcal{V}} g_0(\bm{v})^2 \dfrac{N - n_{0, \bm{v}}}{N} \dfrac{S^2(0, \bm{v})}{n_{0, \bm{v}}} \\ 
& - \dfrac{1}{N} \sum \limits_{\bm{v} \neq \bm{v}^{\prime}} g_1(\bm{v}) g_1(\bm{v}^{\prime}) S(1, \bm{v}; 1, \bm{v}^{\prime}) - \dfrac{1}{N} \sum \limits_{\bm{v} \neq \bm{v}^{\prime}} g_0(\bm{v}) g_0(\bm{v}^{\prime}) S(0, \bm{v}; 0, \bm{v}^{\prime}) \\ 
& + \dfrac{2}{N} \sum \limits_{\bm{v} \in \mathcal{V}} \sum \limits_{\bm{v}^{\prime} \in \mathcal{V}} g_1(\bm{v}) g_0(\bm{v}^{\prime}) S(1, \bm{v} ; 0, \bm{v}^{\prime}),
\end{split}
\end{equation}
which completes the proof.
\end{proof}

The exact variance in \Cref{prop: theta Var} cannot be estimated. Each unit reveals the potential outcome of at most one cell --- the fundamental problem of causal inference \citep{holland1986} --- so no assignment identifies the cross-cell covariances $S(z, \bm{v}; z^{\prime}, \bm{v}^{\prime})$, which pair potential outcomes that no unit exhibits together. \Cref{cor: theta Var upper} therefore replaces the exact variance with an upper bound whose every ingredient, including the correction term $\Gamma$, is a within-cell variance $S^2(z, \bm{v})$ or its square root, each identified by the units assigned to the corresponding cell. Confidence intervals computed from the upper bound are conservative rather than exact.

\begin{cor} \label{cor: theta Var upper}
Under \Cref{assm: SUTVA}, together with \Cref{assm: CRA} of the main text,
\begin{align*}
\Var\left[\widehat{\theta}\right] & \leq \overline{\Var}\left[\widehat{\theta}\right] = \sum_{z \in \{0, 1\}} \sum_{\bm{v} \in \mathcal{V}} \dfrac{g_z(\bm{v})^2 S^2(z, \bm{v})}{n_{z, \bm{v}}} - \dfrac{1}{N} \max\left\{0, \, \Gamma\right\},
\end{align*}
where $S(z, \bm{v}) \coloneqq \sqrt{S^2(z, \bm{v})}$ and the correction term $\Gamma$ is
\begin{align*}
\Gamma \coloneqq 2 \sum_{z \in \{0, 1\}} \sum_{\bm{v} \in \mathcal{V}} g_z(\bm{v})^2 S^2(z, \bm{v}) - \left(\sum_{\bm{v} \in \mathcal{V}} g_1(\bm{v}) S\left(1, \bm{v}\right) + \sum_{\bm{v} \in \mathcal{V}} g_0(\bm{v}) S\left(0, \bm{v}\right)\right)^2.
\end{align*}
\end{cor}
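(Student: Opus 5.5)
The plan is to rewrite the exact variance of \Cref{prop: theta Var} as a sum of identified within-cell terms minus a single unidentified, nonnegative quantity, and then bound that quantity from below using within-cell standard deviations alone. For each unit define the weighted unit-level contrast
\begin{align*}
\tau_i \coloneqq \sum_{\bm{v} \in \mathcal{V}} g_1(\bm{v})\, y_i(1, \bm{v}) - \sum_{\bm{v} \in \mathcal{V}} g_0(\bm{v})\, y_i(0, \bm{v}),
\end{align*}
whose finite-population mean is $\theta$. Let $S^2_\tau$ denote its finite-population variance, computed with divisor $N - 1$ as in \eqref{eq: S2 def}. By bilinearity of the finite-population covariance, $S^2_\tau$ expands into three kinds of terms: the diagonal terms $\sum_z \sum_{\bm{v}} g_z(\bm{v})^2 S^2(z, \bm{v})$, the within-race cross terms $g_z(\bm{v}) g_z(\bm{v}^{\prime}) S(z, \bm{v}; z, \bm{v}^{\prime})$ for $\bm{v} \neq \bm{v}^{\prime}$, and the across-race cross terms $-2 g_1(\bm{v}) g_0(\bm{v}^{\prime}) S(1, \bm{v}; 0, \bm{v}^{\prime})$.

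The first step splits each factor as $\frac{N - n_{z, \bm{v}}}{N} \frac{S^2(z, \bm{v})}{n_{z, \bm{v}}} = \frac{S^2(z, \bm{v})}{n_{z, \bm{v}}} - \frac{S^2(z, \bm{v})}{N}$. Matching the result term by term against the formula in \Cref{prop: theta Var} gives the identity
\begin{align*}
\Var\left[\widehat{\theta}\right] = \sum_{z \in \{0, 1\}} \sum_{\bm{v} \in \mathcal{V}} \dfrac{g_z(\bm{v})^2 S^2(z, \bm{v})}{n_{z, \bm{v}}} - \dfrac{1}{N} S^2_\tau .
\end{align*}
This is the analogue of Neyman's identity. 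The only bookkeeping to check is that the $-\frac1N$ cross terms and the $+\frac2N$ across-race terms in \Cref{prop: theta Var} are exactly $-\frac1N$ times the corresponding pieces of $S^2_\tau$.

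It then suffices to show that $S^2_\tau \geq \max\{0, \Gamma\}$. Nonnegativity holds because $S^2_\tau$ is a variance. For the bound by $\Gamma$, index the $2\abs{\mathcal{V}}$ cells by $k$, set the signed weights $c_k = g_1(\bm{v})$ for cells $(1, \bm{v})$ and $c_k = -g_0(\bm{v})$ for cells $(0, \bm{v})$, and write $a_k \coloneqq \abs{c_k}\, S(k) = g_z(\bm{v})\, S(z, \bm{v})$. Then
\begin{align*}
S^2_\tau = \sum_k a_k^2 + \sum_{k \neq l} c_k c_l\, S(k; l).
\end{align*}
By Cauchy--Schwarz on the finite-population covariance, $c_k c_l\, S(k; l) \geq -a_k a_l$. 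Hence
\begin{align*}
S^2_\tau \geq \sum_k a_k^2 - \sum_{k \neq l} a_k a_l = 2 \sum_k a_k^2 - \Big(\sum_k a_k\Big)^2 = \Gamma .
\end{align*}
Substituting $S^2_\tau \geq \max\{0, \Gamma\}$ into the identity yields the stated upper bound.

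I expect the main obstacle to be the algebra of the first step. The signs of the across-race covariance terms and the handling of the $\bm{v} = \bm{v}^{\prime}$ versus $\bm{v} \neq \bm{v}^{\prime}$ terms must line up exactly. In particular, the across-race double sum in \Cref{prop: theta Var} runs over all pairs $(\bm{v}, \bm{v}^{\prime})$, including equal ones. This matches the expansion of $S^2_\tau$, where $(1, \bm{v})$ and $(0, \bm{v})$ are distinct cells. The Cauchy--Schwarz step is routine once the variance is written in terms of $\tau_i$.
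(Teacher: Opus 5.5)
Your proof is correct and follows essentially the same route as the paper. The paper also defines the unit-level contrast (its $\psi_i$), derives the identity $\Var[\widehat{\theta}] = \sum_{z}\sum_{\bm{v}} g_z(\bm{v})^2 S^2(z,\bm{v})/n_{z,\bm{v}} - S^2(\psi)/N$, and bounds the unidentified term below by both $0$ and $\Gamma$ via Cauchy--Schwarz; the only difference is that it applies Cauchy--Schwarz separately to the within-condition and cross-condition covariances and then notes the equivalence with $S^2(\psi) \geq \Gamma$, whereas your signed weights $c_k$ handle both kinds of covariance in one step.
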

\begin{proof}
From the expression for the variance in \Cref{prop: theta Var}, $\Var[\widehat{\theta}]$ is
\begin{align}
\Var\left[\widehat{\theta}\right]
& = \sum \limits_{\bm{v} \in \mathcal{V}} g_1(\bm{v})^2 \dfrac{N - n_{1, \bm{v}}}{N} \dfrac{S^2(1, \bm{v})}{n_{1, \bm{v}}} - \dfrac{1}{N} \sum \limits_{\bm{v} \neq \bm{v}^{\prime}} g_1(\bm{v}) g_1(\bm{v}^{\prime}) S(1, \bm{v}; 1, \bm{v}^{\prime}) \label{eq: theta Var line I} \\ 
& + \sum \limits_{\bm{v} \in \mathcal{V}} g_0(\bm{v})^2 \dfrac{N - n_{0, \bm{v}}}{N} \dfrac{S^2(0, \bm{v})}{n_{0, \bm{v}}} - \dfrac{1}{N} \sum \limits_{\bm{v} \neq \bm{v}^{\prime}} g_0(\bm{v}) g_0(\bm{v}^{\prime}) S(0, \bm{v}; 0, \bm{v}^{\prime}) \label{eq: theta Var line II} \\ 
& + \dfrac{2}{N} \sum \limits_{\bm{v} \in \mathcal{V}} \sum \limits_{\bm{v}^{\prime} \in \mathcal{V}} g_1(\bm{v}) g_0(\bm{v}^{\prime}) S(1, \bm{v} ; 0, \bm{v}^{\prime}). \label{eq: theta Var line III}
\end{align}
The proof proceeds in three steps. Step 1 rewrites the exact variance so that every unidentified covariance is collected into a single nonnegative quantity, the variance of the unit-level contrasts; dropping that quantity gives the first upper bound. Step 2 instead bounds each unidentified covariance by Cauchy-Schwarz, which gives the second upper bound and the correction term $\Gamma$. Step 3 takes the smaller of the two upper bounds, which is the bound the corollary states.

\textit{Step 1: an exact rewriting of the variance.} For each unit $i$, define the unit-level contrast
\begin{align*}
\psi_i \coloneqq \sum_{\bm{v} \in \mathcal{V}} g_1(\bm{v})\, y_i(1, \bm{v}) - \sum_{\bm{v} \in \mathcal{V}} g_0(\bm{v})\, y_i(0, \bm{v}).
\end{align*}
Exchanging the finite sums over $i$ and $\bm{v}$ shows that the mean $\bar{\psi} \coloneqq (1/N) \sum_{i=1}^N \psi_i$ equals $\theta$ in \eqref{eq: theta estimand app}, so the finite-population variance $S^2(\psi) \coloneqq \frac{1}{N - 1} \sum_{i=1}^N (\psi_i - \bar{\psi})^2$ measures the spread of the unit-level contrasts around the estimand. By the bilinearity of the finite-population covariance, substituting the definition of $\psi_i$ into $S^2(\psi)$ and multiplying out the squared difference of the two weighted sums yields one term per pair of cells, with diagonal terms $S(z, \bm{v}; z, \bm{v}) = S^2(z, \bm{v})$:
\begin{equation} \label{eq: S2 delta expansion}
\begin{split}
S^2(\psi) & = \sum_{z \in \{0, 1\}} \sum_{\bm{v} \in \mathcal{V}} g_z(\bm{v})^2 S^2(z, \bm{v}) + \sum_{z \in \{0, 1\}} \sum_{\bm{v} \neq \bm{v}^{\prime}} g_z(\bm{v}) g_z(\bm{v}^{\prime}) S(z, \bm{v}; z, \bm{v}^{\prime}) \\
& \qquad - 2 \sum_{\bm{v} \in \mathcal{V}} \sum_{\bm{v}^{\prime} \in \mathcal{V}} g_1(\bm{v}) g_0(\bm{v}^{\prime}) S(1, \bm{v}; 0, \bm{v}^{\prime}).
\end{split}
\end{equation}
Since $(N - n_{z, \bm{v}}) / N n_{z, \bm{v}} = 1/n_{z, \bm{v}} - 1/N$, the first term of \eqref{eq: theta Var line I} and of \eqref{eq: theta Var line II} decomposes as
\begin{align} \label{eq: cell variance split}
\sum \limits_{\bm{v} \in \mathcal{V}} g_z(\bm{v})^2 \dfrac{N - n_{z, \bm{v}}}{N} \dfrac{S^2(z, \bm{v})}{n_{z, \bm{v}}} & = \sum \limits_{\bm{v} \in \mathcal{V}} \dfrac{g_z(\bm{v})^2 S^2(z, \bm{v})}{n_{z, \bm{v}}} - \dfrac{1}{N} \sum_{\bm{v} \in \mathcal{V}} g_z(\bm{v})^2 S^2(z, \bm{v}),
\end{align}
so \eqref{eq: theta Var line I}--\eqref{eq: theta Var line III} regroup as the cell-variance term $\sum_{z \in \{0, 1\}} \sum_{\bm{v} \in \mathcal{V}} g_z(\bm{v})^2 S^2(z, \bm{v}) / n_{z, \bm{v}}$ minus $1/N$ times the right-hand side of \eqref{eq: S2 delta expansion}. The regrouping matches term by term: The decomposition \eqref{eq: cell variance split} supplies the $-1/N$ multiples of the diagonal terms $g_z(\bm{v})^2 S^2(z, \bm{v})$; the second terms of \eqref{eq: theta Var line I} and \eqref{eq: theta Var line II} are the $-1/N$ multiples of the within-condition covariances; and \eqref{eq: theta Var line III}, with its sign flipped by the leading minus, supplies the $+2/N$ multiples of the cross-condition covariances. The regrouping yields the identity
\begin{align} \label{eq: neyman form identity}
\Var\left[\widehat{\theta}\right] & = \sum_{z \in \{0, 1\}} \sum_{\bm{v} \in \mathcal{V}} \dfrac{g_z(\bm{v})^2 S^2(z, \bm{v})}{n_{z, \bm{v}}} - \dfrac{S^2(\psi)}{N},
\end{align}
the multi-cell analogue of the classical decomposition of the variance of a difference in means, in which $S^2(\psi)$ plays the role of the variance of the unit-level effects. Every unidentified covariance now enters the variance only through $S^2(\psi)$. As a sum of squared deviations divided by $N - 1$, $S^2(\psi) \geq 0$, so dropping the term $- S^2(\psi)/N$ from the identity \eqref{eq: neyman form identity} can only increase the right-hand side. Dropping the term gives the first upper bound,
\begin{align} \label{eq: theta Var LD bound}
\Var\left[\widehat{\theta}\right] & \leq \sum_{z \in \{0, 1\}} \sum_{\bm{v} \in \mathcal{V}} \dfrac{g_z(\bm{v})^2 S^2(z, \bm{v})}{n_{z, \bm{v}}}.
\end{align}

\textit{Step 2: the Cauchy-Schwarz bound.} The second route bounds the unidentified covariances in \eqref{eq: theta Var line I}--\eqref{eq: theta Var line III} one pair of cells at a time, replacing each covariance by a product of identified standard deviations. Fix $z \in \{0, 1\}$. By Cauchy-Schwarz,
\begin{align*}
\abs{S(z, \bm{v}; z, \bm{v}^{\prime})} \leq \sqrt{S^2(z, \bm{v}) S^2(z, \bm{v}^{\prime})},
\end{align*}
and in particular
\begin{align*}
S(z, \bm{v}; z, \bm{v}^{\prime}) \geq - \sqrt{S^2(z, \bm{v}) S^2(z, \bm{v}^{\prime})}.
\end{align*}
The lower bound is the direction the proof needs because the within-condition covariances enter \eqref{eq: theta Var line I} and \eqref{eq: theta Var line II} with a minus sign. Multiplying the lower bound by the nonnegative product of weights $g_z(\bm{v})\, g_z(\bm{v}^{\prime})$ preserves the bound's direction, negating turns the lower bound into an upper bound, and summing over the pairs $\bm{v} \neq \bm{v}^{\prime}$ preserves the upper bound, so
\begin{align} \label{eq: within cov bound}
- \dfrac{1}{N} \sum_{\bm{v} \neq \bm{v}^{\prime}} 
g_z(\bm{v}) g_z(\bm{v}^{\prime}) S(z, \bm{v}; z, \bm{v}^{\prime})
& \leq \dfrac{1}{N} \sum_{\bm{v} \neq \bm{v}^{\prime}} 
g_z(\bm{v}) g_z(\bm{v}^{\prime}) 
\sqrt{S^2(z, \bm{v}) S^2(z, \bm{v}^{\prime})} \nonumber \\
& = \dfrac{1}{N} \sum_{\bm{v} \neq \bm{v}^{\prime}}
g_z(\bm{v}) g_z(\bm{v}^{\prime}) S(z, \bm{v}) S(z, \bm{v}^{\prime}),
\end{align}
where the equality uses the definition $S(z, \bm{v}) = \sqrt{S^2(z, \bm{v})}$ from the statement of the corollary.

Then, by the decomposition \eqref{eq: cell variance split} of the first term of \eqref{eq: theta Var line I} and \eqref{eq: theta Var line II}, together with the identity
\begin{align} \label{eq: off diagonal square identity}
\sum_{\bm{v} \neq \bm{v}^{\prime}} g_z(\bm{v}) S(z, \bm{v}) g_z(\bm{v}^{\prime}) S(z, \bm{v}^{\prime}) = \left(\sum_{\bm{v}} g_z(\bm{v}) S(z, \bm{v})\right)^2 - \sum_{\bm{v}} g_z(\bm{v})^2 S^2(z, \bm{v}),
\end{align}
it follows that
\begin{align*}
\dfrac{1}{N} \sum_{\bm{v} \neq \bm{v}^{\prime}} g_z(\bm{v}) g_z(\bm{v}^{\prime}) S(z, \bm{v}) S(z, \bm{v}^{\prime}) & = \dfrac{1}{N} \left[\left(\sum_{\bm{v} \in \mathcal{V}} g_z(\bm{v}) S(z, \bm{v})\right)^2 - \sum \limits_{\bm{v} \in \mathcal{V}} g_z(\bm{v})^2 S^2(z, \bm{v})\right].
\end{align*}
Combining, for fixed $z$, the decomposition \eqref{eq: cell variance split}, the covariance bound \eqref{eq: within cov bound}, and the identity \eqref{eq: off diagonal square identity} bounds each of \eqref{eq: theta Var line I} and \eqref{eq: theta Var line II} by identified quantities alone:
\small
\begin{align*}
& \sum \limits_{\bm{v} \in \mathcal{V}} g_z(\bm{v})^2 \dfrac{N - n_{z, \bm{v}}}{N} \dfrac{S^2(z, \bm{v})}{n_{z, \bm{v}}} - \dfrac{1}{N} \sum \limits_{\bm{v} \neq \bm{v}^{\prime}} g_z(\bm{v}) g_z(\bm{v}^{\prime}) S(z, \bm{v}; z, \bm{v}^{\prime}) \\
& \leq \sum \limits_{\bm{v} \in \mathcal{V}} \dfrac{g_z(\bm{v})^2 S^2(z, \bm{v})}{n_{z, \bm{v}}} - \dfrac{1}{N} \sum_{\bm{v} \in \mathcal{V}} g_z(\bm{v})^2 S^2(z, \bm{v}) + \dfrac{1}{N} \left(\sum_{\bm{v} \in \mathcal{V}} g_z(\bm{v}) S(z, \bm{v})\right)^2 - \dfrac{1}{N} \sum \limits_{\bm{v} \in \mathcal{V}} g_z(\bm{v})^2 S^2(z, \bm{v}) \\
& = \sum \limits_{\bm{v} \in \mathcal{V}} \dfrac{g_z(\bm{v})^2 S^2(z, \bm{v})}{n_{z, \bm{v}}} - \dfrac{2}{N} \sum_{\bm{v} \in \mathcal{V}} g_z(\bm{v})^2 S^2(z, \bm{v}) + \dfrac{1}{N} \left(\sum_{\bm{v} \in \mathcal{V}} g_z(\bm{v}) S(z, \bm{v})\right)^2,
\end{align*}
\normalsize
where the equality collects the two $- \frac{1}{N} \sum_{\bm{v} \in \mathcal{V}} g_z(\bm{v})^2 S^2(z, \bm{v})$ terms. Therefore, at $z = 1$ and $z = 0$ respectively, \eqref{eq: theta Var line I} and \eqref{eq: theta Var line II} are bounded above by
\begin{align}
\sum \limits_{\bm{v} \in \mathcal{V}} \dfrac{g_1(\bm{v})^2 S^2(1, \bm{v})}{n_{1, \bm{v}}} - \dfrac{2}{N} \sum_{\bm{v} \in \mathcal{V}} g_1(\bm{v})^2 S^2(1, \bm{v}) + \dfrac{1}{N} \left(\sum \limits_{\bm{v} \in \mathcal{V}} g_1(\bm{v}) S(1, \bm{v})\right)^2 \label{eq: theta Var line I UB}
\end{align}
and
\begin{align}
\sum \limits_{\bm{v} \in \mathcal{V}} \dfrac{g_0(\bm{v})^2 S^2(0, \bm{v})}{n_{0, \bm{v}}} - \dfrac{2}{N} \sum_{\bm{v} \in \mathcal{V}} g_0(\bm{v})^2 S^2(0, \bm{v}) + \dfrac{1}{N} \left(\sum \limits_{\bm{v} \in \mathcal{V}} g_0(\bm{v}) S(0, \bm{v})\right)^2, \label{eq: theta Var line II UB}
\end{align}
respectively.

For \eqref{eq: theta Var line III}, Cauchy-Schwarz implies that
\begin{align*}
\abs{S(1, \bm{v}; 0, \bm{v}^{\prime})} \leq \sqrt{S^2(1, \bm{v}) S^2(0, \bm{v}^{\prime})} = S(1, \bm{v}) S(0, \bm{v}^{\prime}),
\end{align*}
and, since $g_1(\bm{v}), g_0(\bm{v}^{\prime}) \geq 0$,
\begin{align}
\dfrac{2}{N} \sum_{\bm{v},\bm{v}^{\prime}} g_1(\bm{v}) g_0(\bm{v}^{\prime}) S(1, \bm{v}; 0, \bm{v}^{\prime}) & \leq \dfrac{2}{N} \sum_{\bm{v},\bm{v}^{\prime}} g_1(\bm{v}) g_0(\bm{v}^{\prime}) S(1, \bm{v}) S(0, \bm{v}^{\prime}) \notag \\
& = \dfrac{2}{N} \left(\sum_{\bm{v} \in \mathcal{V}} g_1(\bm{v}) S(1, \bm{v})\right)
\left(\sum_{\bm{v}^{\prime} \in \mathcal{V}} g_0(\bm{v}^{\prime}) S(0, \bm{v}^{\prime})\right). \label{eq: theta Var line III UB}
\end{align}

Summing the three upper bounds --- \eqref{eq: theta Var line I UB} for \eqref{eq: theta Var line I}, \eqref{eq: theta Var line II UB} for \eqref{eq: theta Var line II}, and \eqref{eq: theta Var line III UB} for \eqref{eq: theta Var line III} --- bounds the whole variance:
\begin{align*}
\Var\left[\widehat{\theta}\right] & \leq \sum_{z \in \{0, 1\}} \sum_{\bm{v} \in \mathcal{V}} \dfrac{g_z(\bm{v})^2 S^2(z, \bm{v})}{n_{z, \bm{v}}} - \dfrac{2}{N} \sum_{z \in \{0, 1\}} \sum_{\bm{v} \in \mathcal{V}} g_z(\bm{v})^2 S^2(z, \bm{v}) \\
& + \dfrac{1}{N}  \left(\sum_{\bm{v} \in \mathcal{V}} g_1(\bm{v}) S(1, \bm{v})\right)^2 + \dfrac{1}{N} \left(\sum_{\bm{v} \in \mathcal{V}} g_0(\bm{v}) S(0, \bm{v})\right)^2 \\
& + \dfrac{2}{N}\left(\sum_{\bm{v} \in \mathcal{V}} g_1(\bm{v}) S(1, \bm{v})\right) \left(\sum_{\bm{v}^{\prime} \in \mathcal{V}} g_0(\bm{v}^{\prime}) S(0, \bm{v}^{\prime})\right).
\end{align*}
Finally, the three square terms in the bound above collect into a single square,
\begin{align*}
& \dfrac{1}{N} \left(\sum_{\bm{v} \in \mathcal{V}} g_1(\bm{v}) S(1, \bm{v})\right)^2 + \dfrac{1}{N}  \left(\sum_{\bm{v} \in \mathcal{V}} g_0(\bm{v}) S(0, \bm{v})\right)^2 \\
& + \dfrac{2}{N} \left(\sum_{\bm{v} \in \mathcal{V}} g_1(\bm{v}) S(1, \bm{v})\right) \left(\sum_{\bm{v}^{\prime} \in \mathcal{V}} g_0(\bm{v}^{\prime}) S(0, \bm{v}^{\prime})\right) \\
& = \dfrac{1}{N} \left(\sum_{\bm{v} \in \mathcal{V}} g_1(\bm{v}) S(1, \bm{v}) + \sum_{\bm{v} \in \mathcal{V}} g_0(\bm{v}) S(0, \bm{v})\right)^2.
\end{align*}
By the definition of $\Gamma$ in the statement of the corollary, the subtracted diagonal term $- \frac{2}{N} \sum_{z} \sum_{\bm{v}} g_z(\bm{v})^2 S^2(z, \bm{v})$ and the collected square together equal $- \Gamma / N$. The substitution expresses the bound through $\Gamma$,
\begin{align} \label{eq: theta Var CS bound}
\Var\left[\widehat{\theta}\right] & \leq \sum_{z \in \{0, 1\}} \sum_{\bm{v} \in \mathcal{V}} \dfrac{g_z(\bm{v})^2 S^2(z, \bm{v})}{n_{z, \bm{v}}} - \dfrac{\Gamma}{N}.
\end{align}
By the identity \eqref{eq: neyman form identity}, the Cauchy-Schwarz bound \eqref{eq: theta Var CS bound} is equivalent to the lower bound $S^2(\psi) \geq \Gamma$. The two steps thus bound the same unidentified quantity $S^2(\psi)$ from below, once by $0$ and once by $\Gamma$, and neither lower bound is always the larger of the two because $\Gamma$ may take either sign.

\textit{Step 3: combining the two bounds.} Steps 1 and 2 each establish a valid upper bound on $\Var[\widehat{\theta}]$ --- \eqref{eq: theta Var LD bound} and \eqref{eq: theta Var CS bound} --- so the variance is at most the smaller of the two upper bounds. Subtracting $\max\{0, \Gamma\}/N$ from the cell-variance term subtracts the larger of the two correction terms, $0$ or $\Gamma$, and therefore selects the smaller upper bound:
\begin{align*}
\Var\left[\widehat{\theta}\right] & \leq \sum_{z \in \{0, 1\}} \sum_{\bm{v} \in \mathcal{V}} \dfrac{g_z(\bm{v})^2 S^2(z, \bm{v})}{n_{z, \bm{v}}} - \dfrac{1}{N} \max\left\{0, \, \Gamma\right\},
\end{align*}
which completes the proof.
\end{proof}

\noindent The identity \eqref{eq: neyman form identity} is the multi-cell form of a decomposition that goes back to \citet{neyman1923} and appears, for two arms, as Theorem~6.2 of \citet[][p.~89]{imbensrubin2015} and, for a general contrast of the means of the arms of a multi-arm experiment, as Theorem~3 of \citet{liding2017}. Discarding the variance of the unit-level contrasts outright, as in \eqref{eq: theta Var LD bound}, is the standard route to a conservative variance in randomization-based inference: For two arms, the resulting bound is what the Neyman variance estimator estimates \citep[][Theorem~6.3 and pp.~92--93]{imbensrubin2015}, and for the general contrast, the bound and its plug-in estimator appear in \citet[][Proposition~3]{liding2017}. Bounding unidentified covariances by Cauchy-Schwarz likewise goes back to \citet{neyman1923}: Given only the identified second moments, the Cauchy-Schwarz bounds on an unidentified covariance are the sharpest available, and sharper bounds require bringing the marginal distributions of the potential outcomes to bear \citep[][pp.~852--854]{aronowetal2014}.

How the two branches relate depends on the number of cells. With a single profile in $\mathcal{V}$ --- the classical two-arm experiment --- the correction term reduces to $\Gamma = \left[g_1(\bm{v}) S(1, \bm{v}) - g_0(\bm{v}) S(0, \bm{v})\right]^2 \geq 0$. The Cauchy-Schwarz branch is therefore never looser than the branch that discards $S^2(\psi)$, the two branches coincide exactly when $g_1(\bm{v}) S(1, \bm{v}) = g_0(\bm{v}) S(0, \bm{v})$, and the maximum always selects the Cauchy-Schwarz branch, which is the classical Neyman bound for a two-arm experiment \citep[][p.~853]{aronowetal2014}. The two classical routes to a conservative variance do coincide in the two-arm case once Cauchy-Schwarz is completed, as in \citet{neyman1923}, with the inequality of arithmetic and geometric means: The completion bounds $2 g_1(\bm{v}) g_0(\bm{v}) S(1, \bm{v}) S(0, \bm{v})$ by $g_1(\bm{v})^2 S^2(1, \bm{v}) + g_0(\bm{v})^2 S^2(0, \bm{v})$, and the resulting bound is exactly \eqref{eq: theta Var LD bound}, the bound that discarding $S^2(\psi)$ produces.

With two or more profiles, no ordering between the branches holds: $\Gamma$ may take either sign, so either branch can be the tighter one, and the maximum genuinely chooses. \Cref{ex: branch divergence} shows the two branches taking different values on one population, with the true variance below both.

\begin{example}[The two branches diverging in a four-prosecutor population] \label{ex: branch divergence}
Simplify the charging example of \Cref{fig: assignment arrays} in the main text to four prosecutors, indexed $i \in \{1, 2, 3, 4\}$. Each case file pairs a race, White ($z = 0$) or Black ($z = 1$), with a single nonracial feature, neighborhood disadvantage $\bm{v} \in \{0, 1\}$, coded $\bm{v} = 1$ for a disadvantaged neighborhood as in the worked examples above. The pairing gives four configurations. The design assigns each of the four configurations to exactly one of the four prosecutors, uniformly at random, so every cell count is $n_{z, \bm{v}} = 1$.

Cells of size one are permissible here because the example evaluates the exact variance and the two bounds, which are population quantities defined for any positive cell counts. The requirement $n_{z, \bm{v}} \geq 2$ applies only to the plug-in estimator of the variance bound, stated in \eqref{eq: plug in var est} below, which the example does not compute.

The two racial conditions share the uniform weighting of panel (a) of \Cref{fig: assignment arrays}, $g_1(\bm{v}) = g_0(\bm{v}) = 1/2$ at both values of $\bm{v}$, so the target is the AEE. The potential outcomes record whether each prosecutor would charge ($1$) or decline ($0$) each file: Prosecutor 1 would charge every file, prosecutors 2 and 3 would charge exactly the Black-marked file from the disadvantaged neighborhood, the configuration $(1, 1)$, and prosecutor 4 would charge no file.
\begin{center}
{\small
\begin{tabular}{@{}lcccc@{}}
\toprule
& \multicolumn{2}{c}{White ($z = 0$)} & \multicolumn{2}{c}{Black ($z = 1$)} \\
\cmidrule(lr){2-3} \cmidrule(lr){4-5}
$y_i(z, \bm{v})$ & $\bm{v} = 0$ & $\bm{v} = 1$ & $\bm{v} = 0$ & $\bm{v} = 1$ \\
\midrule
Prosecutor 1 & 1 & 1 & 1 & 1 \\
Prosecutor 2 & 0 & 0 & 0 & 1 \\
Prosecutor 3 & 0 & 0 & 0 & 1 \\
Prosecutor 4 & 0 & 0 & 0 & 0 \\
\midrule
Cell mean $\mu(z, \bm{v})$ & $1/4$ & $1/4$ & $1/4$ & $3/4$ \\
Cell variance $S^2(z, \bm{v})$ & $1/4$ & $1/4$ & $1/4$ & $1/4$ \\
\bottomrule
\end{tabular}}
\end{center}
Each column contains either one or three ones, so every cell mean is $1/4$ or $3/4$, every cell variance is $S^2(z, \bm{v}) = 1/4$, and $S(z, \bm{v}) = 1/2$ for every cell.

The two branches of the variance bound in \Cref{cor: theta Var upper} now separate. The branch that discards $S^2(\psi)$ gives the upper bound on $\Var[\widehat{\theta}]$
\begin{align*}
\sum_{z \in \{0, 1\}} \sum_{\bm{v}} \dfrac{g_z(\bm{v})^2 S^2(z, \bm{v})}{n_{z, \bm{v}}} = 4 \cdot \left(\dfrac{1}{2}\right)^2 \cdot \dfrac{1/4}{1} = \dfrac{1}{4}.
\end{align*}
The correction term is
\begin{align*}
\Gamma = 2 \sum_{z \in \{0, 1\}} \sum_{\bm{v}} g_z(\bm{v})^2 S^2(z, \bm{v}) - \left(\sum_{z \in \{0, 1\}} \sum_{\bm{v}} g_z(\bm{v}) S(z, \bm{v})\right)^2 = 2 \cdot \dfrac{1}{4} - \left(4 \cdot \dfrac{1}{2} \cdot \dfrac{1}{2}\right)^2 = - \dfrac{1}{2},
\end{align*}
so the Cauchy-Schwarz branch gives the upper bound $\frac{1}{4} - \left(-\frac{1}{2}\right)/4 = \frac{1}{4} + \frac{1}{8} = \frac{3}{8}$ on $\Var[\widehat{\theta}]$.

For the exact variance, evaluate the unit-level contrasts $\psi_i = \frac{1}{2}\left[y_i(1, 0) + y_i(1, 1)\right] - \frac{1}{2}\left[y_i(0, 0) + y_i(0, 1)\right]$. The four values are $(0, \frac{1}{2}, \frac{1}{2}, 0)$, with mean $1/4$; the four deviations from the mean are $\pm 1/4$, so the variance of the unit-level contrasts is $S^2(\psi) = \frac{1}{3} \cdot 4 \cdot \frac{1}{16} = \frac{1}{12}$. The identity \eqref{eq: neyman form identity} then gives
\begin{align*}
\Var\left[\widehat{\theta}\right] = \dfrac{1}{4} - \dfrac{1/12}{4} = \dfrac{11}{48}.
\end{align*}

The same value follows from direct calculation over the set of assignments. The direct calculation uses only the definition of the variance over equally likely assignments, so agreement with the value above checks the identity \eqref{eq: neyman form identity} and the expression for the variance in \Cref{prop: theta Var} on this population. Under \Cref{assm: CRA} of the main text, the set of assignments $\Omega$ contains the $4! = 24$ ways to assign the four configurations to the four prosecutors, each with probability $1/24$.

Prosecutors 2 and 3 have identical potential outcomes, so $\widehat{\theta}$ depends on the assignment only through the configurations that prosecutors 1 and 4 read. Prosecutor 1 reads any one of the four configurations, and prosecutor 4 then reads any one of the remaining three, so the pair of configurations takes $4 \times 3 = 12$ equally likely values, each value collecting two of the $24$ assignments. Each cell mean is the potential outcome of the lone prosecutor assigned to that configuration: The mean $\hat{\mu}(1, 1)$ equals $1$ unless prosecutor 4 reads the file $(1, 1)$, and each of the other three cell means equals $1$ exactly when prosecutor 1 reads the corresponding file. Therefore
\begin{align*}
\widehat{\theta} = \dfrac{1}{2} \Big[ & \mathbbm{1}\{\text{prosecutor 4 does not read } (1, 1)\} + \mathbbm{1}\{\text{prosecutor 1 reads } (1, 0)\} \\
& - \mathbbm{1}\{\text{prosecutor 1 reads } (0, 1)\} - \mathbbm{1}\{\text{prosecutor 1 reads } (0, 0)\} \Big],
\end{align*}
and evaluating the four indicators at each of the twelve values of the pair of configurations gives the value of $\widehat{\theta}$ at every assignment:
\begin{center}
{\small
\begin{tabular}{@{}lcccc@{}}
\toprule
& \multicolumn{4}{c}{Configuration read by prosecutor 4} \\
\cmidrule(lr){2-5}
Configuration read by prosecutor 1 & $(0, 0)$ & $(0, 1)$ & $(1, 0)$ & $(1, 1)$ \\
\midrule
$(0, 0)$ & --- & $0$ & $0$ & $-1/2$ \\
$(0, 1)$ & $0$ & --- & $0$ & $-1/2$ \\
$(1, 0)$ & $1$ & $1$ & --- & $1/2$ \\
$(1, 1)$ & $1/2$ & $1/2$ & $1/2$ & --- \\
\bottomrule
\end{tabular}}
\end{center}
The twelve equally likely entries give the distribution of $\widehat{\theta}$: the value $1$ with probability $2/12$, the value $1/2$ with probability $4/12$, the value $0$ with probability $4/12$, and the value $-1/2$ with probability $2/12$. The mean over the set of assignments is
\begin{align*}
\E\left[\widehat{\theta}\right] = \dfrac{2 \cdot 1 + 4 \cdot \frac{1}{2} + 4 \cdot 0 + 2 \cdot \left(-\frac{1}{2}\right)}{12} = \dfrac{3}{12} = \dfrac{1}{4},
\end{align*}
which equals the estimand $\theta = \frac{1}{2}\left[\mu(1, 0) + \mu(1, 1)\right] - \frac{1}{2}\left[\mu(0, 0) + \mu(0, 1)\right] = \frac{1}{4}$, as the unbiasedness of $\widehat{\theta}$ for $\theta$ in \Cref{prop: theta EV} guarantees. The second moment over the set of assignments is
\begin{align*}
\E\left[\widehat{\theta}^2\right] = \dfrac{2 \cdot 1 + 4 \cdot \frac{1}{4} + 4 \cdot 0 + 2 \cdot \frac{1}{4}}{12} = \dfrac{7/2}{12} = \dfrac{7}{24},
\end{align*}
so the variance calculated directly over the set of assignments is
\begin{align*}
\Var\left[\widehat{\theta}\right] = \dfrac{7}{24} - \left(\dfrac{1}{4}\right)^2 = \dfrac{14}{48} - \dfrac{3}{48} = \dfrac{11}{48},
\end{align*}
in exact agreement with the value that \eqref{eq: neyman form identity} produced above from the expression for the variance in \Cref{prop: theta Var}. The three quantities are ordered as
\begin{align*}
\Var\left[\widehat{\theta}\right] = \dfrac{11}{48} \; < \; \dfrac{12}{48} = \dfrac{1}{4} \;\, \text{(discarding $S^2(\psi)$)} \; < \; \dfrac{18}{48} = \dfrac{3}{8} \;\, \text{(Cauchy-Schwarz)}.
\end{align*}
Both branches bound the true variance, and the two upper bounds are not equal. Because $\Gamma < 0$, the maximum in \Cref{cor: theta Var upper} equals $\max\{0, \Gamma\} = 0$, so the corollary's bound is the smaller of the two upper bounds --- here $12/48$, from the branch that discards $S^2(\psi)$.

The example reverses the ordering of the two-arm case, in which the Cauchy-Schwarz branch is never the looser branch. The reversal has a plain source. Each branch replaces a quantity that the data cannot identify with the value that makes the resulting bound largest. The Cauchy-Schwarz branch replaces every cross-cell covariance at once: each within-condition covariance with its most negative possible value and each cross-condition covariance with its most positive possible value. Yet no single population of potential outcomes places all of the covariances at those extremes simultaneously when four cells have equal spread, so the Cauchy-Schwarz bound exceeds the exact variance of every such population. The branch that discards $S^2(\psi)$ replaces a single quantity, the variance of the unit-level contrasts, with zero, and a population attains zero whenever the unit-level contrasts $\psi_i$ do not vary.

The Cauchy-Schwarz branch becomes the tighter one under a small change to the potential outcomes. Change the potential outcomes of prosecutor 1 so that prosecutor 1 would charge only the file $(1, 1)$; the other three prosecutors keep their potential outcomes. No prosecutor would charge any file other than $(1, 1)$, and the column of the configuration $(1, 1)$ is unchanged. Every cell except $(1, 1)$ is then constant, with $S^2(z, \bm{v}) = 0$, while $S^2(1, 1) = 1/4$ as before. The branch that discards $S^2(\psi)$ gives the upper bound $(1/2)^2 \cdot \frac{1/4}{1} = \frac{1}{16}$. The correction term is now positive, $\Gamma = 2 \cdot \frac{1}{16} - \left(\frac{1}{2} \cdot \frac{1}{2}\right)^2 = \frac{1}{16}$, so the Cauchy-Schwarz branch gives the upper bound $\frac{1}{16} - \frac{1/16}{4} = \frac{3}{64}$, and the corollary's bound is the Cauchy-Schwarz branch. The unit-level contrasts are $\psi_i = \frac{1}{2}\, y_i(1, 1)$, with variance $S^2(\psi) = \frac{1}{4} \cdot \frac{1}{4} = \frac{1}{16} = \Gamma$, so the Cauchy-Schwarz bound equals the exact variance, $3/64$. The mechanism from the four-varying-cell case runs in reverse: With a single varying cell, every cross-cell covariance is zero, so the Cauchy-Schwarz replacements introduce no slack at all, while the branch that discards $S^2(\psi)$ still replaces the positive variance $S^2(\psi) = 1/16$ with zero.
\end{example}

Now define a plug-in estimator of the variance bound, requiring $n_{z, \bm{v}} \geq 2$ for every $(z, \bm{v}) \in \mathcal{T}$ so that the sample variance defined below exists for every cell:
\begin{align}
\widehat{\overline{\Var}}\left[\widehat{\theta}\right] & = \sum_{z \in \{0, 1\}} \sum_{\bm{v} \in \mathcal{V}} \dfrac{g_z(\bm{v})^2 \hat{S}^2(z, \bm{v})}{n_{z, \bm{v}}} - \dfrac{1}{N} \max\left\{0, \, \widehat{\Gamma}\right\}, \label{eq: plug in var est} \\
\widehat{\Gamma} & \coloneqq 2 \sum_{z \in \{0, 1\}} \sum_{\bm{v} \in \mathcal{V}} g_z(\bm{v})^2 \hat{S}^2(z, \bm{v}) - \left(\sum_{\bm{v} \in \mathcal{V}} g_1(\bm{v}) \hat{S}\left(1, \bm{v}\right) + \sum_{\bm{v} \in \mathcal{V}} g_0(\bm{v}) \hat{S}\left(0, \bm{v}\right)\right)^2, \label{eq: Gamma hat def}
\end{align}
where, for each profile $(z, \bm{v})$,
\begin{align*}
\hat{S}^2\left(z, \bm{v}\right) & \coloneqq \dfrac{1}{n_{z, \bm{v}} - 1} \sum \limits_{i = 1}^N \mathbbm{1}\{\bm{T}_i = (z, \bm{v})\} \left[Y_i - \hat{\mu}(z, \bm{v})\right]^2,
\end{align*}
and $\hat{S}(z, \bm{v}) \coloneqq \sqrt{\hat{S}^2(z, \bm{v})}$.

Within each cell, the units assigned to the cell form a simple random sample without replacement from the $N$ units, so the sample variance $\hat{S}^2(z, \bm{v})$ is unbiased for $S^2(z, \bm{v})$ by the classical result for sampling without replacement \citep[][Theorem~2.4]{cochran1977}. The plug-in estimator of the variance bound as a whole is nonetheless not unbiased for $\overline{\Var}[\widehat{\theta}]$ in finite samples: The square root and the maximum are nonlinear, so Jensen's inequality introduces bias through each nonlinearity. The property established below for the plug-in estimator of the variance bound is therefore consistency, not unbiasedness.

The remaining results establish, over a sequence of finite populations of increasing size, the consistency of the plug-in estimator $\widehat{\theta}$ for the limiting parameter, the consistency of the plug-in estimator of the variance bound for the scaled limiting bound, and a central limit theorem for $\widehat{\theta}$. The following regularity conditions govern the sequence.
\begin{assm}[Asymptotic regularity conditions] \label{assm: asymp regularity}
Consider a sequence of finite populations indexed by $N$, with potential outcomes
$\{y_i(z, \bm{v}) : i = 1,\dots,N; \, z \in \{0, 1\}, \bm{v} \in \mathcal{V}\}$
and treatment profiles assigned under complete random assignment with fixed cell counts
$n_{z, \bm{v}}$ for $z \in \{0, 1\}$ and $\bm{v} \in \mathcal{V}$, as in \Cref{assm: CRA}. The potential outcomes, the cell counts, the moments below, and the estimator $\widehat{\theta}$ all depend on $N$. The notation suppresses the index except where an argument compares quantities across population sizes, as with the estimand $\theta_N$, the limiting parameter $\theta_\infty$, and the scaled variance $\sigma^2_{\theta, N}$ below. Throughout the asymptotic results, a plain arrow $\to$ denotes the convergence of a nonrandom sequence of real numbers as $N \to \infty$.
Suppose:
\begin{enumerate}
\item[(a)] The set of attribute profiles $\mathcal{V}$ is finite, and for each profile $(z, \bm{v})$ there exists $\pi_{z, \bm{v},\infty} \in (0,1)$ such that
\begin{align*}
\dfrac{n_{z, \bm{v}}}{N} \to \pi_{z, \bm{v},\infty} \quad \text{as } N \to \infty.
\end{align*}

\item[(b)] For each profile $(z, \bm{v})$, the finite-population second moments are uniformly bounded:
\begin{align*}
\dfrac{1}{N} \sum_{i=1}^N y_i(z, \bm{v})^2 \leq C_2
\end{align*}
for some constant $C_2 < \infty$ that does not depend on $N$. The finite-population means $\mu(z, \bm{v})$, the finite-population variances $S^2(z, \bm{v})$ of \eqref{eq: S2 def}, and the finite-population covariances $S(z, \bm{v}; z^{\prime}, \bm{v}^{\prime})$ of \eqref{eq: S cov def} converge to finite limits, denoted
\begin{align*}
\mu_{z, \bm{v},\infty} \in \R,
\qquad
S^2_{z, \bm{v},\infty} \in [0,\infty),
\qquad
S_{z, \bm{v}; z^{\prime}, \bm{v}^{\prime},\infty} \in \R,
\end{align*}
and we write $S_{z, \bm{v},\infty} \coloneqq \sqrt{S^2_{z, \bm{v},\infty}}$. The Lindeberg-type condition
\begin{align*}
\max_{1 \leq i \leq N} \dfrac{\left[y_i(z, \bm{v}) - \mu(z, \bm{v})\right]^2}{N} \to 0
\end{align*}
also holds for each profile $(z, \bm{v})$.

\item[(c)] The weights $g_z(\bm{v})$, for $z \in \{0, 1\}$ and $\bm{v} \in \mathcal{V}$, do not vary with $N$.

\item[(d)] (Nondegeneracy of the limiting variance.) Under (a)--(c), the scaled variance $N \, \Var\left[\widehat{\theta}\right]$ converges to a finite limit. By the expression for the variance of the plug-in estimator in \eqref{eq: neyman form identity},
\begin{align*}
N \, \Var\left[\widehat{\theta}\right] = \sum_{z \in \{0,1\}} \sum_{\bm{v} \in \mathcal{V}} \dfrac{N}{n_{z, \bm{v}}}\, g_z(\bm{v})^2 S^2(z, \bm{v}) - S^2(\psi),
\end{align*}
and each term on the right-hand side converges under (a)--(c): the ratios $N / n_{z, \bm{v}}$ to $1 / \pi_{z, \bm{v},\infty}$ under (a), the variances $S^2(z, \bm{v})$ to $S^2_{z, \bm{v},\infty}$ under (b), and $S^2(\psi)$ --- a combination of the variances and covariances with weights fixed under (c), as \eqref{eq: S2 delta expansion} shows --- to the same combination of the limits in (b). Condition (d) itself is the assumption that the limit is strictly positive,
\begin{align*}
\sigma^2_{\theta,\infty} \coloneqq \lim_{N \to \infty} N \, \Var\left[\widehat{\theta}\right] \in (0,\infty).
\end{align*}
\end{enumerate}
The convergence statements for random quantities in the results below --- consistency and the central limit theorem --- are with respect to the randomization distribution that complete random assignment induces, conditional on the potential outcomes.
\end{assm}

Define the scaled variance and the scaled limiting bound,
\begin{align*}
\sigma^2_{\theta,N} \coloneqq N \, \Var\left[\widehat{\theta}\right],
\qquad
\overline{\sigma}^2_{\theta,\infty} \coloneqq \lim_{N \to \infty} N \, \overline{\Var}\left[\widehat{\theta}\right],
\end{align*}
where the limit defining $\overline{\sigma}^2_{\theta,\infty}$ exists and is finite by Assumption~\ref{assm: asymp regularity}(a)--(c) and \Cref{cor: theta Var upper}. Explicitly,
\begin{align}
\overline{\sigma}^2_{\theta,\infty}
& =
\sum_{z \in \{0, 1\}} \sum_{\bm{v} \in \mathcal{V}}
\dfrac{g_z(\bm{v})^2 S^2_{z, \bm{v},\infty}}{\pi_{z, \bm{v},\infty}}
- \max\left\{0, \, \Gamma_{\infty}\right\}, \label{eq: scaled limiting bound}
\\
\Gamma_{\infty}
& \coloneqq
2 \sum_{z \in \{0, 1\}} \sum_{\bm{v} \in \mathcal{V}}
g_z(\bm{v})^2 S^2_{z, \bm{v},\infty}
- \left(
\sum_{\bm{v} \in \mathcal{V}} g_1(\bm{v}) S_{1, \bm{v},\infty}
+
\sum_{\bm{v} \in \mathcal{V}} g_0(\bm{v}) S_{0, \bm{v},\infty}
\right)^2, \label{eq: Gamma infinity def}
\end{align}
which follows from Assumption~\ref{assm: asymp regularity}(a)--(c) by taking the limit of each term of the scaled bound $N \, \overline{\Var}[\widehat{\theta}]$: The finite sums converge term by term, with $N / n_{z, \bm{v}} \to 1 / \pi_{z, \bm{v},\infty}$ and $S(z, \bm{v}) \to S_{z, \bm{v},\infty}$, so the correction term converges, $\Gamma \to \Gamma_{\infty}$. The function $x \mapsto \max\{0, x\}$ is continuous at every point of $\R$: The function equals $x$ above zero and $0$ below, and the two pieces agree at zero. A continuous function preserves the limit of a convergent nonrandom sequence, so $\max\{0, \Gamma\} \to \max\{0, \Gamma_{\infty}\}$.

The plug-in estimator $\widehat{\theta}$ is consistent for the limiting parameter.

\begin{prop}[Consistency of the plug-in estimator] \label{prop: theta consistency}
Under Assumptions \ref{assm: SUTVA} and \ref{assm: asymp regularity}, together with \Cref{assm: CRA} of the main text, define the finite-population estimand
\begin{align*}
\theta_N \coloneqq \sum_{\bm{v} \in \mathcal{V}} g_1(\bm{v}) \mu(1, \bm{v}) - \sum_{\bm{v} \in \mathcal{V}} g_0(\bm{v}) \mu(0, \bm{v})
\end{align*}
--- the estimand $\theta$ of \eqref{eq: theta estimand app}, written here with the subscript $N$ to make explicit its dependence on the population size --- and the limiting parameter
\begin{align*}
\theta_\infty \coloneqq \sum_{\bm{v} \in \mathcal{V}} g_1(\bm{v}) \mu_{1, \bm{v},\infty} - \sum_{\bm{v} \in \mathcal{V}} g_0(\bm{v}) \mu_{0, \bm{v},\infty}.
\end{align*}
Then
\begin{align*}
\widehat{\theta} \overset{p}{\longrightarrow} \theta_\infty
\quad \text{as } N \to \infty,
\end{align*}
where $\overset{p}{\longrightarrow}$ denotes convergence in probability under the randomization distribution induced by complete random assignment, conditional on the potential outcomes.
\end{prop}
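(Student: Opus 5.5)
The plan is a Chebyshev argument built on the exact moments already established, with the nonrandom drift of the estimand handled separately. Decompose
\begin{align*}
\widehat{\theta} - \theta_\infty = \left(\widehat{\theta} - \theta_N\right) + \left(\theta_N - \theta_\infty\right).
\end{align*}
The second term is nonrandom. Assumption~\ref{assm: asymp regularity}(b) gives $\mu(z, \bm{v}) \to \mu_{z, \bm{v},\infty}$ for each of the finitely many profiles, and (c) fixes the weights $g_z(\bm{v})$. Hence $\theta_N$, a fixed finite linear combination of the $\mu(z, \bm{v})$, converges to $\theta_\infty$. It therefore suffices to show $\widehat{\theta} - \theta_N \overset{p}{\longrightarrow} 0$.

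For the random term, \Cref{prop: theta EV} gives $\E[\widehat{\theta}] = \theta_N$ for each $N$. Chebyshev's inequality then yields
\begin{align*}
\Pr\left(\abs{\widehat{\theta} - \theta_N} > \epsilon\right) \leq \Var[\widehat{\theta}]/\epsilon^2 \quad \text{for every } \epsilon > 0,
\end{align*}
so it remains to show $\Var[\widehat{\theta}] \to 0$.

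This is where the work lies, and I would use the bound of \Cref{cor: theta Var upper} rather than the exact variance. Since $\max\{0,\Gamma\} \geq 0$, the corollary gives
\begin{align*}
\Var[\widehat{\theta}] \leq \sum_{z}\sum_{\bm{v}} \frac{g_z(\bm{v})^2 S^2(z, \bm{v})}{n_{z, \bm{v}}} = \frac{1}{N}\sum_{z}\sum_{\bm{v}} \frac{N}{n_{z, \bm{v}}}\, g_z(\bm{v})^2 S^2(z, \bm{v}).
\end{align*}
Each of the three factors in the summand is bounded, so the finite double sum stays bounded and the prefactor $1/N$ sends the bound to zero.
\begin{itemize}
\item By (a), $N/n_{z, \bm{v}} \to 1/\pi_{z, \bm{v},\infty} < \infty$, because $\pi_{z, \bm{v},\infty} > 0$; this ratio is therefore eventually bounded.
\item By (b), $S^2(z, \bm{v}) \leq \frac{N}{N-1} \cdot \frac{1}{N}\sum_i y_i(z, \bm{v})^2 \leq 2C_2$ for $N \geq 2$.
\item By (c), the weights $g_z(\bm{v})$ are fixed and lie in $[0,1]$.
\end{itemize}
Alternatively, (d) states directly that $N\Var[\widehat{\theta}] \to \sigma^2_{\theta,\infty} < \infty$, which gives the same conclusion. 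I would mention both routes and rely on the elementary bound, so that the conclusion does not lean on the nondegeneracy part of (d).

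The main obstacle is only bookkeeping about which population the probability refers to. The randomization distribution changes with $N$, and the target $\theta_\infty$ is a limit, not a per-$N$ parameter. The decomposition above isolates the only place the limits $\mu_{z, \bm{v},\infty}$ enter, namely the deterministic term. The random term is controlled uniformly through $\E[\widehat{\theta}] = \theta_N$ and the $O(1/N)$ variance bound. Combining the two terms gives $\widehat{\theta} \overset{p}{\longrightarrow} \theta_\infty$.
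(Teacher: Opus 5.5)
Your proof is correct and follows the paper's argument. Both split $\widehat{\theta} - \theta_\infty$ into a random term and a nonrandom drift, control the random term with Chebyshev's inequality, and handle $\theta_N \to \theta_\infty$ through Assumption~\ref{assm: asymp regularity}(b)--(c). The one difference is that you apply Chebyshev once to $\widehat{\theta}$ as a whole, using the conservative bound of \Cref{cor: theta Var upper} (in fact only its first branch, \eqref{eq: theta Var LD bound}). The paper instead applies Chebyshev cell by cell with the exact sampling-without-replacement variance $\frac{N - n_{z,\bm{v}}}{N}\frac{S^2(z,\bm{v})}{n_{z,\bm{v}}}$, then uses closure of convergence in probability under fixed finite linear combinations, which never needs the cross-cell covariances.
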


\begin{proof}
The proof proceeds in four steps. Step 1 establishes the consistency of each cell mean for the corresponding population cell mean. Step 2 combines the cell means into the plug-in estimator and establishes the convergence in probability $\widehat{\theta} - \theta_N \overset{p}{\longrightarrow} 0$. Step 3 establishes the nonrandom convergence $\theta_N \to \theta_\infty$. Step 4 combines the convergence in probability from Step 2 with the nonrandom convergence from Step 3.

\textit{Step 1: consistency of the cell means.} Fix $(z, \bm{v}) \in \{0, 1\} \times \mathcal{V}$. Under complete random assignment with fixed counts $n_{z, \bm{v}}$, the set $\{y_i(z, \bm{v}) : \bm{T}_i = (z, \bm{v})\}$ is a simple random sample without replacement of size $n_{z, \bm{v}}$ from the finite population $\{y_i(z, \bm{v}) : i = 1,\dots,N\}$, and $\hat{\mu}(z, \bm{v})$ is the corresponding sample mean. From the expression for the variance of $\hat{\mu}(z, \bm{v})$ derived in the proof of \Cref{prop: theta Var},
\begin{align*}
\Var\left[\hat{\mu}(z, \bm{v})\right]
= \dfrac{N - n_{z, \bm{v}}}{N} \, \dfrac{S^2(z, \bm{v})}{n_{z, \bm{v}}}.
\end{align*}
By Assumption~\ref{assm: asymp regularity}(a), $n_{z, \bm{v}}/N \to \pi_{z, \bm{v},\infty} \in (0,1)$, so both $n_{z, \bm{v}}$ and $N - n_{z, \bm{v}}$ diverge and $(N - n_{z, \bm{v}})/N \to 1 - \pi_{z, \bm{v},\infty} \in (0,1)$. By Assumption~\ref{assm: asymp regularity}(b), $S^2(z, \bm{v})$ converges to the finite limit $S^2_{z, \bm{v},\infty}$. There is therefore a constant $C_{z, \bm{v}} < \infty$ and an integer $N_0$ such that for all $N \geq N_0$,
\begin{align*}
\Var\left[\hat{\mu}(z, \bm{v})\right] \leq \dfrac{C_{z, \bm{v}}}{n_{z, \bm{v}}}.
\end{align*}
The right-hand side tends to $0$ because $n_{z, \bm{v}} \to \infty$, and the variance is nonnegative, so $\Var\left[\hat{\mu}(z, \bm{v})\right] \to 0$. Under complete random assignment, $\hat{\mu}(z, \bm{v})$ is unbiased for $\mu(z, \bm{v})$ (\Cref{prop: theta EV}), so by Chebyshev's inequality, for any $\eta > 0$,
\begin{align*}
\Pr\left(\left|\hat{\mu}(z, \bm{v}) - \mu(z, \bm{v})\right| > \eta\right)
\leq \dfrac{\Var\left[\hat{\mu}(z, \bm{v})\right]}{\eta^2} \to 0,
\end{align*}
and therefore $\hat{\mu}(z, \bm{v}) - \mu(z, \bm{v}) \overset{p}{\longrightarrow} 0$ for each profile $(z, \bm{v})$.

\textit{Step 2: consistency for the finite-population estimand.} Write
\begin{align*}
\widehat{\theta} - \theta_N
= \sum_{\bm{v} \in \mathcal{V}} g_1(\bm{v}) \left\{\hat{\mu}(1, \bm{v}) - \mu(1, \bm{v})\right\}
   - \sum_{\bm{v} \in \mathcal{V}} g_0(\bm{v}) \left\{\hat{\mu}(0, \bm{v}) - \mu(0, \bm{v})\right\}.
\end{align*}
Because $\mathcal{V}$ is finite and the weights $g_z(\bm{v})$ are fixed by Assumption~\ref{assm: asymp regularity}(c), the right-hand side of the display above is a finite linear combination, with fixed coefficients, of terms that converge in probability to $0$ by Step~1. A finite linear combination of such terms with fixed coefficients itself converges in probability to $0$; hence
\begin{align*}
\widehat{\theta} - \theta_N \overset{p}{\longrightarrow} 0.
\end{align*}

\textit{Step 3: convergence of the finite-population estimand to its limit.} By Assumption~\ref{assm: asymp regularity}(b), $\mu(z, \bm{v}) \to \mu_{z, \bm{v},\infty}$ for each profile $(z, \bm{v})$. A finite sum of convergent sequences with fixed coefficients converges to the same sum of the limits, so
\begin{align*}
\theta_N \longrightarrow \sum_{\bm{v} \in \mathcal{V}} g_1(\bm{v}) \mu_{1, \bm{v},\infty} - \sum_{\bm{v} \in \mathcal{V}} g_0(\bm{v}) \mu_{0, \bm{v},\infty} = \theta_\infty.
\end{align*}

\textit{Step 4: combining the limits.} Adding and subtracting $\theta_N$ splits the error into a random part and a nonrandom part,
\begin{align*}
\widehat{\theta} - \theta_\infty
= \left(\widehat{\theta} - \theta_N\right) + \left(\theta_N - \theta_\infty\right).
\end{align*}
The first term converges in probability to $0$ by Step~2, and the second term is nonrandom and converges to $0$ by Step~3. The sum therefore converges in probability to $0$, which establishes $\widehat{\theta} \overset{p}{\longrightarrow} \theta_\infty$.
\end{proof}

The plug-in estimator of the variance bound in \eqref{eq: plug in var est} is consistent for the scaled limiting bound $\overline{\sigma}^2_{\theta,\infty}$ in \eqref{eq: scaled limiting bound}.

\begin{prop}[Consistency of the plug-in estimator of the variance bound] \label{prop: theta Var upper consistency}
Under Assumptions \ref{assm: SUTVA} and \ref{assm: asymp regularity}, together with \Cref{assm: CRA} of the main text,
\begin{align*}
N \, \widehat{\overline{\Var}}\left[\widehat{\theta}\right] \overset{p}{\longrightarrow} \overline{\sigma}^2_{\theta,\infty}.
\end{align*}
\end{prop}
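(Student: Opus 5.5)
The plan is to reduce the statement to the consistency of each within-cell sample variance and then pass through continuous maps. Multiplying \eqref{eq: plug in var est} by $N$ gives
\begin{align*}
N \, \widehat{\overline{\Var}}\left[\widehat{\theta}\right] = \sum_{z \in \{0, 1\}} \sum_{\bm{v} \in \mathcal{V}} \dfrac{N}{n_{z, \bm{v}}}\, g_z(\bm{v})^2 \hat{S}^2(z, \bm{v}) - \max\left\{0, \, \widehat{\Gamma}\right\}.
\end{align*}
This expression is a fixed continuous function of the finitely many random quantities $\hat{S}^2(z, \bm{v})$ and the nonrandom ratios $N/n_{z, \bm{v}}$. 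The weights are fixed by Assumption~\ref{assm: asymp regularity}(c), and $N/n_{z, \bm{v}} \to 1/\pi_{z, \bm{v},\infty}$ by part (a). The key intermediate claim is therefore $\hat{S}^2(z, \bm{v}) \overset{p}{\longrightarrow} S^2_{z, \bm{v},\infty}$ for every cell. Given that claim, the argument runs as follows.

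\begin{enumerate}
\item Continuity of $x \mapsto \sqrt{x}$ on $[0,\infty)$ gives $\hat{S}(z, \bm{v}) \overset{p}{\longrightarrow} S_{z, \bm{v},\infty}$.
\item Since $\widehat{\Gamma}$ is a polynomial in these quantities, $\widehat{\Gamma} \overset{p}{\longrightarrow} \Gamma_\infty$.
\item Continuity of $\max\{0,\cdot\}$, already noted before \Cref{prop: theta consistency}, gives $\max\{0,\widehat{\Gamma}\} \overset{p}{\longrightarrow} \max\{0,\Gamma_\infty\}$.
\item Each term $(N/n_{z,\bm{v}})\,g_z(\bm{v})^2\hat{S}^2(z,\bm{v})$ converges in probability, because it is a product of a convergent nonrandom sequence with a term converging in probability.
\end{enumerate}

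Summing finitely many such limits by Slutsky-type arguments, or the continuous mapping theorem applied to the joint vector, yields \eqref{eq: scaled limiting bound}. Joint convergence is automatic here: each coordinate converges in probability to a constant, so the whole vector converges in probability.

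The main obstacle is the consistency of $\hat{S}^2(z,\bm{v})$ under sampling without replacement. The cell is a simple random sample of size $n_{z,\bm{v}}$ from $\{y_i(z,\bm{v})\}_{i=1}^N$, as in Step 1 of the proof of \Cref{prop: theta consistency}. I would write
\begin{align*}
\hat{S}^2(z,\bm{v}) = \dfrac{n_{z,\bm{v}}}{n_{z,\bm{v}}-1}\left[\dfrac{1}{n_{z,\bm{v}}}\sum_{i} \mathbbm{1}\{\bm{T}_i=(z,\bm{v})\}\, y_i(z,\bm{v})^2 - \hat{\mu}(z,\bm{v})^2\right].
\end{align*}
The second piece is handled directly: $\hat{\mu}(z,\bm{v}) \overset{p}{\longrightarrow} \mu_{z,\bm{v},\infty}$ by Steps 1 and 3 of \Cref{prop: theta consistency}, so its square converges by continuity.

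The first piece is the sample mean of the squared outcomes, and it is the hard part. Its expectation is $(1/N)\sum_i y_i(z,\bm{v})^2$, which equals $\frac{N-1}{N}S^2(z,\bm{v}) + \mu(z,\bm{v})^2$ and converges. The same variance formula used in \Cref{prop: theta Var} bounds its variance by
\begin{align*}
\dfrac{1}{n_{z,\bm{v}}}\cdot \dfrac{1}{N-1}\sum_i \bigl[y_i(z,\bm{v})^2 - \overline{y^2}\bigr]^2,
\end{align*}
so Chebyshev's inequality requires $\frac{1}{N n_{z,\bm{v}}}\sum_i y_i(z,\bm{v})^4 \to 0$. Part (b) bounds only second moments, so I would obtain this from the Lindeberg-type condition. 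Writing $M_N \coloneqq \max_i [y_i-\mu]^2$, we have $M_N/N \to 0$. Then
\begin{align*}
\dfrac{1}{N}\sum_i (y_i-\mu)^4 \le M_N \cdot \dfrac{1}{N}\sum_i (y_i-\mu)^2 = o(N),
\end{align*}
and dividing by $n_{z,\bm{v}} \asymp N$ gives a term tending to zero. I would therefore run the argument on the centered outcomes $y_i-\mu(z,\bm{v})$, for which the sample variance is invariant, so that only the centered fourth moment is needed. This yields $\hat{S}^2(z,\bm{v}) - \frac{N-1}{N} S^2(z,\bm{v}) \overset{p}{\longrightarrow} 0$, and hence the claim.
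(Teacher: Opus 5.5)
Your proof is correct. Its outer layer matches the paper's: continuous mapping through the square root, $\widehat{\Gamma}$, the maximum, and the nonrandom ratios $N/n_{z,\bm{v}}$, with joint convergence following from coordinatewise convergence to constants. You depart from the paper at the key per-cell step, $\hat{S}^2(z,\bm{v}) - S^2(z,\bm{v}) \overset{p}{\longrightarrow} 0$.

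The paper splits that step into two cases. When $S^2_{z,\bm{v},\infty} = 0$, it applies Markov's inequality to the nonnegative, unbiased $\hat{S}^2(z,\bm{v})$. When $S^2_{z,\bm{v},\infty} > 0$, it cites the ratio consistency $\hat{S}^2(z,\bm{v})/S^2(z,\bm{v}) \overset{p}{\longrightarrow} 1$ from Proposition~1 of Li and Ding (2017). To use that result it verifies their Condition~(4) from the Lindeberg-type condition in Assumption~\ref{assm: asymp regularity}(b). That verification divides by $S^2(z,\bm{v})$, which is what forces the case split.

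You prove the step directly instead. You center at the nonrandom $\mu(z,\bm{v})$, which leaves the sample variance unchanged. You then apply Chebyshev together with the without-replacement variance formula to the cell mean of the squared centered outcomes. Using $M_N \coloneqq \max_i [y_i(z,\bm{v})-\mu(z,\bm{v})]^2 = o(N)$, that variance is at most $M_N\, S^2(z,\bm{v})/n_{z,\bm{v}} \to 0$. This bound never divides by $S^2(z,\bm{v})$, so one argument covers degenerate and nondegenerate cells alike.

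What each route buys: yours is self-contained, avoids the case analysis, and gives an explicit variance bound for the intermediate quantity. The paper's is shorter on the page because it relies on an established finite-population result. Both draw on exactly the same hypotheses, namely the bounded, convergent variances and the Lindeberg-type condition of part (b), together with $n_{z,\bm{v}}/N \to \pi_{z,\bm{v},\infty} \in (0,1)$ from part (a).
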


\begin{proof}
The proof establishes the consistency of each cell's sample variance and then assembles the scaled plug-in estimator of the variance bound by the continuous mapping theorem, which states that convergence in probability is preserved by continuous functions: If $X_N \overset{p}{\longrightarrow} x$ and the function $h$ is continuous at $x$, then $h(X_N) \overset{p}{\longrightarrow} h(x)$.

Fix $(z, \bm{v})$. Under complete random assignment and SUTVA, the set $\{Y_i : \bm{T}_i = (z, \bm{v})\} = \{y_i(z, \bm{v}) : \bm{T}_i = (z, \bm{v})\}$ is a simple random sample without replacement of size $n_{z, \bm{v}}$ from the finite population $\{y_i(z, \bm{v}) : i=1,\dots,N\}$, and $\hat{S}^2(z, \bm{v})$ is the corresponding sample variance, which is unbiased for $S^2(z, \bm{v})$ \citep[][Theorem~2.4]{cochran1977}. Two cases establish $\hat{S}^2(z, \bm{v}) - S^2(z, \bm{v}) \overset{p}{\longrightarrow} 0$, according to whether the limiting variance $S^2_{z, \bm{v},\infty}$ of Assumption~\ref{assm: asymp regularity}(b) is zero or positive. Both cases are possible: $S^2(z, \bm{v})$ measures the spread of the potential outcomes across units rather than a sampling variance, so the limit need not vanish as $N$ grows, and Assumption~\ref{assm: asymp regularity} permits a zero limit for any single cell --- condition (d) restricts only the limiting variance of $\widehat{\theta}$ as a whole. The zero case needs its own argument because the ratio consistency used in the positive case divides by $S^2(z, \bm{v})$.

If $S^2_{z, \bm{v},\infty} = 0$: Markov's inequality states that, for a nonnegative random variable $X$ and any $\eta > 0$, $\Pr\left(X > \eta\right) \leq \E[X]/\eta$. Applied to the nonnegative random variable $\hat{S}^2(z, \bm{v})$, the inequality gives
\begin{align*}
\Pr\left(\hat{S}^2(z, \bm{v}) > \eta\right) \leq \dfrac{\E\left[\hat{S}^2(z, \bm{v})\right]}{\eta} = \dfrac{S^2(z, \bm{v})}{\eta} \to 0,
\end{align*}
because $\hat{S}^2(z, \bm{v})$ is unbiased for $S^2(z, \bm{v})$ and $S^2(z, \bm{v}) \to 0$. The display gives $\hat{S}^2(z, \bm{v}) \overset{p}{\longrightarrow} 0$, and subtracting the nonrandom sequence $S^2(z, \bm{v}) \to 0$ leaves $\hat{S}^2(z, \bm{v}) - S^2(z, \bm{v}) \overset{p}{\longrightarrow} 0$.

If $S^2_{z, \bm{v},\infty} > 0$: Proposition~1 of \citet{liding2017} gives the consistency of the ratio
\begin{align*}
\hat{S}^2(z, \bm{v}) / S^2(z, \bm{v}) \overset{p}{\longrightarrow} 1,
\end{align*}
provided Condition~(4) of their Theorem~1 holds for the cell, namely
\begin{align} \label{eq: liding condition}
\dfrac{1}{\min\left(n_{z, \bm{v}}, N - n_{z, \bm{v}}\right)} \cdot \dfrac{\max_{1 \leq i \leq N} \left[y_i(z, \bm{v}) - \mu(z, \bm{v})\right]^2}{S^2(z, \bm{v})} \to 0.
\end{align}
To verify the condition \eqref{eq: liding condition}, factor its left-hand side into three pieces,
\small
\begin{align*}
\dfrac{N}{\min\left(n_{z, \bm{v}}, N - n_{z, \bm{v}}\right)} \cdot \dfrac{\max_{1 \leq i \leq N} \left[y_i(z, \bm{v}) - \mu(z, \bm{v})\right]^2}{N} \cdot \dfrac{1}{S^2(z, \bm{v})}.
\end{align*}
\normalsize
The first factor converges to $1/\min(\pi_{z, \bm{v},\infty}, 1 - \pi_{z, \bm{v},\infty})$, a finite limit, by Assumption~\ref{assm: asymp regularity}(a). The second factor converges to $0$ by the Lindeberg-type condition of Assumption~\ref{assm: asymp regularity}(b). The third factor converges to $1/S^2_{z, \bm{v},\infty}$, a finite limit, because $S^2_{z, \bm{v},\infty} > 0$ in the present case. A product of two factors with finite limits and one factor that converges to $0$ converges to $0$, so the condition \eqref{eq: liding condition} holds. Since $S^2_{z, \bm{v},\infty} > 0$, the variance $S^2(z, \bm{v})$ is positive for all large $N$, and multiplying and dividing the estimation error by $S^2(z, \bm{v})$ gives
\begin{align*}
\hat{S}^2(z, \bm{v}) - S^2(z, \bm{v}) = S^2(z, \bm{v}) \left[\dfrac{\hat{S}^2(z, \bm{v})}{S^2(z, \bm{v})} - 1\right].
\end{align*}
The factor in brackets converges in probability to $0$ by the consistency of the ratio, and $S^2(z, \bm{v})$ is a nonrandom sequence with a finite limit, hence bounded. The product of a bounded nonrandom sequence and a sequence that converges in probability to $0$ converges in probability to $0$, so $\hat{S}^2(z, \bm{v}) - S^2(z, \bm{v}) \overset{p}{\longrightarrow} 0$.

In both cases $\hat{S}^2(z, \bm{v}) - S^2(z, \bm{v}) \overset{p}{\longrightarrow} 0$, and $S^2(z, \bm{v}) \to S^2_{z, \bm{v},\infty}$ by Assumption~\ref{assm: asymp regularity}(b), so $\hat{S}^2(z, \bm{v}) \overset{p}{\longrightarrow} S^2_{z, \bm{v},\infty}$. Because the square-root function is continuous on $[0,\infty)$, the continuous mapping theorem then gives $\hat{S}(z, \bm{v}) \overset{p}{\longrightarrow} S_{z, \bm{v},\infty}$.

Because $\mathcal{V}$ is finite, the vector $\big(\hat{S}^2(z, \bm{v}), \hat{S}(z, \bm{v})\big)_{z, \bm{v}}$ has finitely many coordinates, and convergence in probability coordinate by coordinate implies convergence in probability of the whole vector. The vector therefore converges in probability to $\big(S^2_{z, \bm{v},\infty}, S_{z, \bm{v},\infty}\big)_{z, \bm{v}}$. The scaled plug-in estimator of the variance bound
\begin{align*}
N \, \widehat{\overline{\Var}}\left[\widehat{\theta}\right]
& = \sum_{z, \bm{v}} g_z(\bm{v})^2 \hat{S}^2(z, \bm{v}) \dfrac{N}{n_{z, \bm{v}}} - \max\left\{0, \, \widehat{\Gamma}\right\}, \\
\widehat{\Gamma} & = 2 \sum_{z, \bm{v}} g_z(\bm{v})^2 \hat{S}^2(z, \bm{v}) - \left(\sum_{\bm{v}} g_1(\bm{v}) \hat{S}(1, \bm{v}) + \sum_{\bm{v}} g_0(\bm{v}) \hat{S}(0, \bm{v})\right)^2,
\end{align*}
is a finite composition of additions, multiplications by the fixed scalars $g_z(\bm{v})^2$ and the ratios $N/n_{z, \bm{v}} \to 1/\pi_{z, \bm{v},\infty}$, squared finite sums, and the function $x \mapsto \max\{0, x\}$, all of which are continuous operations.
The ratios $N/n_{z, \bm{v}}$ are nonrandom, and a nonrandom sequence that converges to a limit also converges in probability to that limit, so the full vector of inputs --- the sample variances, their square roots, and the ratios --- converges in probability jointly. By the continuous mapping theorem, replacing $\hat{S}^2(z, \bm{v})$ by $S^2_{z, \bm{v},\infty}$, $\hat{S}(z, \bm{v})$ by $S_{z, \bm{v},\infty}$, and $N/n_{z, \bm{v}}$ by $1/\pi_{z, \bm{v},\infty}$ yields
\begin{align*}
N \, \widehat{\overline{\Var}}\left[\widehat{\theta}\right]
\overset{p}{\longrightarrow}
\overline{\sigma}^2_{\theta,\infty},
\end{align*}
which completes the proof.
\end{proof}

Finally, $\widehat{\theta}$ satisfies a finite-population central limit theorem, which together with the consistent plug-in estimator of the variance bound justifies the Normal-approximation confidence intervals reported in the application. The intervals require only the consistency just established. By Slutsky's theorem, standardizing $\widehat{\theta} - \theta_N$ by any scale that converges in probability to a limit at least as large as $\sigma_{\theta,\infty}$ produces a statistic whose limiting distribution is Normal with variance at most one, so a confidence interval built from such a scale covers $\theta_N$ with limiting probability at least the interval's nominal level --- coverage that is asymptotically conservative. Conservative coverage delivers valid inference: A test that rejects a hypothesized value of the estimand exactly when the interval excludes the hypothesized value rejects a true value with limiting probability at most the nominal error rate.

\begin{prop}[Finite population CLT for the plug-in estimator] \label{prop: theta CLT}
Under Assumptions \ref{assm: SUTVA} and \ref{assm: asymp regularity}, together with \Cref{assm: CRA} of the main text,
\begin{align*}
\sqrt{N} \left\{\widehat{\theta} - \theta_N\right\} \, / \, \sigma_{\theta,N} \;\; \overset{d}{\longrightarrow} \;\; \mathcal{N}(0,1),
\end{align*}
where $\mathcal{N}(0, 1)$ denotes the Normal distribution with mean $0$ and variance $1$, and $\overset{d}{\longrightarrow}$ denotes convergence in distribution under the randomization distribution induced by complete random assignment, conditional on the potential outcomes.
The scaled variance $\sigma^2_{\theta,N} = N \, \Var\left[\widehat{\theta}\right]$ is given by \Cref{prop: theta Var}, and $\sigma^2_{\theta,N} \to \sigma^2_{\theta,\infty} \in (0,\infty)$.
\end{prop}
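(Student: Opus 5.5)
The plan is to reduce the statement to the finite-population central limit theorem for a linear contrast of arm means in a completely randomized multi-arm experiment, Theorem~1 of \citet{liding2017}, which the paper already invokes in \Cref{prop: theta Var upper consistency}. Treat each profile $(z, \bm{v}) \in \mathcal{T}$ as an arm, so there are $Q \coloneqq 2\abs{\mathcal{V}}$ arms, finite by Assumption~\ref{assm: asymp regularity}(a). Under SUTVA, the representation \eqref{eq: profile-cell estimator II} writes each cell mean $\hat{\mu}(z, \bm{v})$ as the sample mean of a simple random sample without replacement from $\{y_i(z, \bm{v})\}_{i=1}^N$, with the samples jointly forming a complete random partition of the units. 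The plug-in estimator $\widehat{\theta}$ in \eqref{eq: theta estimator} is then the contrast of arm means with the fixed coefficient vector $(g_1(\bm{v}), -g_0(\bm{v}))_{\bm{v} \in \mathcal{V}}$. This vector does not vary with $N$ by Assumption~\ref{assm: asymp regularity}(c), and $\widehat{\theta}$ has mean $\theta_N$ by \Cref{prop: theta EV} and variance $\sigma^2_{\theta,N}/N$ by \Cref{prop: theta Var}.

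First, I would verify the conditions of the Li--Ding theorem in the form they state for a fixed contrast. The condition needed is a Lindeberg-type condition: for each arm,
\[
\frac{1}{\min_{q} n_q} \cdot \frac{\max_i \left[y_i(z, \bm{v}) - \mu(z, \bm{v})\right]^2}{\Var\left[\sqrt{N}\,\widehat{\theta}\right]} \to 0 .
\]
This follows by the same three-factor argument used for \eqref{eq: liding condition}. The factor $N/\min_q n_q$ converges to a finite limit by Assumption~\ref{assm: asymp regularity}(a). The factor $\max_i [\cdot]^2 / N$ tends to $0$ by the Lindeberg-type condition in Assumption~\ref{assm: asymp regularity}(b). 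The remaining factor $1/\sigma^2_{\theta,N}$ converges to $1/\sigma^2_{\theta,\infty} < \infty$ by Assumption~\ref{assm: asymp regularity}(d).

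This last point is the key difference from the variance-consistency proof. There the denominator was a single cell's $S^2(z, \bm{v})$, which might vanish. Here the denominator is the variance of the whole contrast, and condition (d) keeps it bounded away from zero. Cells with $S^2_{z,\bm{v},\infty}=0$ therefore need no separate case, because the numerator for such cells is still controlled by (b).

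Second, the theorem delivers $(\widehat{\theta} - \theta_N)/\sqrt{\Var[\widehat{\theta}]} \overset{d}{\longrightarrow} \mathcal{N}(0,1)$. This quantity is exactly $\sqrt{N}(\widehat{\theta} - \theta_N)/\sigma_{\theta,N}$. The remaining claim, $\sigma^2_{\theta,N} \to \sigma^2_{\theta,\infty}$, holds by the term-by-term convergence of \eqref{eq: neyman form identity} already recorded in Assumption~\ref{assm: asymp regularity}(d).

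The main obstacle is matching the paper's normalization to the exact hypothesis of the cited theorem. Li and Ding state their condition with the variance of the vector of arm means, or with the contrast variance, depending on the version. If the contrast form is unavailable, I would apply their vector CLT to $\sqrt{N}(\hat{\mu}(z,\bm{v}) - \mu(z,\bm{v}))_{z,\bm{v}}$. I would then use the Cramér--Wold device with the fixed weights, plus Slutsky with $\sigma_{\theta,N} \to \sigma_{\theta,\infty} > 0$, to transfer the limit to $\widehat{\theta}$.

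That route requires the limiting covariance matrix of the cell means. It exists because $S^2$ and the cross-cell covariances $S(z,\bm{v};z',\bm{v}')$ converge by (b), and the ratios $n_{z,\bm{v}}/N$ converge by (a). Cells whose limiting variance is zero simply contribute a degenerate Normal coordinate.
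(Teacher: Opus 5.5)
Your proposal is correct and follows essentially the paper's argument. The paper also treats each cell $(z,\bm{v})$ as an arm with contrast weights $g_1(\bm{v})$ and $-g_0(\bm{v})$ and verifies the hypotheses of Li and Ding's multi-arm finite-population CLT from Assumption~\ref{assm: asymp regularity}(a)--(d). It applies the limit form of that theorem (their Theorem~5) to get $\sqrt{N}\{\widehat{\theta}-\theta_N\}\overset{d}{\longrightarrow}\mathcal{N}(0,\sigma^2_{\theta,\infty})$ and then standardizes by Slutsky, which is your fallback route; your variance-normalized Lindeberg condition instead gives the standardized display directly. One citation fix: the result you need is the multi-arm contrast theorem (Theorem~5 of \citet{liding2017}, as the paper cites), not Theorem~1, which is the single-sample simple-random-sampling CLT whose Condition~(4) the paper applies cell by cell and which does not by itself cover a contrast across $2\abs{\mathcal{V}}$ arms.
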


\begin{proof}
The proof recasts the plug-in estimator of the estimand family, $\widehat{\theta}$, as a scalar contrast of the means of the arms of a multi-arm experiment --- the setting of Theorem~5 of \citet{liding2017}, which states a finite-population central limit theorem for a fixed linear contrast of arm means under complete random assignment. Casting $\widehat{\theta}$ in that form lets the theorem apply directly: The cells $(z, \bm{v})$ become the arms, and the estimand weights become the contrast weights. The proof sets up the correspondence and then verifies the theorem's conditions.

To match the notation of \citet{liding2017}, index the treatment profiles $(z, \bm{v}) \in \mathcal{T}$ by $q = 1,\dots,Q$, one arm per cell, and define scalar potential outcomes $Y_i(q) \coloneqq y_i(z, \bm{v})$. Let $\hat{Y}(q)$ and $\bar{Y}(q)$ denote the sample mean and finite-population mean in arm $q$. A contrast weight attaches to each arm the coefficient that the arm's mean carries in the estimand: The cells of racial condition $1$ enter $\theta_N$ with weights $g_1(\bm{v})$, and the cells of racial condition $0$ enter with weights $-g_0(\bm{v})$. Define the contrast weights accordingly,
\begin{align*}
c_q \coloneqq
\begin{cases}
g_1(\bm{v}) & \text{if arm } q \text{ indexes the profile } (1, \bm{v}), \\
- g_0(\bm{v}) & \text{if arm } q \text{ indexes the profile } (0, \bm{v}).
\end{cases}
\end{align*}
Substituting the arm notation into the definitions of $\widehat{\theta}$ in \eqref{eq: theta estimator} and of $\theta_N$ collects each into a single weighted sum over arms,
\begin{align*}
\widehat{\theta} = \sum_{q=1}^Q c_q \hat{Y}(q),
\qquad
\theta_N = \sum_{q=1}^Q c_q \bar{Y}(q),
\end{align*}
which is the scalar contrast of \citet[][Section~2]{liding2017} with a $1 \times Q$ contrast matrix whose entries are the $c_q$. The index $q$ and the potential-outcome notation $Y_i(q)$ are local to this proof and follow \citet{liding2017}. The index $q$ is unrelated to the joint PMF $q(z, \bm{v})$ and the conditionals $q_z$ of the main text, and $Y_i(q)$ denotes the potential outcome $y_i(z, \bm{v})$ rather than the observed outcome $Y_i$.

Assumption~\ref{assm: asymp regularity}(a)--(c) ensure that the conditions of \citet[][Theorem~5]{liding2017} hold. As $N \to \infty$, the arm fractions $n_q/N$ converge to positive limits by (a). The finite-population variances and covariances converge to finite limits as $N \to \infty$ by (b), and the Lindeberg-type condition on the maximum squared deviation also holds by (b). The contrast weights $c_q$ do not vary with $N$ by (c); the theorem takes the contrast matrix as fixed along the sequence, so weights that changed with $N$ would change the estimand from one population to the next and fall outside the theorem. Assumption~\ref{assm: asymp regularity}(d) is the nondegeneracy condition ensuring that the limiting variance $\sigma^2_{\theta,\infty} = \lim_{N \to \infty} N \, \Var[\widehat{\theta}]$ is strictly positive. With the conditions verified, Theorem~5 of \citet{liding2017}, applied to the scalar contrast $\widehat{\theta}$, gives
\begin{align*}
\sqrt{N} \left\{\widehat{\theta} - \theta_N\right\} \overset{d}{\longrightarrow} \mathcal{N}\left(0, \sigma^2_{\theta,\infty}\right).
\end{align*}
The standardized display of the proposition follows by Slutsky's theorem. Write the standardized statistic as a product,
\begin{align} \label{eq: clt standardization}
\dfrac{\sqrt{N} \left\{\widehat{\theta} - \theta_N\right\}}{\sigma_{\theta,N}} = \dfrac{\sqrt{N} \left\{\widehat{\theta} - \theta_N\right\}}{\sigma_{\theta,\infty}} \cdot \dfrac{\sigma_{\theta,\infty}}{\sigma_{\theta,N}}.
\end{align}
The first factor on the right-hand side of \eqref{eq: clt standardization} converges in distribution to $\mathcal{N}(0, 1)$ because dividing the limit $\mathcal{N}(0, \sigma^2_{\theta,\infty})$ by the constant $\sigma_{\theta,\infty} > 0$ rescales the limit to unit variance. The second factor is a nonrandom sequence that converges to $1$: $\sigma^2_{\theta,N} \to \sigma^2_{\theta,\infty} \in (0, \infty)$ by Assumption~\ref{assm: asymp regularity}(d), and the square root and the reciprocal are continuous at the strictly positive limit. Applied to the two factors of \eqref{eq: clt standardization} --- one converging in distribution and one converging to the constant $1$ --- Slutsky's theorem gives
\begin{align*}
\sqrt{N} \left\{\widehat{\theta} - \theta_N\right\} \, / \, \sigma_{\theta,N} \;\; \overset{d}{\longrightarrow} \;\; \mathcal{N}(0,1),
\end{align*}
which is the display of the proposition.
\end{proof}

\subsection{Confidence intervals for the NREC ratio} \label{app: NREC inference}

\Cref{sec: estimation} of the main text estimates the NREC by the instrumental-variables ratio $\widehat{\mathrm{NREC}} = \widehat{\Delta}_Y / \widehat{\Delta}_Z$, the ratio of the reduced-form estimator \eqref{eq: rf estimator} to the first-stage estimator \eqref{eq: fs estimator} of the main text, with the cue as the instrument and perceived race as the treatment taken up; \Cref{sec: excludability} of the main text develops how this instrumental-variables reading relates to the classical setting. A confidence interval for the ratio uses a delta-method approximation, justified for the finite population by \citet{pashley2022} together with a finite-population central limit theorem \citep{liding2017}. The interval is design-based, large-sample, and approximate rather than exact.

The justification runs along a sequence of finite populations of increasing size, indexed by $N$ as in \Cref{assm: asymp regularity}, with the cue in place of the configuration as the randomized assignment: $n_w$ units receive cue value $w \in \{0, 1\}$ under \Cref{assm: CRA cue} of the main text, and the arm shares $n_w / N$ converge to limits in $(0, 1)$, the analogue of \Cref{assm: asymp regularity}(a). Beyond randomization of the cue, the delta-method approximation rests on three conditions, which specialize to the ratio the requirements of the finite-population delta method \citep[Theorem~2]{pashley2022} and of the central limit theorem that the delta method draws on \citep[Theorem~5]{liding2017}:
\begin{itemize}
\item The complier share converges to a positive limit, $\abs{\mathcal{C}} / N \to \pi_{\mathcal{C}} > 0$. Under \Cref{assm: no defiers}, the population first stage equals the complier share, $\Delta_Z = \abs{\mathcal{C}} / N$, as in the proof of \Cref{cor: NREC lower bound}, so the positive limit keeps the denominator of the ratio away from zero along the sequence. The positive-limit condition strengthens \Cref{assm: compliers exist}, which requires only one complier at each population size $N$ and so permits a complier share that vanishes as $N$ grows.
\item Under each value $w$ of the cue, the finite-population means, variances, and covariances of the response $y_i(w)$ and of perceived race $z_i(w)$ converge to finite limits, and the response satisfies $\max_{1 \leq i \leq N} \left[y_i(w) - \frac{1}{N} \sum_{j=1}^N y_j(w)\right]^2 / N \to 0$. The two requirements are the analogues of \Cref{assm: asymp regularity}(b) for the cue design. The maximum-squared-deviation condition needs no separate statement for perceived race: Because $z_i(w) \in \{0, 1\}$, every squared deviation of $z_i(w)$ from its mean is at most $1$, so the maximum divided by $N$ is at most $1/N$, which tends to $0$.
\item The variance of the linearized statistic $\widehat{\Delta}_Y - \mathrm{NREC} \cdot \widehat{\Delta}_Z$, the numerator of \eqref{eq: NREC linearization} below, scaled by $N$, converges to a strictly positive limit, the analogue of \Cref{assm: asymp regularity}(d).
\end{itemize}

The delta method reaches the ratio from a linear statistic. Subtracting the NREC from the ratio and placing the difference over the common denominator $\widehat{\Delta}_Z$ gives the exact identity
\begin{align} \label{eq: NREC linearization}
\widehat{\mathrm{NREC}} - \mathrm{NREC} = \dfrac{\widehat{\Delta}_Y - \mathrm{NREC} \cdot \widehat{\Delta}_Z}{\widehat{\Delta}_Z}.
\end{align}
The numerator is the \textit{linearized statistic}: The numerator is linear in the pair $(\widehat{\Delta}_Y, \widehat{\Delta}_Z)$, and its mean over the randomization distribution of the cue is zero, because $\E[\widehat{\Delta}_Y] = \Delta_Y$ and $\E[\widehat{\Delta}_Z] = \Delta_Z$ --- the two-arm analogue of \Cref{prop: theta EV} --- and $\Delta_Y - \mathrm{NREC} \cdot \Delta_Z = 0$ by \Cref{prop: NREC identification}. Two steps then deliver the Normal approximation for the ratio. First, the pair $(\widehat{\Delta}_Y, \widehat{\Delta}_Z)$ satisfies a joint finite-population central limit theorem \citep[Theorem~5]{liding2017}. Second, the finite-population delta method \citep[Theorem~2]{pashley2022} replaces the random denominator $\widehat{\Delta}_Z$ in \eqref{eq: NREC linearization} by the nonrandom $\Delta_Z$; the replacement is asymptotically negligible because $\widehat{\Delta}_Z$ converges in probability to the positive limit $\pi_{\mathcal{C}}$.

The variance of the linearized statistic has the two-arm Neyman form. For each unit $i$ and cue value $w$, define the \textit{composite outcome} $a_i(w) \coloneqq y_i(w) - \mathrm{NREC} \cdot z_i(w)$: the response minus NREC times perceived race. Under the cue-level SUTVA (\Cref{assm: SUTVA cue}), the observed combination $Y_i - \mathrm{NREC} \cdot Z_i$ equals $a_i(W_i)$. The linearized statistic is therefore the difference of the two cue arms' means of the observed composite outcomes, and the identity \eqref{eq: neyman form identity}, applied with one cell per arm and the composite outcome in place of the potential outcome, gives
\begin{align} \label{eq: NREC linearized var}
\Var\left[\widehat{\Delta}_Y - \mathrm{NREC} \cdot \widehat{\Delta}_Z\right] & = \dfrac{S^2_a(1)}{n_1} + \dfrac{S^2_a(0)}{n_0} - \dfrac{S^2(\tau_a)}{N},
\end{align}
where the variance $\Var$ is taken over the randomization distribution of the cue assignment, $S^2_a(w)$ is the finite-population variance of the composite outcomes $a_i(w)$ under cue value $w$, and $S^2(\tau_a)$ is the finite-population variance of the unit-level differences $\tau_{a, i} \coloneqq a_i(1) - a_i(0)$. As in \Cref{cor: theta Var upper}, the subtracted term is unidentified --- each unit reveals its composite outcome under one cue value only --- and $S^2(\tau_a) \geq 0$, so dropping the subtracted term bounds the variance from above:
\begin{align} \label{eq: NREC var bound}
\Var\left[\widehat{\Delta}_Y - \mathrm{NREC} \cdot \widehat{\Delta}_Z\right] & \leq \dfrac{S^2_a(1)}{n_1} + \dfrac{S^2_a(0)}{n_0}.
\end{align}
The bound in \eqref{eq: NREC var bound} follows \citet[Section~3.1]{pashley2022}, whose conservative variance keeps the identified within-arm variances and drops the subtracted term, and corresponds to the branch of \Cref{cor: theta Var upper} that discards the variance of the unit-level contrasts, \eqref{eq: theta Var LD bound}, with the composite outcome in place of the potential outcome.

The plug-in estimator of the bound replaces each population piece by its sample counterpart, with the unknown NREC in the composite outcome replaced by the ratio $\widehat{\mathrm{NREC}}$:
\begin{align} \label{eq: NREC var bound est}
\widehat{V} \coloneqq \dfrac{\hat{S}^2_{\hat{a}}(1)}{n_1} + \dfrac{\hat{S}^2_{\hat{a}}(0)}{n_0},
\qquad
\hat{S}^2_{\hat{a}}(w) \coloneqq \dfrac{1}{n_w - 1} \sum_{i \, : \, W_i = w} \left(\hat{A}_i - \dfrac{1}{n_w} \sum_{j \, : \, W_j = w} \hat{A}_j\right)^2,
\end{align}
where $\hat{A}_i \coloneqq Y_i - \widehat{\mathrm{NREC}} \cdot Z_i$ is the estimated composite outcome for unit $i$. Under the three conditions, $N \widehat{V}$ is consistent for the limit of $N$ times the bound in \eqref{eq: NREC var bound}. Expanding the square shows that $\hat{S}^2_{\hat{a}}(w) = \hat{S}^2_Y(w) - 2\, \widehat{\mathrm{NREC}}\, \hat{S}_{YZ}(w) + \widehat{\mathrm{NREC}}^{\, 2}\, \hat{S}^2_Z(w)$, where $\hat{S}^2_Y(w)$, $\hat{S}^2_Z(w)$, and $\hat{S}_{YZ}(w)$ are the within-arm sample variances and covariance of the observed outcome and perceived race. Each within-arm sample variance is consistent for its limit by the argument of \Cref{prop: theta Var upper consistency}, and the sample covariance follows from the same argument applied to the sum of the outcome and perceived race; $\widehat{\mathrm{NREC}}$ is consistent for the NREC because its numerator and denominator converge in probability and the limiting denominator $\pi_{\mathcal{C}}$ is positive; and the expansion is a continuous function of the sample moments and the ratio. The reported confidence interval is
\begin{align*}
\widehat{\mathrm{NREC}} \; \pm \; \Phi^{-1}(1 - \alpha/2) \, \dfrac{\sqrt{\widehat{V}}}{\abs{\widehat{\Delta}_Z}},
\end{align*}
where $\Phi^{-1}(1 - \alpha/2)$ is the standard Normal quantile at level $1 - \alpha/2$; the division by $\abs{\widehat{\Delta}_Z}$ restores the denominator of the linearization in \eqref{eq: NREC linearization}, with $\widehat{\Delta}_Z$ in place of the unknown $\Delta_Z$.

Divided by $\Delta_Z^2$, the variance bound in \eqref{eq: NREC var bound} is the finite-population analogue of the usual asymptotic variance of the instrumental-variables estimator \citep[Section~4.2.1, pp.~138--140]{angristpischke2008} --- itself the variance of a linearization, with the first stage in its denominator. The interval therefore degrades when the cue moves perceived race only weakly \citep[Section~4.6.4, pp.~205--216]{angristpischke2008}: A small complier share leaves $\widehat{\Delta}_Z$ small relative to its own sampling variability, and asymptotic standard errors and confidence intervals can then suggest that an unstable estimate is stable \citep[Section~1.4]{imbensrosenbaum2005}. Approaches built for weak instruments include the permutation intervals of \citet{imbensrosenbaum2005} and the methods of \citet{kangpeckkeele2018} and \citet{aronowchanglopatto2026}.

\section{Additional Material} \label{app: additional material}

This section collects the summary table of grounds from \Cref{sec: substantive motivations} of the main text (\Cref{app: grounds table}).

\subsection{Summary of grounds for choosing among members of the estimand family} \label{app: grounds table}

\Cref{tab: grounds} collects the grounds developed in \Cref{sec: substantive motivations} of the main text, ordered by which component of the estimand each ground bears on.

\begin{table}[H]
\vspace{10pt}
\centering
{\singlespacing\small
\renewcommand{\arraystretch}{1.3}
\begin{tabular}{@{}>{\raggedright\arraybackslash}p{5.5cm}>{\raggedright\arraybackslash}p{7.7cm}@{}}
\toprule
\textbf{Ground} & \textbf{Implication for the estimand} \\
\midrule
\multicolumn{2}{@{}l}{\textit{Panel A. The contrast: which nonracial features are held fixed}} \\
\addlinespace[3pt]
Classical conception of the category & Hold every nonracial feature fixed: the all-else-equal effect \\
\addlinespace[3pt]
Thick constructivist conception & Let the indexed features vary by race: the within-race effect \\
\addlinespace[3pt]
Isolating a causal mechanism & Provide the information the decision-maker could otherwise infer and hold it fixed: a mixed effect \\
\addlinespace[3pt]
Normative and legal commitments & Hold the allowable features fixed; let the non-allowable vary: a mixed effect \\
\midrule
\multicolumn{2}{@{}l}{\textit{Panel B. The weighting: how the varying features are weighted}} \\
\addlinespace[3pt]
External validity & Calibrate $\rho$ or the $q_z$ to the target population \\
\addlinespace[3pt]
Typicality & Weight each conditional distribution $q_z$ by closeness to the group's prototype \\
\bottomrule
\end{tabular}

\smallskip
\parbox{13.6cm}{\raggedright\footnotesize \textit{Note}: Panel A grounds select the contrast and, with it, the member type; Panel B grounds set the distributions the chosen contrast uses. Every member also fixes a subset of units, typically all $N$ or, under the perception-based regime, the compliers $\mathcal{C}$ (\Cref{sec: anatomy estimand}).}
}
\caption{Grounds for choosing among the members of the estimand family}
\label{tab: grounds}
\end{table}

\section*{Disclosures}
\addcontentsline{toc}{section}{Disclosures}

\noindent \textbf{Funding.} This research received no external funding.

\noindent \textbf{Conflicts of interest.} The author declares no conflicts of interest.

\noindent \textbf{Use of artificial intelligence.} The author used Claude (Anthropic; Claude Opus 4 models, via Claude Code, \url{https://claude.com/claude-code}) during 2026 as an editing and verification assistant --- revising prose drafted by the author and checking notation, replication code, and figure code. All arguments, results, and interpretations are the author's own.

\printbibliography
\end{refsection}

\end{document}

%% file: estimands_macros.tex
\newcommand{\muHnat}{0.62}
\newcommand{\muAnn}{0.49}
\newcommand{\rhoNat}{0.52}
\newcommand{\AEWRest}{0.10}
\newcommand{\AEWRlo}{0.04}
\newcommand{\AEWRhi}{0.15}
\newcommand{\AEEest}{-0.01}
\newcommand{\AEElo}{-0.05}
\newcommand{\AEEhi}{0.03}

%% file: nrec_macros.tex
\newcommand{\fsEng}{0.86}
\newcommand{\nrecEng}{0.14}
\newcommand{\nrecEngLo}{0.08}
\newcommand{\nrecEngHi}{0.20}

%% file: language_macros.tex
\newcommand{\nSpanAnglo}{49}
\newcommand{\nSpanHisp}{77}
\newcommand{\pctAngloNonNative}{96}
\newcommand{\pctAngloNative}{4}
\newcommand{\pctHispNative}{83}

%% file: perceived_race_macros.tex
\newcommand{\pctAngloHispNat}{35}
\newcommand{\pctAngloHispEng}{8}
\newcommand{\pctHispWhiteNN}{35}
\newcommand{\pctHispWhiteEng}{6}

%% file: abstract-text.tex
Empirical studies of racial discrimination vary race while holding nonracial traits fixed, a design the literature defends as what credible inference requires. This defense bundles two claims: which effect of race a study should target, and whether its design can recover that effect. I separate them. The same randomization that secures credible estimation and inference recovers a family of race estimands, from the all-else-equal effect to a within-race effect that lets associated traits vary with race. Every member of that family is causal rather than descriptive, and the choice among members is a claim about what a racial category is --- a claim extending to ethnicity, religion, and other identity categories that index associated traits. I derive conditions, weaker than the literature's, for credible estimation and inference. Reanalyzing a Spanish-language campaign experiment, I find coethnic preference among Hispanic voters under the within-race effect and none under the all-else-equal effect.

%% file: tab_language_dist.tex
\centering
\begin{tabular}{lcccc}
\toprule
 & \textbf{Non-Native} & \textbf{Native} & & \textbf{Share of} \\
 & \textbf{Spanish} & \textbf{Spanish} & \textbf{Total} & \textbf{candidates} \\
\midrule
Within Anglo, $q_0$ & .96 & .04 & 1.00 & .39 \\
Within Hispanic, $q_1$ & .17 & .83 & 1.00 & .61 \\
\addlinespace
Common weight, $\rho$ & .48 & .52 & 1.00 & \\
\bottomrule
\end{tabular}
\caption{Language weights among U.S.\ House candidates who aired a Spanish-language advertisement, 2010--2018 \citep{zarateetal2024}. The common weight is the share-weighted average $\rho = .39\,q_0 + .61\,q_1$.}
\label{tab: lang dist}

%% file: tab_cell_means.tex
\centering
\begin{tabular}{lcc}
\toprule
\textbf{Candidate} & \textbf{Non-Native Spanish} & \textbf{Native Spanish} \\
\midrule
Anglo & 0.49 & 0.63 \\
Hispanic & 0.49 & 0.62 \\
\bottomrule
\end{tabular}
\caption{Mean evaluation (0--1 index) among Hispanic respondents by candidate race and campaign language, Study~1.}
\label{tab: cell means}